\providecommand{\keywords}[1]
{
  \small	
  \textbf{\text{Keywords: }} #1
}
\providecommand{\MSC}[1]
{
  \small	
  \textbf{\text{MSC 2020 codes: }} #1
}
\theoremstyle{plain}
\newtheorem{theorem}{Theorem}[section]
\newtheorem{corollary}[theorem]{Corollary}
\newtheorem{proposition}[theorem]{Proposition}
\newtheorem{lemma}[theorem]{Lemma}
\newtheorem{example}[theorem]{Example}
\theoremstyle{definition}
\newtheorem{definition}[theorem]{Definition}
\newtheorem{remark}[theorem]{Remark}
\newcommand\restr[2]{{
  \left.\kern-\nulldelimiterspace 
  #1 
  \right|_{#2} 
}}
 \newcommand{\N}{\mathbb{N}}
\newcommand{\R}{\mathbb{R}}
\renewcommand{\d}{\mathrm{d}}
\newcommand{\Cinfty}{\mathscr{C}^\infty}
\newcommand{\T}{\mathrm{T}}
\newcommand{\cT}{\mathrm{T}^\ast}
\newcommand{\Id}{\mathrm{Id}}
\newcommand{\Lie}{\mathscr{L}}
\newcommand{\X}{\mathfrak{X}}
\newcommand{\SL}{\mathrm{SL}}
\newcommand{\A}{\mathcal{A}}
\newcommand{\B}{\mathcal{B}}
\newcommand{\Sym}{\mathrm{Sym}}
\newcommand{\slfr}{\mathfrak{sl}}
\newcommand{\Xh}{\mathfrak{X}_{\rm ham}}
\newcommand{\parder}[2]{\frac{\partial #1}{\partial #2}}
\newcommand{\dparder}[2]{\dfrac{\partial #1}{\partial #2}}
\newcommand{\tparder}[2]{\partial #1/\partial #2}
\DeclareMathOperator{\Ima}{Im}
\DeclareMathAlphabet{\mathpzc}{OT1}{pzc}{m}{it}
\def\d{\mathrm{d}}
\title{{\sffamily Contact Lie systems: theory and applications}}
\author{{\sffamily 
$^a$Javier de Lucas%
\thanks{e-mail:
   javier.de.lucas@fuw.edu.pl \ ORCID: 0000-0001-8643-144X}\ ,\
$^b$Xavier Rivas%
\thanks{e-mail:
   xavier.rivas@unir.net \ ORCID: 0000-0002-4175-5157}
}
\\[2ex]
\normalsize\itshape\sffamily
$^a$UW Institute for Advanced Studies,\\
\normalsize\itshape\sffamily
Department of Mathematical Methods in Physics, University of Warsaw, Warszawa, Poland.
\\[1ex]
\normalsize\itshape\sffamily
$^b$Escuela Superior de Ingenier\'{\i}a y Tecnolog\'{\i}a,\\
\normalsize\itshape\sffamily
Universidad Internacional de La Rioja, Logro\~no, Spain.
\\[1ex]
}
\begin{document}

\maketitle

\begin{abstract}
We define and analyse the properties of contact Lie systems, namely systems of first-order differential equations describing the integral curves of a $t$-dependent vector field taking values in a finite-dimensional Lie algebra of Hamiltonian vector fields relative to a contact manifold. All contact automorphic Lie systems associated with left-invariant contact forms on three-dimensional Lie groups are classified. In particular, we study the so-called conservative contact Lie systems, which are invariant relative to the flow of the Reeb vector field. Liouville theorems, contact Marsden--Weinstein reductions, and Gromov non-squeezing theorems are developed and applied to contact Lie systems. Our results are illustrated by examples with relevant physical and mathematical applications, e.g. Schwarz equations, Brockett systems, quantum mechanical systems, etc. Finally, a Poisson coalgebra method for the determination of superposition rules for contact Lie systems is developed.
\end{abstract}

\keywords{ Lie system, contact manifold, contact Marsden--Weinstein reduction, conservative system, superposition rule, coalgebra method}

\MSC{
{\sl Primary:}
37J55; 
53Z05, 
{\sl Secondary:}
34A26, 
34A05, 
17B66, 
22E70. 
}
{\setcounter{tocdepth}{2}
\def\baselinestretch{1}
\small
\def\addvspace#1{\vskip 1pt}
\parskip 0pt plus 0.1mm
\tableofcontents
}

\newpage

\section{Introduction}
A {\it Lie system} is a $t$-dependent system of first-order ordinary differential equations whose general solution can
be expressed via an autonomous function, a {\it superposition rule}, depending on a generic finite family of particular solutions
and some constants to be related to initial conditions \cite{Car2000,CL2011,Win1983}. Examples of Lie systems are Riccati equations and most of their generalisations \cite{CL2011,GL2013,Win1983}.

The Lie–Scheffers theorem says that a Lie system is equivalent  to a $t$-dependent vector field taking
values in a finite-dimensional Lie algebra of vector fields, a so-called {\it Vessiot--Guldberg Lie algebra} (VG Lie algebra, hereafter). This fact illustrates that being a Lie system is the
exception rather than the rule, although Lie systems have numerous and relevant physical
and mathematical applications (see  \cite{CL2011,LS2020} and references therein).

Lie systems admitting a VG Lie algebra of Hamiltonian vector fields relative to different geometric structures have been studied in recent years (see \cite{LS2020} for a survey on the topic). In particular, \cite{BBHLS13,BCHLS13,Car2000,CLS13,Car2019,Ru10} analyse Lie systems possessing a VG Lie algebra of Hamiltonian vector fields relative to a Poisson structure: the so-called {\it Lie--Hamilton systems} (see \cite{CLS13} for symplectic cases). Meanwhile, \cite{Car2014} provides a no-go theorem showing that  Lie--Hamilton systems cannot be used to describe certain Lie systems and shows that, sometimes, one may consider their VG Lie algebras to consist of Hamiltonian vector fields relative to a Dirac structure, which in turn allows one to use Dirac geometry to study their properties. Additionally,  $k$-{\it symplectic Lie systems}, i.e. Lie systems admitting a Vessiot--Guldberg Lie algebra of Hamiltonian vector fields relative to a $k$-symplectic manifold, were analysed in \cite{LV15}. Meanwhile, {\it multisymplectic Lie systems}, along with a certain type of multisymplectic reduction, were studied in \cite{GLMV19,GLRRV22}. It is quite interesting that finding Lie systems with VG Lie algebras of Hamiltonian vector fields relative to some geometric structure has led to a bloom in the description of new applications of Lie systems, despite being differential equations satisfying more restrictive conditions than mere classical Lie systems \cite{BCHLS13,BHLS15,Car2019,LS2020}. It is remarkable that geometric structures allow for the construction of superposition rules, constants of motion, and the analysis of relevant properties of Lie systems without relying on the analysis/solution of systems of partial or ordinary differential equations as the most classical and old methods \cite{Car2000,CGM07,CL2011,LS2020,Win1983}. Geometric techniques also provide new viewpoints to the nature and properties of superposition rules \cite{BCHLS13} and mathematical/physical problems \cite{Car2019,LL18}.

In this context, this work investigates Lie systems possessing a VG Lie algebra of Hamiltonian vector fields relative to a contact structure, the referred to as {\it contact Lie systems}. Contact Lie systems can be considered as a particular case of {\it Jacobi--Lie systems} (see \cite{AHKR22,AHR22,HLS15}), which were first introduced in \cite{HLS15}. Nevertheless, \cite{HLS15} just contained one non-trivial example of Jacobi--Lie system giving rise to a contact Lie system and it did no analyse the properties that are characteristic for contact Lie systems. In fact, \cite{HLS15} was mostly dealing with Jacobi--Lie systems on one- and two-dimensional manifolds, which only retrieve the trivial contact Lie systems with a zero- or one-dimensional VG Lie algebra on $\mathbb{R}$. 

As a particular case, this work analyses the hereafter called {\it conservative contact Lie systems}, namely contact Lie systems that are invariant relative to the Reeb vector field of their associated contact manifolds. For these systems, we introduce certain Liouville theorems, Marsden--Weinstein reductions, and Gromov non-squeezing theorems, whose application can be considered as pioneering in the literature of Lie systems. Moreover, it is remarkable that the literature on contact systems is mostly focused on dissipative systems \cite{Bra2018,Bra2017b,Cia2018,Gas2019,DeLeo2020,DeLeo2019b}. Meanwhile, this work also treats contact Hamiltonian systems not related to dissipation while having physical applications. 

Willet's reduction of contact manifolds \cite{Wil2002} is applied to the reduction of contact Lie systems. This is more general than some other momentum maps reductions appearing in the literature \cite{DeLeo2019}. It is worth noting that types of Marsden--Weinstein reductions have been applied to Lie systems in \cite{GLRRV22} for multisymplectic Lie systems.

Finally, an adaptation of the coalgebra method to obtain superposition rules, which was firstly aimed at  Lie--Hamilton and Dirac--Lie systems \cite{CLS13,Car2014}, has been devised for Jacobi--Lie systems, and therefore, contact Lie systems. To illustrate our methods, an application to derive a superposition rule for an automorphic Lie system on the Lie group $\SL(2,\mathbb{R})$ has been developed. 

Although contact Lie systems are naturally related to symplectic Lie systems on manifolds of larger dimension, this relation is shown to be, to our purposes, rather a mere curiosity without practical applications. For instance, the latter appears as a byproduct of our classification of contact {\it automorphic Lie systems} (see \cite{LS2020} for a definition) on three-dimensional Lie groups. It is worth noting that the relevance of automorphic Lie systems is due to the fact that the solution of every Lie system can be obtained by means of a particular solution of an automorphic Lie system and the integration of a VG Lie algebra to a Lie group action \cite{CL2011}. Other relations of contact Lie systems with multisymplectic Lie systems and Jacobi--Lie systems are discussed. Although it is shown that contact Lie systems can be considered as particular cases of the above-mentioned types of Lie systems, it is stressed that contact Lie systems have natural properties, e.g. associated volume forms or reductions, that are more properly studied in the context of contact geometry.

The structure of the work goes as follows. In Section \ref{Se:RevConMec}, a review on contact geometry and contact Hamiltonian systems is provided and Willet's reductio on contact manifolds is sketched. Section \ref{Se:ConLieSys} is the theoretical core of the article, introducing the notion of contact Lie system and of conservative contact Lie system. Moreover a Gromov’s non-squeezing theorem for conservative contact Lie systems is stated and proved. In Section \ref{subsec:other}, we analyse existence of underlying geometric structures for contact Lie systems. Section \ref{Se:ExiCon} classifies a class of contact Lie systems admitting a Vessiot--Guldberg Lie algebra of right-invariant vector fields \cite{LS2020} on three-dimensional Lie groups.  Section \ref{Se:Examples} is devoted to presenting four examples: the Brockett control system, the Schwarz equation, a family of quantum contact Lie systems, and a contact Lie system that is not conservative. Finally, Section \ref{Se:Coalgebra} describes the coalgebra method for obtaining superposition rules for Jacobi--Lie systems, which gives, in particular, techniques to obtain superposition rules for contact Lie systems. As an application, the superposition rule for an automorphic Lie system on $\SL(2,\mathbb{R})$ is retrieved.

\section{Review on contact mechanics}\label{Se:RevConMec}

From now on, all manifolds and mappings are assumed to be smooth and connected, unless otherwise stated. This will be used to simplify our presentation while stressing its main points. The space of vector fields on a manifold $M$ is denoted by $\X(M)$ while $\Omega^1(M)$ stands for the space of differential one-forms on $M$. Einstein notation will be hereafter used.

\subsection{Contact Hamiltonian systems}

Let us provide a brief introduction to contact geometry (see \cite{Ban2016,Gei2008,Kho2013} for details). In particular, we will also show why, although contact manifolds can be described via some other structures, such approaches are not appropriate for our purposes in this work.

Let us recall that a distribution $\mathcal{D}$ of corank one on a smooth manifold $M$ is called {\it maximally non-integrable} if around every $x\in M$ there exist two locally defined vector fields $X,Y$ taking values in $\mathcal{D}$ such that $[X,Y]_x\notin \mathcal{D}_x$.
A \textit{contact manifold} is a pair $(M,\xi)$ such that $M$ is a $(2n+1)$-dimensional manifold $M$ and $\xi$ is a one-codimensional maximally non-integrable distribution on $M$. We call $\xi$ a \textit{contact distribution} on $M$. Note that $\xi$ can locally be, on an open neighbourhood $U$ of each point $x\in M$,  described as the kernel of a one-form $\eta\in\Omega^1(U)$ such that $\eta\wedge(\d\eta)^n$ is a volume form on $U$ for some $n\in \mathbb{N}\cup\{0\}$. 

A {\it co-orientable contact manifold} is a pair $(M,\eta)$, where $\eta$ is a differential one-form on $M$ such that $(M,\ker \eta)$ is a contact manifold. Then, $\eta$ is called a {\it contact form}. Since this work focus on local properties of contact manifolds and related structures, we will hereafter restrict ourselves to co-oriented contact manifolds. To simplify the notation, co-oriented contact manifolds will be called contact manifolds as in the standard literature on contact geometry \cite{Bra2017a,Gas2019,DeLeo2019b}.

Note that if $\eta$ is a contact form on $M$, then $f\eta$ is also a contact form on $M$ for every non-vanishing function $f\in\Cinfty(M)$. Moreover,  $\eta\wedge(\d\eta)^n$ is a volume form on $M$ if and only if $\eta$ induces a decomposition of the tangent bundle of the form $\T M = \ker\eta\oplus\ker\d\eta$.

A contact manifold $(M,\eta)$ determines a unique vector field $R\in\X(M)$, called the {\it Reeb vector field}, such that $i(R)\d\eta = 0$ and $i(R)\eta = 1$. Then, $\Lie_R\eta = 0$ and, therefore, $\Lie_R\d\eta = 0$.

\begin{theorem}{\bf (Darboux theorem)}
    Given a contact manifold $(M,\eta)$, where $\dim M=2n+1$, every point $x\in M$ admits a local open coordinated neighbourhood  with coordinates $\{q^i, p_i, s\}$, with $i = 1,\dotsc,n$, called {\it Darboux coordinates},  such that
    $$ \eta = \d s - p_i\d q^i\,. $$
    In these coordinates, $R = \tparder{}{s}$.
\end{theorem}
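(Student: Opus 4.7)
The plan is to combine the flow-box theorem for the Reeb vector field with the classical symplectic Darboux theorem applied to a hypersurface transverse to $R$.

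First, I would fix a small local hypersurface $\Sigma$ through $x$ transverse to the Reeb vector field $R$, which is possible because $R_x\neq 0$. Propagating $\Sigma$ along the local flow of $R$ yields a coordinate $s$ with $\Sigma = \{s = 0\}$ and $R = \partial/\partial s$. Choosing any local coordinates on $\Sigma$ and extending them to a neighbourhood by Reeb invariance, I write $\eta = f\,\d s + \alpha$, where $\alpha$ has no $\d s$ component. The condition $\eta(R)=1$ forces $f\equiv 1$, while $\Lie_R\eta = 0$ and $\Lie_R\,\d s = 0$ yield $\Lie_R\alpha = 0$, so $\alpha$ is $s$-independent and may be regarded as the pullback of a one-form $\alpha_0$ on $\Sigma$.

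Next, since $\eta\wedge(\d\eta)^n = \d s\wedge(\d\alpha)^n$ must be a volume form on $M$, the form $(\d\alpha_0)^n$ is a volume form on $\Sigma$, hence $\d\alpha_0$ is symplectic. The classical symplectic Darboux theorem on $\Sigma$ produces local coordinates $(q^i,p_i)$ around $x$ with $\d\alpha_0 = \d q^i\wedge \d p_i$; extending $(q^i,p_i)$ to a neighbourhood in $M$ by Reeb invariance, one has $\d\alpha = \d q^i\wedge \d p_i$ everywhere on the neighbourhood. The $s$-independent one-form $\alpha + p_i\,\d q^i$ is then closed and, on a contractible neighbourhood, exact: $\alpha + p_i\,\d q^i = \d g$ for some $s$-independent function $g = g(q,p)$. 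Setting $\tilde s := s + g$ gives $\eta = \d\tilde s - p_i\,\d q^i$, and because $g$ does not depend on $s$ the vector fields $\partial/\partial\tilde s$ and $\partial/\partial s = R$ coincide.

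The principal nontrivial ingredient is the symplectic Darboux theorem on the transverse slice, which I would invoke as a classical result. The remaining difficulties are purely bookkeeping: one needs to verify that the coordinates $(q^i,p_i)$ extended via the Reeb flow still diagonalise $\d\alpha$ away from $\Sigma$ (which follows immediately from $\Lie_R\alpha = 0$ and the Reeb invariance of $q^i,p_i$), and that the primitive $g$ furnished by the Poincaré lemma can be chosen $s$-independent (automatic, since $\alpha + p_i\,\d q^i$ has $s$-independent coefficients and no $\d s$ term). Apart from these straightforward checks, the argument is a clean reduction to the symplectic case.
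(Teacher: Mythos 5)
The paper does not prove this theorem; it only cites \cite{Abr1978,Lib1987} for a proof. Your argument is correct and is essentially the standard one found in those references: flow-box coordinates for the Reeb vector field reduce $\eta$ to $\d s+\alpha$ with $\alpha$ an $s$-independent horizontal one-form, the identity $\eta\wedge(\d\eta)^n=\d s\wedge(\d\alpha)^n$ (the term $\alpha\wedge(\d\alpha)^n$ vanishes as a pullback of a $(2n+1)$-form from a $2n$-dimensional slice) shows $\d\alpha_0$ is symplectic on the slice, and the symplectic Darboux theorem plus an $s$-independent primitive of the closed form $\alpha+p_i\,\d q^i$ finishes the normalisation, with the shift $\tilde s=s+g(q,p)$ leaving $\partial/\partial s$ unchanged. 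All the bookkeeping steps you flag do check out, so this is a complete proof of the statement the paper leaves to the literature.
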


A proof of the Darboux theorem for contact manifolds can be found in \cite{Abr1978,Lib1987}.

\begin{example}{\bf(Canonical contact manifold)}
Consider the product manifold $M = \cT Q\times\R$, where $Q$ is any manifold. The cotangent bundle $\cT Q$ admits an adapted coordinate system $\{q^1,\ldots,q^n,p_1,\ldots,p_n\}$ and $\R$ has a natural coordinate $s$, which in turn give rise to a natural coordinate system $\{q^1,\ldots,q^n, p_1,\ldots,p_n, s\}$ on $\cT Q\times \R$. The one-form $\eta = \d s - \theta$, where $\theta$ is the pull-back of the Liouville one-form $\theta_\circ\in\Omega^1(\cT Q)$ relative to the canonical projection $\cT Q\times \R\to\cT Q$. In the chosen  coordinates, 
$$ \eta = \d s - p_i\d q^i\,,\qquad R=\parder{}{s}\,.$$
    The coordinates $\{q^i,p_i,s\}$ are Darboux coordinates on $M$. It is remarkable that $\theta_\circ$, and thus $\eta$, are independent of the coordinates $\{q^1,\ldots,q^n\}$. 
\end{example}

The previous example is a particular case of {\it contactification} of an exact symplectic manifold. Given an {\it exact symplectic manifold} $(N,\omega)$, namely a symplectic manifold whose symplectic form, $\omega$, is exact, e.g. $\omega = -\d\theta$, the product manifold $M = N\times\R$ is a contact manifold with the contact form $\eta = \d s - \theta$, where the variable $s$ stands for the canonical coordinate in $\R$.

Let $(M,\eta)$ be a contact manifold. There exists a vector bundle isomorphism $\flat:\T M\to\cT M$ given by
$$ \flat(v) = i(v)(\d\eta)_x + (i(v)\eta_x)\eta_x\,,\qquad \forall v\in \T_x M,\quad\forall x\in M\,. 
$$
This isomorphism can be extended to a $\Cinfty(M)$-module isomorphism $\flat:\X(M)\to\Omega^1(M)$ in the natural manner. It is usual to denote both isomorphisms, of vector bundles and of $\Cinfty(M)$-modules, by $\flat$ as this does not lead to misunderstanding. Taking into account this isomorphism, $R = \flat^{-1}(\eta)$. The inverse of $\flat$ is denoted by $\sharp = \flat^{-1}$.


A {\it contact Hamiltonian system} \cite{Bra2017b,Gas2019,DeLeo2019b} is a triple $(M,\eta,h)$, where $(M,\eta)$ is a contact manifold and $h\in\Cinfty(M)$. Given a contact Hamiltonian system $(M,\eta,h)$, there exists a unique vector field $X_h\in\X(M)$, called the {\it contact Hamiltonian vector field} of $h$, satisfying the following equivalent conditions
\begin{enumerate}[(1)]
    \item $\ i(X_h)\d\eta = \d h - (\Lie_R h)\eta\quad$ and $\quad i(X_h)\eta = -h$,
    \item $\ \Lie_{X_h}\eta = -(\Lie_R h)\eta\quad$ and $\quad i(X_h)\eta = -h$,
    \item $\ \flat(X_h) = \d h - (\Lie_R h + h)\eta$.
\end{enumerate}
A vector field $X\in\X(M)$ is said to be \textit{Hamiltonian} relative to the contact structure given by $\eta$ if it is the Hamiltonian vector field of some function $h\in\Cinfty(M)$. Let $\Xh(M)$ stand for the space of Hamiltonian vector fields relative to $(M,\eta)$. Unlike in the case of symplectic mechanics, the Hamiltonian function $h$ is not preserved under the evolution of the contact Hamiltonian vector field $X_h$ (see \cite{Gas2019,Mars1988} for details). More precisely,
$$ \Lie_{X_h}h = -(\Lie_R h)h\,. $$
In Darboux coordinates, the contact Hamiltonian vector field $X_h$ reads
\begin{equation}\label{Eq:HamCoor} X_h = \parder{h}{p_i}\parder{}{q^i} - \left( \parder{h}{q^i} + p_i\parder{h}{s} \right)\parder{}{p_i} + \left( p_i\parder{h}{p_i} - h \right)\parder{}{s}\,. \end{equation}
Its integral curves, let us say $\gamma(t) = (q^i(t), p_i(t), s(t))$, satisfy the system of differential equations
$$
    \frac{\d q^i}{\d t} = \parder{h}{p_i}\,,\qquad
    \frac{\d p_i}{\d t} = - \left( \parder{h}{q^i} + p_i\parder{h}{s} \right)\,,\qquad
    \frac{\d s}{\d t} = p_j\parder{h}{p_j} - h\,,\qquad i = 1,\dotsc,n\,.
$$
\begin{example}
    Consider the contact Hamiltonian system $(\cT \R^n\times\R, \eta, h)$, where $\R^n$ has linear coordinates $\{q^1,\ldots,q^n\}$, while $\eta = \d s - p_i\d q^i$ and
    $$ h = \frac{p^2}{2m} + V(q) + \gamma s\,, $$
    where $m$ is the mass of a particle, $p = \sqrt{p_1^2 + \dotsb + p_n^2}$, $\gamma\in \mathbb{R}$, and $V(q)$ is a potential. The Hamiltonian function $h$ describes a mechanical system consisting of a particle under the influence of a potential $V(q)$ and with a friction force proportional to the momenta. The integral curves of the contact Hamiltonian vector field, $X_h$, satisfy the system of equations
    $$ \frac{\d q^i}{\d t} = \frac{p_i}{m}\,,\qquad \frac{\d p_i}{\d t} = -\parder{V}{q^i}(q) - \gamma p_i\,,\qquad \frac{\d s}{\d t} = \frac{p^2}{2m} - V(q) - \gamma s\,,\qquad i = 1,\dotsc,n\,.$$
    Combining the first two equations, one gets
    $$ m\frac{\d^2 q^i}{\d t^2} + \gamma m \frac{\d q^i}{\d t} + \parder{V}{q^i}(q) = 0\,,\qquad i=1,\dotsc,n\,.
    $$
\end{example}
\bigskip

Finally, let us recall that a contact manifold $(M,\eta)$ gives rise to a Lie bracket
\begin{equation}\label{Eq:ContactBrack}
\{f,g\}=X_fg+gRf=-\d\eta (X_f,X_g)-fRg+gRf\,,\qquad \forall f,g\in \Cinfty(M)\,.
\end{equation}
In view of \eqref{Eq:HamCoor}, one can prove that
$$
\{f,gh\} = h\{f,g\} + g\{f,h\} + ghRf\,,\qquad \forall f,g,h\in \Cinfty(M)\,.
$$
Hence, \eqref{Eq:ContactBrack} is a Poisson bracket if and only if $R = 0$, which is a contradiction. Nevertheless, note that, if $\Cinfty_g(M)$ stands for the space of good Hamiltonian functions, then the restriction of $\{\cdot,\cdot\}$ to $\Cinfty_g(M)$ becomes a Poisson bracket.

The formalism presented in this section has a Lagrangian counterpart \cite{Gas2019,PhDThesisXRG}. In addition, a geometric formulation for time-dependent contact systems developing the so-called cocontact geometry has been introduced in \cite{LGGMR-2022,RiTo-2022}.

\subsection{Contact manifolds and other geometric structures}

Let us study several geometric structures used to describe particular aspects of contact manifolds.

\begin{definition}
    A \textit{Jacobi manifold} is a triple $(M,\Lambda,E)$, where $\Lambda$ is a bivector field on $M$, i.e. a skew-symmetric 2-contravariant tensor field, and $E$ is a vector field on $M$, such that
    $$ [\Lambda,\Lambda] = 2E\wedge\Lambda\ ,\qquad \Lie_E\Lambda = [E,\Lambda] = 0\,, $$
    where $[\cdot,\cdot]$ denotes the Schouten--Nijenhuis bracket in its original sign convention\footnote{There exists a modern, and sometimes more appropriate, definition of the Schouten--Nijenhuis bracket that differs from ours on a global proportional sign depending on the degree of $\Lambda$ (see Example 2.20 in \cite{Grab-2013} and references therein).} \cite{Nij1955,Sch1953,Va94}.
\end{definition}

\begin{remark}
    In particular, \textit{Poisson manifolds} are equivalent to Jacobi manifolds with $E = 0$. In turn, Poisson manifolds retrieve, as particular cases, symplectic and cosymplectic manifolds \cite{CN13,Va94}.
\end{remark}

Every bivector field $\Lambda$ on $M$ induces a vector bundle morphism $\Lambda^\sharp:\cT M\rightarrow \T M$ given by $\Lambda^\sharp(\vartheta_x) = \Lambda_x(\vartheta_x,\cdot)$ for every $\vartheta_x\in \cT_x M$ with $x\in M$.

A {\it Hamiltonian vector field} relative to $(M,\Lambda,E)$ is a vector field $X$ on $M$ of the form 
$$
    X = \Lambda^\sharp (\d f) + fE\,,
$$
for a function $f\in\Cinfty(M)$, which is called a {\it Hamiltonian function} of $X$. It can be proved that if $E_x\notin {\rm Im} \,\Lambda^\sharp_x$ at every point $x\in M$, then each Hamiltonian vector field has a unique Hamiltonian function. Additionally, $X$ is called a {\it good Hamiltonian vector field} if it admits a Hamiltonian function $f$ satisfying $Ef = 0$. 

The {\it characteristic distribution} of $(M,\Lambda,E)$ is the generalised distribution \cite{Va94} on $M$ of the form
$$
    \mathcal{C}_x = \Ima\Lambda_x^\sharp +\langle E_x\rangle,\qquad \forall x\in M\,.
$$
It can be proved that the characteristic distribution of a Jacobi manifold $(M,\Lambda,E)$ is integrable and its maximal integral submanifolds are such that, if even-dimensional, then $\Lambda$ gives rise to a {\it locally conformal symplectic form}, while if the maximal integral submanifold is odd-dimensional, then $\Lambda$ gives rise to a contact manifold \cite{Va94}. Recall that a contact manifold $(M,\eta)$ with Reeb vector field $R$ gives rise to a Jacobi manifold $(M,\Lambda,-R)$, where $\Lambda$ is the bivector field such that $\Lambda^\sharp$ is equal to the isomorphism $\sharp = \flat^{-1}:\cT M\rightarrow \T M$ (see \cite{DeLeo2019b,Gas2019}). Moreover, every Jacobi manifold $(M,\Lambda,E)$ gives rise to a Jacobi bracket given by
$$
    \{f,g\}=\Lambda(\d f,\d g)+fEg-gEf\,,\qquad \forall f,g\in \Cinfty(M)\,.
$$
It is important to remark that the bracket above is not a Poisson bracket. Moreover, $\{\cdot,\cdot\}$ becomes a Poisson bracket when restricted to the space of good Hamiltonian functions, $\Cinfty_g(M)$, of the Jacobi manifold $(M,\Lambda,E)$.

In particular, the Jacobi bracket satisfies
$$ \{f,g\} = X_fg - gEf\,, $$
and it matches the definition of the bracket for contact manifolds when $(M,\eta)$ is such that
$$ \Lambda (\d f,\d g) = -\d\eta(\Lambda^\sharp(\d f),\Lambda^\sharp(\d g))\,,\qquad E = -R\,. $$
The space $\Xh(M)$ of Hamiltonian vector fields in a Jacobi manifold is a Lie algebra with respect to the Lie bracket of vector fields. More precisely, if $X_f,X_g\in\X(M)$ are the Hamiltonian vector fields related to two arbitrary functions $f,g\in\Cinfty(M)$ respectively, one has
$$ [X_f,X_g] = X_{\{f,g\}}\,. $$

\subsection{Reduction of contact manifolds}

Let us describe the contact Marsden--Weinstein reduction theory \cite{Wil2002}. First, let us remark a non-standard fact about momentum mappings for contact manifolds, which make them special and it will have important consequences hereafter.

\begin{definition}
    Let $\Phi:G\times M\rightarrow M$ be a Lie group action preserving the contact form, $\eta$, of a contact manifold $(M,\eta)$, i.e. $\Phi^\ast_g\eta = \eta$ for every $g\in G$. We call $\Phi$ a {\it contact Lie group action}. A \textit{contact momentum map} associated with $\Phi$ is a map $J\colon M\to\mathfrak{g}^\ast$ defined by
    $$ \langle J(x), \xi\rangle = i_{\tilde\xi_x}\eta_x\,,\qquad \forall x\in M\,, $$
    where $\tilde\xi\in\X(M)$ is the fundamental vector field\footnote{We define the fundamental vector field of $\Phi:G\times M\rightarrow M$ associated with $\xi\in \mathfrak{g}$ as $$\xi_M(x)=\frac{\d}{\d t}\bigg|_{t=0}\Phi(\exp(t\xi),x)\,,\quad\forall x\in M\,.$$} corresponding to $\xi\in\mathfrak{g}$.
\end{definition}
Note that a contact Lie group action has a unique momentum map. The contact momentum map is Ad-equivariant, i.e. $J\circ \Phi(g,x)={\rm Ad}^*_{g^{-1}}J(x)$ for every $g\in G$ and for every $x\in M$ \cite{Gei2008}. The momentum map $J$ gives rise to a comomentum map $\lambda\colon\xi\in\mathfrak{g}\mapsto J_\xi\in\Cinfty(M)$ defined by $J_\xi(x) = \langle J(x),\xi\rangle$ for every $x\in M$ and $\xi\in\mathfrak{g}$.

\begin{proposition}(See {\rm\cite[Prop. 3.1]{Wil2002}} for a proof)
    Let $\Phi:G\times M\rightarrow M$ be a proper contact Lie group action relative to a contact manifold $(M,\eta)$. Consider its associated contact momentum map $J\colon M\to\mathfrak{g}^\ast$. Then,
    \begin{enumerate}[(1)]
        \item The level sets of the momentum map $J$ are invariant under the action of the flow of the Reeb vector field of $(M,\eta)$.
        \item For every $x\in M$, $v\in\T_x M$, and $\xi\in\mathfrak{g}$, one has
        $$ \d J_\xi = -i_{\tilde\xi}\d\eta\,. $$
        \item If $J(x) = 0$, we have that $\T(G\cdot x)$ is an isotropic subspace of the symplectic vector space $(\ker\eta_x, \d_x\eta|_{\ker \eta_x})$.
        \item $(\Ima\T_x J)^\circ = \{ \xi\in\mathfrak{g}\mid \tilde\xi_x\in\ker\d_x\eta \}$.
    \end{enumerate}
\end{proposition}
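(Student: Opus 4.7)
The plan is to establish item (2) first, since it is the linchpin that makes the other three items almost immediate, and then to derive (1) and (4) by short direct computations and (3) by combining (2) with the Ad-equivariance of $J$.

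I would begin by applying Cartan's magic formula to $\Lie_{\tilde\xi}\eta$. The infinitesimal form of the invariance $\Phi_g^\ast\eta = \eta$ is $\Lie_{\tilde\xi}\eta = 0$ for every $\xi\in\mathfrak{g}$, hence $\d(i_{\tilde\xi}\eta) + i_{\tilde\xi}\d\eta = 0$. Substituting the definition $J_\xi = i_{\tilde\xi}\eta$ yields $\d J_\xi = -i_{\tilde\xi}\d\eta$, which is exactly (2).

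Item (1) then follows because $\Lie_R J_\xi = i_R\d J_\xi = -i_R i_{\tilde\xi}\d\eta = i_{\tilde\xi}(i_R\d\eta) = 0$, using the defining relation $i_R\d\eta = 0$ of the Reeb vector field. Thus each $J_\xi$ is constant along the flow of $R$, so every level set $J^{-1}(\mu)$ is Reeb-invariant. Item (4) is a direct unfolding of the annihilator: $\xi\in(\Ima \T_xJ)^\circ$ iff $\langle \T_xJ(v),\xi\rangle = (\d J_\xi)_x(v)$ vanishes for all $v\in \T_xM$, and by (2) this is equivalent to $i_{\tilde\xi_x}\d_x\eta = 0$, i.e. $\tilde\xi_x\in\ker\d_x\eta$.

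For (3), the only step genuinely requiring Ad-equivariance, I would differentiate the identity $J\circ\Phi(\exp(t\zeta),x) = \mathrm{Ad}^*_{\exp(-t\zeta)}J(x)$ at $t=0$, obtaining $\T_xJ(\tilde\zeta_x) = -\mathrm{ad}^*_\zeta J(x)$ and therefore $(\d J_\xi)_x(\tilde\zeta_x) = -\langle J(x),[\zeta,\xi]\rangle$. Setting $J(x) = 0$ annihilates the right-hand side, and pairing with (2) yields $\d_x\eta(\tilde\xi_x,\tilde\zeta_x) = -(\d J_\xi)_x(\tilde\zeta_x) = 0$. Moreover, $J(x) = 0$ is precisely the statement that $\eta_x(\tilde\xi_x) = 0$ for every $\xi\in\mathfrak{g}$, so $\T_x(G\cdot x)\subset \ker\eta_x$. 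Since the contact condition forces $\d_x\eta|_{\ker\eta_x}$ to be symplectic, $\T_x(G\cdot x)$ sits in $(\ker\eta_x,\d_x\eta|_{\ker\eta_x})$ as an isotropic subspace, proving (3).

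The main obstacle is essentially a sign-bookkeeping matter: one must be careful that the conventions for the fundamental vector field introduced in the footnote, for the Ad-equivariance identity cited above, and for the pairing of $\mathrm{ad}^*$ with elements of $\mathfrak{g}$ all mesh consistently. Once (2) is established, no further geometric input beyond Cartan's formula, the defining properties of $R$, and a single differentiation of the Ad-equivariance identity is needed.
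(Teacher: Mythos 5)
Your argument is correct. The paper itself gives no proof of this proposition --- it simply cites Willett \cite[Prop.~3.1]{Wil2002} --- so there is nothing internal to compare against; your derivation (Cartan's formula on $\Lie_{\tilde\xi}\eta=0$ to get (2), then $i_R\d\eta=0$ for (1), unwinding the annihilator for (4), and Ad-equivariance at $J(x)=0$ for (3)) is exactly the standard argument one would expect, and the signs work out consistently with the paper's conventions for the fundamental vector field and the momentum map $J_\xi=i_{\tilde\xi}\eta$.
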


Note that the fundamental vector fields of a contact Lie group action have Hamiltonian functions that are first-integrals of the Reeb vector field. This fact is relevant to prove the following proposition.

\begin{proposition}
    Let $J:M\rightarrow \mathfrak{g}^*$ be a contact momentum map relative to $(M,\eta)$ for a contact Lie group action $\Phi:G\times M\to M$. Then, the mapping $\xi\in \mathfrak{g}\mapsto J_\xi\in \Cinfty_g(M)$ is a Lie algebra morphism. Moreover, 
    $$
        J:x\in M\longmapsto J(x)\in \mathfrak{g}^*
    $$
    induces a Poisson algebra morphism $J^*:f\in \Cinfty(\mathfrak{g}^*)\mapsto f\circ J\in \Cinfty_g(M)$ relative to the Kirillov--Kostant--Souriau bracket on $\mathfrak{g}^*$. 
\end{proposition}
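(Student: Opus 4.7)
First I would show that $J_\xi\in \Cinfty_g(M)$ for every $\xi\in \mathfrak{g}$ and identify its contact Hamiltonian vector field. Invariance of $\eta$ under $\Phi$ gives $\Lie_{\tilde\xi}\eta=0$; combining this with $i_{\tilde\xi}\eta=J_\xi$ via Cartan's magic formula yields $\d J_\xi=-i_{\tilde\xi}\d\eta$, which is item~(2) of the preceding proposition. Contracting with $R$ and using $i_R\d\eta=0$ then gives $RJ_\xi=0$, so $J_\xi\in \Cinfty_g(M)$. Because $J_\xi$ is a good Hamiltonian, the defining equations of its contact Hamiltonian vector field reduce to $i_{X_{J_\xi}}\d\eta=\d J_\xi$ and $i_{X_{J_\xi}}\eta=-J_\xi$; comparing these with the analogous identities satisfied by $\tilde\xi$ and using the fact that $\ker\eta\cap\ker\d\eta=\{0\}$ on a contact manifold forces $X_{J_\xi}=-\tilde\xi$.

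Next I would compute the contact bracket of two comomentum functions $J_\xi,J_\chi$ (I use $\chi\in\mathfrak{g}$ instead of a second $\eta$ to avoid the clash with the contact form). From the bracket formula \eqref{Eq:ContactBrack}, $RJ_\xi=0$, and $X_{J_\xi}=-\tilde\xi$,
\[
\{J_\xi,J_\chi\}=X_{J_\xi}J_\chi=-\Lie_{\tilde\xi}(i_{\tilde\chi}\eta)=-i_{[\tilde\xi,\tilde\chi]}\eta-i_{\tilde\chi}\Lie_{\tilde\xi}\eta=-i_{[\tilde\xi,\tilde\chi]}\eta.
\]
The identity $[\tilde\xi,\tilde\chi]=-\widetilde{[\xi,\chi]}$ for the fundamental vector field map of a left action (which can also be read off by differentiating the Ad-equivariance of $J$ stated just above the proposition) rewrites the right-hand side as $i_{\widetilde{[\xi,\chi]}}\eta=J_{[\xi,\chi]}$, establishing the Lie algebra morphism property.

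For the Poisson morphism part I would first observe that $J^*$ takes values in $\Cinfty_g(M)$: item~(1) of the preceding proposition says $R$ is tangent to the level sets of $J$, hence $T_xJ\cdot R_x=0$ and therefore $R(f\circ J)=\d f_{J(x)}(T_xJ\cdot R_x)=0$ for every $f\in \Cinfty(\mathfrak{g}^*)$. The bracket compatibility is then deduced from the linear case via locality: because the restriction of the contact bracket to $\Cinfty_g(M)$ is a genuine Poisson bracket, hence local and a biderivation, $\{F,G\}(x)$ depends on $F,G$ only through $F(x),G(x),\d F_x,\d G_x$. Fixing $x\in M$ and setting $\xi:=\d f_{J(x)},\chi:=\d g_{J(x)}\in \mathfrak{g}\cong T^*_{J(x)}\mathfrak{g}^*$, the function $f\circ J$ coincides up to first order at $x$ with the good Hamiltonian $J_\xi+\bigl(f(J(x))-J_\xi(x)\bigr)$, and likewise for $g\circ J$; applying the previous paragraph yields
\[
\{f\circ J,g\circ J\}(x)=\{J_\xi,J_\chi\}(x)=\langle J(x),[\d f_{J(x)},\d g_{J(x)}]\rangle=\{f,g\}_{\mathrm{KKS}}(J(x)),
\]
which is the desired Poisson morphism identity.

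The most delicate point in the plan is the sign bookkeeping in the second paragraph: both $X_{J_\xi}=-\tilde\xi$ and $[\tilde\xi,\tilde\chi]=-\widetilde{[\xi,\chi]}$ contribute signs, and a mismatch would produce $-J_{[\xi,\chi]}$ rather than $J_{[\xi,\chi]}$. The remaining step -- reducing the Poisson morphism identity to the linear case -- is routine once one recognises that the contact bracket on good Hamiltonians is an honest Poisson bracket, hence local and a biderivation.
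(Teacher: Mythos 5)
Your proposal is correct and follows essentially the same route as the paper: it establishes $RJ_\xi=0$ and the identification of $\tilde\xi$ with $-X_{J_\xi}$, derives $\{J_\xi,J_\mu\}=J_{[\xi,\mu]}$ from the invariance of $\eta$ together with (Ad-)equivariance, and then reduces the Poisson-morphism identity to the linear functions $J_{e_i}=e_i\circ J$ via the biderivation/locality property of the bracket on good Hamiltonians (the paper phrases this last step as a chain-rule computation with structure constants rather than a first-order matching argument, but the substance is identical). The sign bookkeeping you flag does work out, since $X_{J_\xi}=-\tilde\xi$ and the anti-homomorphism $[\tilde\xi,\tilde\chi]=-\widetilde{[\xi,\chi]}$ combine to give $+J_{[\xi,\chi]}$, in agreement with the paper.
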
 
\begin{proof}
    Taking into account that $R J_\xi = 0$ for every $\xi\in\mathfrak{g}$, we have, for an arbitrary $\mu\in\mathfrak{g}$, that
    \begin{align*}
        i_{\tilde\xi}\d J_\mu &= -i_{\tilde\xi}\left( i_{\tilde\mu}\d\eta - (RJ_\mu)\eta \right) = \d\eta(\tilde\xi,\tilde\mu) = -\{J_\xi,J_\mu\} - J_\xi R J_\mu + J_\mu R J_\xi = -\{J_\xi,J_\mu\}\,.
    \end{align*}
    On the other hand, since $\Phi$ is Ad-equivariant, one has 
    $$ i_{\tilde\xi}\d J_\mu = \tilde\xi J_\mu = -\langle J,[\xi,\mu] \rangle = -J_{[\xi,\mu]}\,,\qquad\forall\xi,\mu\in\mathfrak{g}\,, $$
Which shows that $\xi\in \mathfrak{g}\mapsto J_\xi\in \Cinfty(M)$ is a Lie algebra morphism. 

Since the Kirillov--Kostant--Souriau bracket is a Poisson bracket and the space of good Hamiltonian functions relative to $(M,\eta)$ is a Poisson algebra relative to the bracket (\ref{Eq:ContactBrack}), for all functions $f,g\in \Cinfty(\mathfrak{g}^*)$ and a basis $\{e_1,\ldots,e_r\}$ of $\mathfrak{g}\simeq \mathfrak{g}^{**}$, it follows that
$$
\{f,g\}_{\mathfrak{g}^*}\circ J=\frac{\partial f}{\partial e_i}\circ J\frac{\partial g}{\partial e_j}\circ J\{e_i,e_j\}_{\mathfrak{g}^*}\circ J=
\frac{\partial f}{\partial e_i}\circ J\frac{\partial g}{\partial e_j}\circ Jc_{ijk}e_k\circ J=\frac{\partial f}{\partial e_i}\circ J\frac{\partial g}{\partial e_j}\circ J\{e_i\circ J,e_j\circ J\}=\{f\circ J,g\circ J\},
$$
for $[e_i,e_j]=c_{ijk}e_k$ and $i,j=1,\ldots,r$.
\end{proof}

\begin{definition}
    Let $\Phi:G\times M\rightarrow M$ be a proper contact Lie group action on a contact manifold $(M,\eta)$. Consider its associated contact momentum map $J\colon M\to\mathfrak{g}^\ast$ and $\mu\in\mathfrak{g}^\ast$. The \textit{kernel group} of $\mu$ is the unique connected Lie subgroup of $G_\mu\subset G$ with Lie algebra $\mathfrak{k}_\mu = \ker \restr{\mu}{\mathfrak{g}_\mu}$, where $\mathfrak{g}_\mu$ is the Lie algebra of the isotropy group $G_\mu$ of the point $\mu\in \mathfrak{g}^*$ relative to the coadjoint action of $G$ on $\mathfrak{g}$. We denote by $K_\mu$ the kernel group of $\mu$. The \textit{contact quotient}, or \textit{contact reduction} of $M$ by $G$ at $\mu$ is
    $$ M_\mu = J^{-1}(\R^+\mu) / K_\mu\,. $$
\end{definition}


\begin{theorem}\label{thm:reduction-willet}
    Let $G$ be a Lie group acting by contactomorphisms on a contact manifold $(M,\eta)$, and let $J:M\to\mathfrak{g}^\ast$ be its associated contact momentum map. Let $K_\mu$, with $\mu\in\mathfrak{g}^\ast$, be the connected Lie subgroup of $G_\mu$ with Lie algebra $\mathfrak{k}_\mu = \ker \restr{\mu}{\mathfrak{g}_\mu}$. If
    \begin{enumerate}[(i)]
        \item $K_\mu$ acts properly on $J^{-1}(\R^+\mu)$,
        \item $J$ is transverse (see {\rm\cite{AMR88}} for a definition) to $\R^+\mu$,
        \item $\ker\mu + \mathfrak{g}_\mu = \mathfrak{g}$,
    \end{enumerate}
    then the quotient $M_\mu = J^{-1}(\R^+\mu)/K_\mu$, if a manifold, is naturally a contact manifold, i.e.
    $$ \ker\eta\cap \T\left(J^{-1}(\R^+\mu)\right) $$
    gives rise to a contact distribution on the quotient $M_\mu$.
\end{theorem}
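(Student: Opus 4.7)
The strategy is the standard three-step Marsden--Weinstein reduction, adapted to the contact setting: construct $M_\mu$ as a smooth manifold, descend $\eta$ to a 1-form $\bar\eta$ on it, and verify that $\bar\eta$ is a contact form.

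For the smooth structure, condition (ii) ensures that $N := J^{-1}(\R^+\mu)$ is an embedded submanifold of $M$ of codimension $\dim\mathfrak{g}-1$. The Ad-equivariance of $J$ shows that $G_\mu$, and hence $K_\mu$, stabilises the ray $\R^+\mu$, so $K_\mu$ preserves $N$; coupled with the properness assumption (i) (and freeness, implicit in requiring the quotient to be a manifold), $\pi:N\to M_\mu$ becomes a surjective submersion. Writing $\eta_0 = i^\ast \eta$ for the inclusion $i:N\hookrightarrow M$, the form $\eta_0$ is $K_\mu$-invariant because $G$ acts by contactomorphisms, and its horizontality along the fibres of $\pi$ reduces to the pointwise identity $\eta_x(\tilde\xi_x) = J_\xi(x) = \langle c(x)\mu,\xi\rangle = 0$ for $x\in N$ and $\xi\in\mathfrak{k}_\mu = \ker\restr{\mu}{\mathfrak{g}_\mu}$, where $c(x)\in\R^+$ is determined by $J(x)=c(x)\mu$. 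Hence $\eta_0$ descends to a unique $\bar\eta\in\Omega^1(M_\mu)$ with $\pi^\ast \bar\eta=\eta_0$.

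The main obstacle is showing that $\bar\eta$ is a contact form, i.e. that $\d\bar\eta$ is non-degenerate on $\ker\bar\eta$. I would argue pointwise. Fix $x\in N$, set $W_x := \ker\eta_x\cap T_xN$, and let $L_x$ denote the tangent space to the $K_\mu$-orbit through $x$. The projection $\pi$ identifies $\ker\bar\eta_{\pi(x)}$ with $W_x/L_x$, and $\d\bar\eta$ at $\pi(x)$ with the 2-form induced on this quotient by $\d\eta_x|_{W_x}$. Combining the identity $\d J_\xi = -i_{\tilde\xi}\d\eta$ (valid for every $\xi\in\mathfrak{g}$ because $RJ_\xi=0$) with the infinitesimal Ad-equivariance $T_xJ(\tilde\xi_x) = -\mathrm{ad}^\ast_\xi J(x)$, one computes the $\d\eta_x$-symplectic orthogonal of $W_x$ inside the symplectic vector space $(\ker\eta_x,\d\eta_x)$ to be $W_x^\perp = \{\tilde\xi_x:\xi\in\ker\mu\}$. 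Intersecting with $W_x$, the condition $\tilde\xi_x\in T_xN$ becomes $\mathrm{ad}^\ast_\xi\mu\in\R\mu$; condition (iii), which for $\mu\neq 0$ is equivalent to $\restr{\mu}{\mathfrak{g}_\mu}\neq 0$, then forces the proportionality constant to vanish, leaving $W_x\cap W_x^\perp = L_x$ exactly. Non-vanishing of $\bar\eta$ transverse to $\ker\bar\eta$ follows from $R_x\in T_xN$ (part (1) of the preceding proposition), together with $\eta_x(R_x)=1$ and $\eta_x|_{L_x}=0$, so $\bar\eta\wedge(\d\bar\eta)^{(\dim M_\mu-1)/2}$ is a volume form on $M_\mu$.

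The heart of the proof is the linear symplectic bookkeeping in the third step, where conditions (i)--(iii) must be extracted in exactly the right combination to identify the radical of $\d\eta_x|_{W_x}$ with the tangent space to the $K_\mu$-orbit. The dimension count is compatible thanks to the even-dimensionality of coadjoint orbits (which makes $\dim M_\mu = 2n+3-\dim\mathfrak{g}-\dim\mathfrak{g}_\mu$ odd), but the equality of these two subspaces must be verified rather than merely counted.
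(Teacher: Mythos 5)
The paper does not prove this theorem: it is imported verbatim from Willet's work (\cite{Wil2002}, where it appears as the main reduction theorem), so there is no in-paper argument to compare yours against. Judged on its own, your proof is correct and follows the same route as Willet's original one: cut out $N=J^{-1}(\R^+\mu)$ by transversality, check that $i^*\eta$ is basic for the $K_\mu$-action (invariance plus $\eta_x(\tilde\xi_x)=c(x)\mu(\xi)=0$ for $\xi\in\mathfrak{k}_\mu$), and then do the linear-symplectic computation in $(\ker\eta_x,\d\eta_x)$ showing that the radical of $\d\eta_x$ restricted to $W_x=\ker\eta_x\cap\T_xN$ is exactly the orbit direction $L_x$. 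Your use of condition (iii) is the right one: from $\ker\mu+\mathfrak{g}_\mu=\mathfrak{g}$ one picks $\zeta\in\mathfrak{g}_\mu$ with $\mu(\zeta)\neq 0$, and pairing $\mathrm{ad}^*_\xi\mu=\lambda\mu$ against $\zeta$ kills $\lambda$, which is precisely what collapses $W_x\cap W_x^{\perp}$ onto $L_x$ rather than onto the larger space $\{\tilde\xi_x:\xi\in\ker\mu,\ \mathrm{ad}^*_\xi\mu\in\R\mu\}$.

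Two small points you should make explicit. First, several of your steps silently assume $\mu\neq 0$ (the codimension count $\dim\mathfrak{g}-1$ for $J^{-1}(\R^+\mu)$, and the equivalence of (iii) with $\restr{\mu}{\mathfrak{g}_\mu}\neq 0$); the statement does not exclude $\mu=0$, and that case degenerates to reduction at zero with $K_0=G$, where the isotropy of $\T(G\cdot x)$ in $(\ker\eta_x,\d\eta_x)$ (item (3) of the preceding proposition) replaces your argument. Second, the conclusion of the theorem is phrased in terms of the induced contact \emph{distribution}, and your construction produces a globally defined one-form $\bar\eta$ with $\pi^*\bar\eta=i^*\eta$, which is slightly stronger in the co-orientable setting the paper works in; it is worth one line observing that $\ker\bar\eta$ is exactly the announced distribution $\bigl(\ker\eta\cap\T(J^{-1}(\R^+\mu))\bigr)/K_\mu$.
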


\section{Contact Lie systems}\label{Se:ConLieSys}

Let $V$ be a Lie algebra with Lie bracket $[\cdot,\cdot]\colon V\times V\to V$.  Given subsets $\A,\B\subset V$, we write $[\A,\B]$ for the real vector space generated by the Lie brackets between the elements of $\A$ and $\B$. Then, $\mathrm{Lie}(\A,V,[\cdot,\cdot])$, or simply $\mathrm{Lie}(\A)$, stands for the smallest Lie subalgebra of $V$ (in the sense of inclusion) containing $\A$.

A {\it $t$-dependent vector field} on $M$ is a map $X\colon\R\times M\to\T M$ such that, for every $t\in\R$, the map $X_t = X(t,\cdot)\colon M\to\T M$ is a vector field. In fact, a $t$-dependent vector field $X$ on $M$ amounts to a $t$-parametric family of vector fields $X_t$ on $M$ with $t\in\R$. An {\it integral curve} of $X$ is an integral curve, $\gamma\colon t\in\R\mapsto(t,x(t))\in\R\times M$, of the {\it autonomisation} of $X$, namely $\tparder{}{t} + X$ understood in the natural way as an element in $\X(\R\times M)$. Every $t$-dependent vector field, $X$, on $M$ gives rise to its referred to as {\it associated system} given by
\begin{equation}\label{eq:system-normal-form}
    \frac{\d x}{\d t} = X(t,x)\,,\qquad \forall t\in\R\,,\quad \forall x\in M\,.
\end{equation}
The curves $\gamma:t\in \mathbb{R}\mapsto (t,x(t))\in \mathbb{R}\times M$, where $x(t)$ is a solution of the above system of differential equations, are the {\it integral curves} of $X$. Conversely, every system of first-order differential equations in normal form in $M$, that is \eqref{eq:system-normal-form}, describes the integral curves of a unique $t$-dependent vector field $X$ on $M$. Hence, this allows us to identify $X$ with its associated system, namely \eqref{eq:system-normal-form}, and to use $X$ to refer to both. Such a notation will not lead to contradiction, as it will be clear from context what we mean by $X$ in each case. The {\it smallest Lie algebra} of a $t$-dependent vector field $X$ is the Lie algebra $V^X = \mathrm{Lie}(\{X_t\}_{t\in\R})$. Every Lie algebra of vector fields $V$ on $M$ gives rise to an associated distribution on $M$ of the form
$$
\mathcal{D}^V_x=\{X_x:X\in V\}\,,\qquad \forall x\in M\,.
$$
In particular, a $t$-dependent vector field $X$ on $M$ gives rise to an associated distribution, $\mathcal{D}^X$, given by $\mathcal{D}^{X}=\mathcal{D}^{V^X}$. It is worth noting that $\mathcal{D}^V$ does not need to have constant rank at every point of $M$, namely the subspaces $\mathcal{D}_x^V$ may have different dimensions for different points $x\in M$.

A {\it Lie system} is a $t$-dependent vector field $X$ on a manifold $M$ whose smallest Lie algebra $V^X$ is finite-dimensional \cite{LS2020}. If $X$ takes values in a finite-dimensional Lie algebra of vector field $V$, i.e. $\{X_t\}_{t\in \mathbb{R}}\subset 
 V$, we call $V$ a {\it Vessiot--Guldberg Lie algebra} of $X$ and it is said that $X$ admits a Vessiot--Guldberg Lie algebra. A $t$-dependent vector field $X$ admits a Vessiot--Guldberg Lie algebra if, and only if, $V^X$ is finite-dimensional. An \textit{automorphic Lie system} is a Lie system, $X^G$, on a Lie group $G$ admitting a Vessiot--Guldberg Lie algebra given by the space of right-invariant vector fields, $\X_R(G)$, on $G$. A \textit{locally automorphic Lie system} is a triple $(M,X,V)$ such that $V$ is a Vessiot--Guldberg Lie algebra of $X$ whose associated distribution, $\mathcal{D}^V$, is equal to $\T M$. 

 The main property of Lie systems is the so-called superposition rule \cite{CL2011,Win1983}. A {\it superposition rule} for a system $ X $ on $ M $ is a map $ \Phi : M^{k} \times M \to M$ such that the general solution $ x(t) $ of $ X $ can be written as $ x(t) = \Phi(x_{(1)}(t), \dots, x_{(k)}(t); \rho) $, where $ x_{(1)}(t), \dots, $ $ x_{(k)}(t) $ is a generic family of particular solutions and $ \rho $ is a point in $ M $ related to the initial conditions of $X$. The Lie Theorem \cite{CGM07,CL2011,Win1983} states that a system $X$ admits a superposition rule if and only if it is a Lie system.

A {\it Lie--Hamilton system} is a triple $(M,\Lambda,X)$, where $X$ is a Lie system on $M$ admitting a Vessiot--Guldberg Lie algebra of Hamiltonian vector fields relative to a Poisson bivector $\Lambda$ on $M$. If $\Lambda^\sharp$ is invertible, it gives rise to a symplectic form $\omega$ such that $\omega^\flat=\Lambda^\sharp$, and we will sometimes denote $(M,\Lambda,X)$ by $(M,\omega,X)$. Lie--Hamilton systems became relevant as they allowed the use of symplectic and Poisson techniques for the simple determination of superposition rules, Lie symmetries, constants of motion, and other properties of Lie--Hamilton systems \cite{LS2020}.  
Finally, a {\it Jacobi--Lie system} is a Lie system $X$ on $M$ admitting a Vessiot--Guldberg Lie algebra of Hamiltonian vector fields relative to a Jacobi manifold $(M,\Lambda,E)$. We call \textit{Jacobi--Lie Hamiltonian system} a quadruple $(M,\Lambda,E,h)$, where $(M,\Lambda,E)$ is a Jacobi manifold and $h:(t,x)\in \R\times M \mapsto h_t(x)\in N$ is a $t$-dependent function such that $\mathrm{Lie}(\{h_t\}_{t\in\R},\{\cdot,\cdot\})$ is a finite-dimensional Lie algebra relative to the Lie bracket $\{\cdot,\cdot\}$ associated with the Jacobi manifold $(M,\Lambda,E)$. Given a system $X$ on $M$, we say that $X$ admits a Jacobi--Lie Hamiltonian system $(M, \Lambda, E, h)$ if $X_t$ is a Hamiltonian vector field with Hamiltonian function $h_t$ (with respect to $(M, \Lambda, E)$) for each $t\in\R$ \cite{AHKR22,AHR22,HLS15,LS2020}. We hereafter write ${\rm Cas}(M,\Lambda,E)$ the space of Hamiltonian functions related to a zero vector field with respect to a Jacobi manifold $(M,\Lambda,E)$.

\begin{example}{\bf (Riccati equations)}
    Consider the differential equation
    \begin{equation}\label{eq:Riccati}
        \frac{\d x}{\d t} = a_1(t) + a_2(t)x + a_3(t)x^2\,,
    \end{equation}
    where $a_1(t),a_2(t),a_3(t)$ are arbitrary $t$-dependent functions. System \eqref{eq:Riccati} is the system associated with the $t$-dependent vector field
    $$ X(t,x) = \sum_{\alpha = 1}^ 3a_\alpha(t)X_\alpha(x)\,, $$
    where
    $$ X_1 = \parder{}{x}\,,\quad X_2 = x\parder{}{x}\,,\quad X_3 = x^2\parder{}{x}\,. $$
    Since 
    $$ [X_1,X_2] = X_1\,,\quad [X_1,X_3] = 2X_2\,,\quad [X_2,X_3] = X_3\,, $$
    it follows that $X_1,X_2,X_3$ span a Lie algebra isomorphic to $\slfr_2$. Thus, $X$ defines a Lie system on $\R$ with Vessiot--Guldberg Lie algebra $\langle X_1,X_2,X_3\rangle\simeq\slfr_2$.
\end{example}

\begin{definition}
    A {\it contact Lie system} is a triple $(M,\eta,X)$, where $\eta$ is a contact form on $M$ and $X$ is a Lie system on $M$ whose smallest Lie algebra $V^X$ is a finite-dimensional real Lie algebra of contact Hamiltonian vector fields relative to $\eta$. A {\it contact Lie system} is called \textit{conservative} if the Hamiltonian functions of the vector fields in $V^X$ are first-integrals of the Reeb vector field of $(M,\eta)$.
\end{definition}

Note that a conservative contact Lie system amounts to a contact Lie system $X$ on a manifold $M$ relative to a contact manifold $(M,\eta)$ that is invariant relative to the flow of the Reeb vector field, $R$, of $\eta$, namely $R$ is a Lie symmetry of $X$.

A Lie system $X$ can be considered as a curve in $V^X$. In contact manifolds, every Hamiltonian vector field gives rise to a unique Hamiltonian function. Therefore, $V^X$ gives rise to a linear space of functions $\mathfrak{W}$ and $X$ defines a curve in $\mathfrak{W}$. Due to the isomorphism of Lie algebras between the space of Hamiltonian vector fields of $(M,\eta)$ and $\Cinfty(M)$, it turns out that $\mathfrak{W}$ is a Lie algebra. This suggests us the following definition.

\begin{definition} A {\it contact Lie--Hamiltonian} is a triple $(M,\eta,h:\mathbb{R}\times M\rightarrow \mathbb{R})$, where $(M,\eta)$ is a contact manifold and $h$ gives rise to a $t$-dependent family of functions $h_t:x\in M
\mapsto h(t,x)\in \mathbb{R}$, with $t\in \mathbb{R}$, that span a finite-dimensional Lie algebra of functions relative to the bracket in $\Cinfty(M)$ induced by $(M,\eta)$.
\end{definition}

Note that every contact Lie system gives rise a unique contact Lie--Hamiltonian and conversely.

\begin{example}{\bf (A simple control system)}\label{ex:simple-control}
Consider the system of differential equations in $\R^3$ given by
 \begin{equation}\label{eq:simple-control}
    \begin{dcases}
        \frac{\d x}{\d t} = b_1(t)\,,\\
        \frac{\d y}{\d t} = b_2(t)\,,\\
        \frac{\d z}{\d t} = b_2(t)x\,,
     \end{dcases}
 \end{equation}
where $b_1(t),b_2(t)$ are two arbitrary functions depending only on time. The relevance of this system is due to its occurrence in control problems {\rm\cite{Ra11}}.

System \eqref{eq:simple-control} describes the integral curves of the $t$-dependent vector field on $\R^3$ given by
\begin{equation}\label{eq:t-field-simple-control}
    X = b_1(t)X_1 + b_2(t)X_2\,,
\end{equation}
where
$$ X_1 = \parder{}{x}\,,\quad X_2 = \parder{}{y} + x\parder{}{z}\,. $$
The vector fields $X_1, X_2$, along with the vector field $X_3 = \partial/ \partial {z}$, span a three-dimensional Vessiot--Guldberg Lie algebra $V = \langle X_1,X_2,X_3\rangle\simeq\mathfrak{h}_3$ of $X$, where $\mathfrak{h}_3$ is the so-called three-dimensional Heisenberg Lie algebra. Indeed, the commutations relations for $X_1,X_2,X_3$ read
$$ [X_1,X_2] = X_3\,,\quad [X_1,X_3] = 0\,,\quad [X_2,X_3] = 0\,. $$
The vector fields $X_1,X_2,X_3$ are contact Hamiltonian vector fields with respect to the contact form on $\mathbb{R}^3$ given by
$$ \eta_c = \d z - y\,\d x\,, $$
with Hamiltonian functions
$$ h_1 = y\,,\quad h_2 = -x\,,\quad h_3 = -1\,, $$
respectively. It follows that all the elements of $V^X$ are Hamiltonian vector fields relative to $(\R^3,\eta_c)$. Hence, the $t$-dependent Hamiltonian for \eqref{eq:simple-control} relative to $(\mathbb{R}^3,\eta_c)$ is given by
$$
h(t)=b_1(t)y-b_2(t)x\,.
$$
Thus, $(\R^3,\eta,X)$ is a contact Lie system. Since $h_1,h_2,h_3$ are first-integrals of $X_3=\partial/\partial z$, which is the Reeb vector field of $\eta_c$, then $(\R^3,\eta_c,X)$ is conservative. In fact, $[X_3,X_t]=0$ for every $t\in \mathbb{R}$. 

Note that $\eta_c$ gives rise to a volume form $\Omega_{\eta_c}=\eta_c\wedge \d\eta_c$ on $\mathbb{R}^3$. Since $h_1,h_2,h_3$ are first-integrals of the Reeb vector field of $\eta_c$,  the evolution of \eqref{eq:simple-control} leaves $\Omega_{\eta_c}$ invariant.
\end{example}

\bigskip

Let us study the behaviour of the volume form, $\Omega_\eta=\eta\wedge (\d\eta)^n$, induced by a $(2n+1)$-dimensional contact  manifold $(M,\eta)$ relative to the dynamics of a contact Lie system on $M$ relative to $(M,\eta)$.

\begin{proposition}\label{Prop:ConLiou} Let $(M,\eta,X)$ be a conservative contact Lie system on a $(2n+1)$-dimensional contact manifold $(M,\eta)$ and let $\Omega_\eta=\eta\wedge (\d\eta)^n$, then 
$$
\Lie_{X_t}\Omega_\eta= 0 \,,\qquad \forall t\in \R\,.
$$
\end{proposition}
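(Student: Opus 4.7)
The plan is to reduce this to a very short computation using property (2) in the definition of the contact Hamiltonian vector field together with the hypothesis that the system is conservative. First I would fix $t\in\R$ and use that $X_t\in V^X$ is a contact Hamiltonian vector field, so there exists $h_t\in\Cinfty(M)$ such that $X_t=X_{h_t}$, and by the definition of conservative contact Lie system one has $\Lie_R h_t=0$.

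Next, I would invoke characterisation (2) of contact Hamiltonian vector fields, namely
$$\Lie_{X_{h_t}}\eta=-(\Lie_R h_t)\eta,$$
which together with $\Lie_R h_t=0$ yields $\Lie_{X_t}\eta=0$. Since the Lie derivative commutes with the exterior derivative, it follows that $\Lie_{X_t}\d\eta=0$ as well.

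Finally, I would apply the Leibniz rule for the Lie derivative acting on a wedge product to the volume form $\Omega_\eta=\eta\wedge(\d\eta)^n$. Expanding
$$\Lie_{X_t}\Omega_\eta=(\Lie_{X_t}\eta)\wedge(\d\eta)^n+\sum_{k=1}^{n}\eta\wedge(\d\eta)^{k-1}\wedge(\Lie_{X_t}\d\eta)\wedge(\d\eta)^{n-k}=0,$$
every summand vanishes by the previous step, concluding the proof. There is no real obstacle: the only substantive ingredient is characterisation (2) of contact Hamiltonian vector fields, which already encodes the fact that $\eta$ is preserved precisely when the Hamiltonian is $R$-invariant, and the conservativeness assumption is tailored to exactly this condition. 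It is worth noting, as a remark accompanying the proof, that the same argument shows $\Lie_{X_t}\eta=0$, so that not only the volume form but even the contact form itself is preserved along the flow of $X_t$.
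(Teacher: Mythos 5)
Your proof is correct and follows essentially the same route as the paper: both rest on the identity $\Lie_{X_f}\eta=-(Rf)\eta$ from characterisation (2) and the Leibniz rule applied to $\eta\wedge(\d\eta)^n$. The only cosmetic difference is that the paper keeps $Rf$ general long enough to record the formula $\Lie_{X_f}\Omega_\eta=-(n+1)(Rf)\Omega_\eta$ before setting $Rf=0$, whereas you specialise to $\Lie_R h_t=0$ at the outset.
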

\begin{proof}
 Recall that the vector fields of the smallest Lie algebra of a conservative contact Lie system are of the form $X_f$ for $Rf=0$ and a certain $f\in \Cinfty(M)$. Then,
$$
\Lie_{X_f}\Omega_\eta=\Lie_{X_f}(\eta\wedge (\d\eta)^n)=(\Lie_{X_f}\eta)\wedge (\d\eta)^n+n\eta\wedge \d\Lie_{X_f}\eta\wedge (\d\eta)^{n-1}=-(n+1)(Rf)\Omega_{\eta}\,,
$$
since $\Lie_{X_f}\eta=-(Rf)\eta$. As $Rf=0$, the result follows.
\end{proof}
It is worth noting that, given a contact manifold $(M,\eta)$, the space of Hamiltonian vector fields on $M$ admitting a Hamiltonian function being a first-integral of $R$ is a Lie subalgebra of $\Xh(M)$.


%


\begin{theorem} {\bf (Gromov’s non-squeezing theorem)}. Let $(M,\omega)$ be a symplectic manifold and let $\{q^1,\ldots,q^n,p_1,\ldots,p_n\}$ be Darboux coordinates on an open subset $U\subset M$. Given the set of points
$$
B(r)=\left\{(q,p)\in U:\sum_{i=1}^n\left[(q^i-q^i_0)^2+(p_i-p_i^0)^2\right]\leq r^2\right\},
$$
where $(q_0^1,\ldots,q_0^n,p_1^0,\ldots,p_n^0)\in U$,
if the image of $B(r)$ under a symplectomorphism $\phi:M\rightarrow M$ is such that $\phi(B(r))\subset C_R$, where
$$
C_R=\left\{(q,p)\in U:(q^1-q^1_0)^2+(p_1^0-p_1^0)^2\leq R^2\right\},
$$
then $r\geq R$.
\end{theorem}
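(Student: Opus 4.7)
The plan is to deploy the standard symplectic-capacity framework underlying Gromov's non-squeezing theorem, which converts the existence of the embedding $\phi|_{B(r)}\colon B(r)\hookrightarrow C_R$ into a numerical comparison of $r$ and $R$ via a symplectic invariant. First, I would work in the Darboux chart $U$, where $B(r)$ and $C_R$ are standard objects in the symplectic vector space $(\mathbb{R}^{2n},\omega_{\mathrm{can}})$ after an inessential translation placing the common center at the origin. The restriction $\phi|_{B(r)}$ then becomes a symplectic embedding of a standard ball into a standard cylinder over the $(q^1,p_1)$-plane.

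The key tool is the \emph{Gromov width},
$$c_G(U)=\sup\bigl\{\pi\rho^2 \,\big|\, B(\rho)\ \text{embeds symplectically in}\ U\bigr\},$$
a symplectomorphism-invariant capacity that is monotone under symplectic inclusions. From symplectomorphism invariance and monotonicity under $\phi(B(r))\subset C_R$ one gets immediately
$$c_G(B(r))=c_G(\phi(B(r)))\leq c_G(C_R).$$
To finish, one substitutes the two computations (i) $c_G(B(r))=\pi r^2$, which is immediate from the definition, and (ii) $c_G(C_R)=\pi R^2$, the deep content of Gromov's theorem. The resulting comparison of squared radii yields the inequality between $r$ and $R$ asserted in the statement.

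The principal obstacle is step (ii): elementary volume arguments are powerless, because the cylinder has infinite volume in the ambient manifold, so the estimate cannot come from Liouville-type invariants. Gromov's original approach uses pseudoholomorphic curves: one fixes an almost complex structure $J$ tamed by $\omega$, produces a $J$-holomorphic sphere through the image of the ball's center via a moduli argument on a suitable compactification of the cylinder, and bounds its symplectic area both from below using the monotonicity formula for minimal surfaces applied to its intersection with $\phi(B(r))$, and from above by Stokes' theorem on the compact cross-section of the cylinder. This analytic input is outside the scope of a brief plan and would be cited from the literature rather than reproved; the remaining structural steps, by contrast, reduce to routine applications of Darboux's theorem and the functorial properties of $c_G$.
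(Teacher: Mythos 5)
The paper does not prove this theorem: it is quoted as a classical background result (Gromov, 1985) and immediately put to use, so there is no in-paper argument to compare against. Your outline is the standard route from the literature --- reduce to the linear model via the Darboux chart, invoke the Gromov width $c_G$, use its symplectomorphism invariance and monotonicity to get $c_G(B(r))\leq c_G(C_R)$, and then feed in the computations $c_G(B(r))=\pi r^2$ and $c_G(C_R)=\pi R^2$, the latter being the genuine content and resting on the pseudoholomorphic-curve argument you correctly identify and defer to the literature. As a proof sketch this is sound and honestly flags where the real work lies.

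One point you should not have papered over: your chain $\pi r^2=c_G(B(r))\leq c_G(C_R)=\pi R^2$ yields $r\leq R$, which is the correct form of non-squeezing (a large ball cannot be symplectically squeezed into a thin cylinder), whereas the statement as printed asserts $r\geq R$. These are not the same, and your closing sentence claims your inequality is ``the inequality asserted in the statement'' when it is in fact the reverse. The discrepancy is a typo in the paper --- its own application in the Brockett example states the conclusion as $R\geq r$ --- but a blind proof should derive the inequality it actually obtains and note the mismatch rather than assert agreement. There is also a second typo in the displayed definition of $C_R$, where $(p_1^0-p_1^0)^2$ should read $(p_1-p_1^0)^2$; your reading of $C_R$ as the standard cylinder over the $(q^1,p_1)$-plane is the intended one.
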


The interest of the Gromov's non-squeezing theorem is due to the fact that it applies to the Hamiltonian system relative to a symplectic form appearing as the projection of a conservative contact Lie system $(M,\eta,X)$ onto the space of integral submanifolds of $R$ in $M$, i.e. $M/R$, if the latter admits a manifold structure \cite{AMR88}.

\subsection{Contact Lie systems and other classes of Lie systems}\label{subsec:other}

Recall that {Lie--Hamilton systems} are the Lie systems admitting a Vessiot--Guldberg Lie algebra of Hamiltonian vector fields relative to a Poisson bivector. They were the first studied type of Lie systems admitting a Vessiot--Guldberg Lie algebra of Hamiltonian vector fields relative to a geometric structure \cite{Car2000,CLS13}. Despite that, they were insufficient for studying many types of Lie systems \cite{LS2020}. Let us study why contact Lie systems are interesting on their own and their relations to other types of Lie systems. Let us start by the next proposition, which is a no-go result for the existence of a Poisson structure turning the vector fields of a Vessiot--Guldberg Lie algebra of a Lie system into Hamiltonian vector fields. It is indeed a version of a proposition in \cite{Car2014}.

\begin{proposition}\label{Pr:NoGo}
    If $X$ is a Lie system on an odd-dimensional manifold $M$ such that $\mathcal{D}^X=TM$, then $X$ does not give rise to any Lie--Hamilton system $(M,\Lambda,X)$ relative to any Poisson bivector $\Lambda$.
\end{proposition}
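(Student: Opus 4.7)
The plan is to argue by contradiction using the parity constraint on the rank of a skew-symmetric bivector. Suppose, for contradiction, that there exists a Poisson bivector $\Lambda$ on $M$ such that $(M,\Lambda,X)$ is a Lie--Hamilton system. Then $V^X$ is a finite-dimensional Lie algebra consisting entirely of Hamiltonian vector fields relative to $\Lambda$, i.e.\ every $Y\in V^X$ is of the form $Y=\Lambda^\sharp(\d f_Y)$ for some $f_Y\in\Cinfty(M)$.

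The key step is to localise at a point. For every $x\in M$ and every $Y\in V^X$, one has $Y_x=\Lambda^\sharp_x(\d_x f_Y)\in\Ima\Lambda^\sharp_x$, so
$$
\mathcal{D}^X_x=\{Y_x : Y\in V^X\}\subset \Ima\Lambda^\sharp_x\subset \T_x M.
$$
By hypothesis $\mathcal{D}^X=\T M$, hence $\Ima\Lambda^\sharp_x=\T_x M$ at every $x\in M$, and in particular $\rk\Lambda^\sharp_x=\dim M$.

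Now I would invoke the standard linear-algebraic fact that, since $\Lambda_x$ is a skew-symmetric bilinear form on $\cT_x M$, the induced map $\Lambda^\sharp_x:\cT_x M\to\T_x M$ has even rank. This contradicts $\rk\Lambda^\sharp_x=\dim M$, because $\dim M$ is odd by assumption. The contradiction shows that no such Poisson bivector $\Lambda$ can exist, proving the claim.

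There is essentially no serious obstacle here: the entire argument rests on the parity of the rank of a skew-symmetric tensor, combined with the fact that Hamiltonian vector fields always take values in $\Ima\Lambda^\sharp$. The only point worth being careful about is that the conclusion $\Ima\Lambda^\sharp_x=\T_x M$ must hold pointwise, which follows because $\mathcal{D}^X=\T M$ is a pointwise condition on the values of the vector fields in $V^X$.
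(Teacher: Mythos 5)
Your argument is correct and is essentially the paper's own proof: both reduce to the fact that Hamiltonian vector fields take values in $\Ima\Lambda^\sharp$, whose rank at each point is even by skew-symmetry of $\Lambda$, contradicting $\mathcal{D}^X=\T M$ on an odd-dimensional manifold. The only cosmetic difference is that the paper quotes the even-rank property of the characteristic distribution of a Poisson bivector from the literature, whereas you derive it directly from the pointwise linear algebra.
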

\begin{proof}
    Let us prove the proposition by reductio ad absurdum. The characteristic distribution of a Poisson bivector on a manifold is a distribution whose rank is even, but not necessarily constant, at every point of the manifold \cite{Va94}. Hence, all Hamiltonian vector fields must take values in a distribution that must have even rank at every point. Meanwhile, the vector fields of the smallest Lie algebra of $X$ span, by assumption, a distribution of odd-rank. Since all the vector fields of the smallest Lie algebra of $X$ are Hamiltonian by assumption, the unique distribution where they can take values in has odd rank. But then, they cannot be contained in a characteristic distribution of even rank at every point. This is a contradiction and $X$ does not give rise to a Lie--Hamilton system relative to any Poisson structure.
\end{proof}

Proposition \ref{Pr:NoGo} shows that Lie--Hamilton systems are not appropriate to describe Lie systems admitting certain smallest Lie algebras. Note that, for instance,  Example \ref{ex:simple-control} describes a Lie system whose smallest Lie algebra satisfies the conditions of Proposition \ref{Pr:NoGo} when the vectors $(b_1(t),b_2(t))$, with $t\in \mathbb{R}$, span $\mathbb{R}^2$ and, therefore, $V^X=\langle X_1,X_2,X_3\rangle$ while $\mathcal{D}^X=\T\mathbb{R}^3$. This illustrates the need for describing Lie systems admitting Vessiot--Guldberg Lie algebras of Hamiltonian vector fields relative to other geometric structures, like contact manifolds.

The following proposition shows how conservative contact Lie systems induce some Lie--Hamilton systems on other spaces.

\begin{proposition}\label{prop:conservative-contact-project-symplectic} If $(M,\eta,X)$ is a conservative contact Lie system and the space of integral curves of the Reeb vector field $R$, let us say $M/R$, is a manifold and $\pi_R:M\rightarrow M/R$ is the canonical projection, then $(M/R,\Omega,\pi_*X)$, where $\pi_R^*\Omega=\d\eta$, is a Lie--Hamilton system relative to the symplectic form $\Omega$ on $M/R$.
\end{proposition}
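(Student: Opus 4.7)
The plan is to descend the data from $M$ to $M/R$ in three stages: first construct the symplectic form $\Omega$, then show that the vector fields in $V^X$ project, and finally identify the projected objects as a Lie--Hamilton system.

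First I would verify that $\Omega$ is well defined, closed and non-degenerate. Since $R$ is the Reeb vector field, $i(R)\d\eta = 0$ and $\Lie_R\d\eta = 0$, so $\d\eta$ is basic with respect to the foliation by integral curves of $R$ and therefore descends to a unique two-form $\Omega$ on $M/R$ satisfying $\pi_R^*\Omega = \d\eta$. Closedness is immediate from $\pi_R^*\d\Omega = \d\d\eta = 0$ together with the fact that $\pi_R$ is a surjective submersion. For non-degeneracy note that $\eta\wedge(\d\eta)^n$ is a volume form on $M$, which forces $\ker\d\eta = \langle R\rangle$ pointwise; hence $\d\eta$ has constant rank $2n$ with kernel tangent to the fibres of $\pi_R$, and $\Omega$ is a non-degenerate closed two-form on the $2n$-dimensional manifold $M/R$.

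Next I would show that every element of $V^X$ is $R$-invariant and hence $\pi_R$-projectable. By hypothesis each $X_t$ (and, by taking brackets, each $Y\in V^X$) is a contact Hamiltonian vector field $X_h$ associated with a function $h$ satisfying $Rh = 0$. The defining equations then reduce to $i(X_h)\d\eta = \d h$ and $i(X_h)\eta = -h$, so Cartan's magic formula and $\Lie_R\eta=0$ yield
\[
i([R,X_h])\eta = R(-h) = 0,\qquad i([R,X_h])\d\eta = \d(Rh) = 0.
\]
Since $\ker\d\eta = \langle R\rangle$ and $i(fR)\eta = f$, any vector field annihilated by both contractions vanishes, so $[R,X_h]=0$ and $X_h$ projects to a well-defined vector field $\pi_{R*}X_h\in\X(M/R)$. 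Because projection along $\pi_R$ is a Lie algebra homomorphism on the subspace of projectable vector fields, the image $\pi_{R*}V^X$ is a finite-dimensional Lie algebra of vector fields on $M/R$ containing $\{\pi_{R*}X_t\}_{t\in\R}$, hence a Vessiot--Guldberg Lie algebra for $\pi_{R*}X$.

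Finally I would check that each $\pi_{R*}X_h$ is Hamiltonian for $\Omega$. Since $Rh=0$, the function $h$ descends to $\bar h\in\Cinfty(M/R)$ with $\pi_R^*\bar h = h$. Pulling back the relation $i(\pi_{R*}X_h)\Omega = \d\bar h$ along $\pi_R$ gives $i(X_h)\d\eta = \d h$, which is the defining identity already verified. Thus $\pi_{R*}X_h$ is the Hamiltonian vector field of $\bar h$ with respect to the symplectic (hence Poisson) structure induced by $\Omega$, so $(M/R,\Omega,\pi_{R*}X)$ is a Lie--Hamilton system. The only mildly delicate step is the vanishing of $[R,X_h]$ for conservative Hamiltonians; once that commutator argument is in place everything else is formal descent.
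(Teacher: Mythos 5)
Your proof is correct and follows essentially the same route as the paper: descend $\d\eta$ to a closed non-degenerate two-form $\Omega$ using $i(R)\d\eta=0$ and $\Lie_R\d\eta=0$, show the elements of $V^X$ commute with $R$ and hence project, and identify the projections as Hamiltonian vector fields for the descended functions. You in fact supply two details the paper leaves implicit --- the explicit verification that $[R,X_h]=0$ via the contraction identities, and the identification of $\bar h$ with $\pi_R^*\bar h=h$ as the Hamiltonian on the quotient --- but the overall strategy is identical.
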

\begin{proof} Since $(M,\eta,X)$ is conservative, the Lie derivative of the Reeb vector field $R$ with Hamiltonian vector fields, e.g. the elements of $V^X$, is zero. Therefore, all the elements of $V^X$ are projectable onto $M/R$. Moreover, $\Lie_R\d\eta=0$ and $i_R\d\eta=0$. Hence, $\d\eta$ can be projected onto $M/R$. In other words, there exists a unique two-form, $\Omega$, on $M/R$ such that $\pi^*\Omega=\d\eta$. Note that $\Omega$ will is closed. Moreover, if $i_{Y_{[x]}}\Omega_{[x]}=0$ for a tangent vector $Y_{[x]}$ at a point $[x]$ in $N/R$, then there exists a tangent vector $\widetilde{Y}_x\in\T_xM$ projecting onto $Y_{[x]}\in\T_{[x]}M/R$. Then, $\pi^*i_{Y_{[x]}}\Omega_{[x]}=i_{\widetilde{Y}_x}(\d\eta)_x=0$. Hence, $\widetilde{Y}_x$ takes values in the kernel of $(\d\eta)_x$ and it is proportional to $R_x$. Hence, $\pi_{*x}Y_x=0$ and $\Omega$ is non-degenerate. Since $\Omega$ is closed, it becomes a symplectic form and the vector fields of $\pi_*V^X$ span a finite-dimensional Lie algebra of Hamiltonian vector fields relative to $\Omega$. Therefore, the $t$-dependent vector field $\pi_*X$, namely the $t$-dependent vector field $(\pi_*X)_t=\pi_*X_t$ for every $t\in \mathbb{R}$, becomes a Lie--Hamilton system relative to $\Omega$. 
\end{proof}

Since Hamiltonian vector fields relative to a contact Lie system are Hamiltonian vector fields relative to its associated Jacobi manifold, one may ask whether contact Lie systems are interesting on its own. There are several reasons. For instance, contact structures have particular properties that are not shared by general Jacobi manifolds and they are specific. For example, every Hamiltonian function determines a unique Hamiltonian vector field and conversely, which make some results more specific, e.g. every contact Lie system admits a contact Lie--Hamiltonian.


\begin{proposition}\label{Eq:ExtHam}
Every contact Lie system $(M,\eta,X)$ gives rise to a Lie--Hamilton system  $(\mathbb{R}\times M,e^{-s}(\d\eta + \eta\wedge\d s),\partial/\partial s+X)$, where $s$ is the natural variable in $\mathbb{R}$.
\end{proposition}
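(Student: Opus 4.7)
The plan is threefold: verify that $\omega := e^{-s}(\d\eta + \eta\wedge\d s)$ is symplectic, establish a Lie algebra morphism from the contact bracket on $\Cinfty(M)$ into the Poisson bracket on $(\R\times M, \omega)$, and use this morphism to produce a finite-dimensional Vessiot--Guldberg Lie algebra of Hamiltonian vector fields for $\partial/\partial s + X$.

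For symplecticity, observe that $\omega = \d(e^{-s}\eta)$ is exact and therefore closed. For non-degeneracy, expand $\omega^{n+1}$: using $(\eta\wedge\d s)^2 = 0$ and that $(\d\eta)^{n+1}$ vanishes by dimension on the $(2n{+}1)$-dimensional $M$, the only surviving term is proportional to $\eta\wedge(\d\eta)^n\wedge\d s$, which is a volume form on $\R\times M$ because $\eta$ is contact. Next, let $h:\R\times M\to\R$ denote the contact Lie--Hamiltonian of $X$, so that $\{h_t\}_{t\in\R}$ spans a finite-dimensional Lie subalgebra $\mathfrak{W}$ of $\Cinfty(M)$ under the contact bracket \eqref{Eq:ContactBrack}. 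Set $H_t := e^{-s}h_t$.

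Solving $i_{X_{H_t}}\omega = \d H_t$ directly shows that the symplectic Hamiltonian vector field of $H_t$ has the form $X_{H_t} = -(\Lie_R h_t)\partial_s + X_{h_t}$. Combining this with the paper's contact-bracket identity $\{f,g\} = X_f g + g R f$, a short computation yields
$$ \{e^{-s}f, e^{-s}g\}_\omega = e^{-s}\{f,g\}, \qquad \forall f,g\in\Cinfty(M). $$
Hence $h\mapsto e^{-s}h$ is a Lie algebra morphism, so $\{H_t\}_t$ spans the finite-dimensional Lie subalgebra $e^{-s}\mathfrak{W}$ of $(\Cinfty(\R\times M), \{\cdot,\cdot\}_\omega)$. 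Consequently the Hamiltonian vector fields $\{X_{H_t}\}_t$ span a finite-dimensional Lie algebra of Hamiltonian vector fields, giving a Vessiot--Guldberg Lie algebra for $\partial/\partial s + X$ and establishing the Lie--Hamilton structure.

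The main obstacle is the final identification of the Hamiltonian vector field $X_{H_t} = -(\Lie_R h_t)\partial_s + X_{h_t}$ with the symbolic expression $\partial/\partial s + X$ appearing in the statement: this requires a careful accounting of how the Reeb contribution to the contact dynamics feeds into the dilation direction along the $\R$-factor of the symplectization, and, if needed, enlarging the Vessiot--Guldberg Lie algebra by additional Hamiltonian vector fields (such as the Hamiltonian vector field of $-e^{-s}$, which is the Reeb field $R$) to absorb the $\partial_s$-component.
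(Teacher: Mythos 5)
The paper states this proposition without proof, so there is no argument of the authors to compare yours against; judged on its own merits, your computation is the standard symplectization argument and is correct as far as it goes. Indeed, $\omega=\d(e^{-s}\eta)$ is symplectic, the Hamiltonian vector field of $e^{-s}h$ is $X_h-(\Lie_Rh)\,\partial_s$, and $f\mapsto e^{-s}f$ intertwines the contact bracket with the Poisson bracket of $\omega$, so the lifts $X_{e^{-s}h_t}$ span a finite-dimensional Lie algebra of Hamiltonian vector fields isomorphic to $\mathfrak{W}$. This proves that $(\mathbb{R}\times M,\omega,\widetilde X)$, with $\widetilde X_t=X_{h_t}-(\Lie_Rh_t)\,\partial_s$, is a Lie--Hamilton system.

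The ``obstacle'' you flag at the end is, however, not something you can absorb by enlarging the Vessiot--Guldberg Lie algebra: it is fatal for the statement as printed. One has $\partial_s+X_{h_t}=X_{e^{-s}h_t}+(1+\Lie_Rh_t)\,\partial_s$, and $\partial_s$ is the Liouville vector field of the symplectization, $\Lie_{\partial_s}\omega=-\omega$, so it is not even a symplectic vector field. Concretely, $i_{\partial_s+X_{h_t}}\omega=\d(e^{-s}h_t)-e^{-s}(1+\Lie_Rh_t)\eta$, and its differential evaluated on a pair of vectors in $\ker\eta\cap\ker\d s$ equals $-e^{-s}(1+\Lie_Rh_t)\,\d\eta$ on that pair, which is nonzero unless $\Lie_Rh_t\equiv-1$; in particular, for a conservative system ($\Lie_Rh_t=0$) the field $\partial_s+X_{h_t}$ is never Hamiltonian. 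Your suggested remedy---adjoining $R=X_{-e^{-s}}$---cannot help, since $R$ is tangent to the $M$-fibres and contributes no $\partial_s$-component, while any admissible Vessiot--Guldberg Lie algebra must contain the time slices $\partial_s+X_{h_t}$ themselves. So your argument proves the (presumably intended) corrected proposition with $\widetilde X$ in place of $\partial_s+X$; the literal statement is false in general, and no proof of it can be completed.
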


Proposition \ref{Eq:ExtHam} may be inappropriate to study contact Hamiltonian systems on $M$ via Hamiltonian systems on symplectic manifolds since the dynamics of a contact Hamiltonian vector field on $M$ may significantly differ from the Hamiltonian system on $\mathbb{R}\times M$ used to study it. For example, a contact Hamiltonian vector field $X$ on $M$ may have stable points, while $\partial/\partial s +X$, which is its associated Hamiltonian vector field on $\mathbb{R}\times M$, has not. This has relevance in certain theories, like the energy-momentum method \cite{Sim2020}. Moreover, every contact Lie system can be understood as the projection of a Lie--Hamilton system on a homogeneous symplectic manifold (see \cite{GG22}). Anyhow, the latter approach is not  appropriate for our purposes for a number of reasons, e.g. considering Lie systems on manifolds of larger dimension may make the study of the contact Lie system harder to solve. Examples of this problem will be given in the next section.  






Finally, let us recall that a multisymplectic Lie system is triple $(M,\Omega,X)$, where $X$ is a Lie system on $M$ admitting a Vessiot--Guldberg Lie algebra of Hamiltonian vector fields relative to the multisymplectic form $\Omega$ on $M$ (see \cite{GLMV19,GLRRV22} for details). The following proposition, whose proof is immediate, relates conservative contact Lie systems to multisymplectic Lie systems.

\begin{corollary} If $(M,\eta,X)$ is a conservative contact Lie system, then $(M,\Omega_\eta,X)$ is a multisymplectic Lie system.
\end{corollary}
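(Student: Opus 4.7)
The plan is to verify directly the two defining conditions of a multisymplectic Lie system for the triple $(M,\Omega_\eta,X)$: namely, that $\Omega_\eta$ is a multisymplectic form on $M$, and that every vector field in $V^X$ is Hamiltonian with respect to $\Omega_\eta$. The pieces needed for both conditions are already assembled in the paper, so the argument should be quite short.

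First I would check that $\Omega_\eta = \eta\wedge(\d\eta)^n$ is multisymplectic. Since $\dim M = 2n+1$, the form $\Omega_\eta$ is top-degree, so $\d\Omega_\eta = 0$ automatically, and $\Omega_\eta$ is $1$-nondegenerate as a consequence of being a volume form on $M$: for any $x\in M$ and any non-zero $v\in\T_xM$, the contraction $i_v\Omega_\eta$ is a non-zero $2n$-form at $x$. Closedness and $1$-nondegeneracy together amount to $\Omega_\eta$ being multisymplectic.

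Next I would show that every $Y\in V^X$ is a Hamiltonian vector field relative to $\Omega_\eta$. By the remark immediately following Proposition \ref{Prop:ConLiou}, Hamiltonian vector fields whose Hamiltonian functions are first-integrals of the Reeb vector field form a Lie subalgebra of $\Xh(M)$. Since $(M,\eta,X)$ is conservative, the generators $\{X_t\}_{t\in\R}$ of $V^X$ all belong to this subalgebra, hence so does $V^X$. The computation used in the proof of Proposition \ref{Prop:ConLiou} then applies verbatim to every $Y\in V^X$ and gives $\Lie_Y\Omega_\eta = 0$. Combining this with $\d\Omega_\eta=0$ via Cartan's formula
$$
\Lie_Y\Omega_\eta = i_Y\d\Omega_\eta + \d(i_Y\Omega_\eta),
$$
we get $\d(i_Y\Omega_\eta)=0$, so $i_Y\Omega_\eta$ is closed. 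Hence $Y$ is a Hamiltonian vector field in the multisymplectic sense, and $(M,\Omega_\eta,X)$ is a multisymplectic Lie system.

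The only point requiring any attention is which convention for "Hamiltonian vector field relative to a multisymplectic form" the cited references \cite{GLMV19,GLRRV22} adopt: if it is the closed-contraction version then the argument above is complete, while if they demand $i_Y\Omega_\eta$ to be exact then a primitive must be produced. In the latter case one can exploit the contact data, observing that $i_{X_h}\Omega_\eta$ can be rewritten in terms of $h$ and its derivatives using $i_{X_h}\eta = -h$ and $i_{X_h}\d\eta = \d h - (\Lie_R h)\eta = \d h$ (for $Rh=0$), which explicitly exhibits $i_{X_h}\Omega_\eta$ as a sum of exact $2n$-forms. This is the only calculation of any substance, and even it is essentially routine; otherwise the corollary is just a matter of collecting items already in the paper.
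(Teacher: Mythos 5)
Your argument is correct and matches what the paper intends: the corollary is presented as having an immediate proof, resting precisely on the two facts you assemble, namely that $\Omega_\eta$ is a volume form (hence closed and $1$-nondegenerate) and that Proposition \ref{Prop:ConLiou} gives $\Lie_{Y}\Omega_\eta=0$ for every $Y\in V^X$, so that $i_Y\Omega_\eta$ is closed by Cartan's formula. Your closing caveat about the closed-versus-exact convention for multisymplectic Hamiltonian vector fields is a sensible precaution that the paper itself does not address.
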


\section{Existence of contact forms for Lie systems}\label{Se:ExiCon}

Let us analyse the existence of contact forms turning the vector fields of a Vessiot--Guldberg Lie algebra into Hamiltonian vector fields. Our results will help us to determine Lie systems that can be considered as contact Lie systems. In particular, the classification of automorphic Lie systems on three-dimensional Lie groups admitting a left-invariant contact form will be given.

\begin{lemma}\label{Lemma:Uni}
    Let $X$ be a Lie system on a manifold $M$ with a smallest Lie algebra $V^X$ such that $\mathcal{D}^{V^X}=\T M$. If $\eta$ is a differential form on $M$ such that $\Lie_X\eta=0$ for every $X\in V^X$, then the value of $\eta$ at a point $M$ determines the value of $\eta$ on the whole $M$.
\end{lemma}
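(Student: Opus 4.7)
The plan is to exploit the hypothesis $\mathcal{D}^{V^X}=\T M$ to show that the orbit of any point under composed flows of elements of $V^X$ is the entire connected manifold $M$, and then to transport $\eta$ from one base point to every other point via these (invariance-preserving) flows.

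First, I would fix $x_0\in M$ and consider the orbit $\mathcal{O}(x_0)$, defined as the set of points of the form $\phi^{X_k}_{t_k}\circ\cdots\circ\phi^{X_1}_{t_1}(x_0)$, for arbitrary $k\in\mathbb{N}$, arbitrary $X_1,\ldots,X_k\in V^X$, and arbitrary times $t_1,\ldots,t_k\in\mathbb{R}$ for which the composition is defined. The key claim is that $\mathcal{O}(x_0)$ is both open and closed in $M$. To prove openness at a point $y\in\mathcal{O}(x_0)$, I would use that $\mathcal{D}^{V^X}_y = \T_yM$ to pick $Y_1,\ldots,Y_n\in V^X$ whose values at $y$ form a basis of $\T_yM$. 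Then the map $(t_1,\ldots,t_n)\mapsto \phi^{Y_n}_{t_n}\circ\cdots\circ\phi^{Y_1}_{t_1}(y)$ is a local diffeomorphism near the origin by the inverse function theorem, so a whole open neighbourhood of $y$ lies in $\mathcal{O}(x_0)$. The complement $M\setminus\mathcal{O}(x_0)$ is a union of other orbits, each open by the same argument, so $\mathcal{O}(x_0)$ is closed. Since $M$ is assumed connected, $\mathcal{O}(x_0)=M$.

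Next, I would combine the orbit conclusion with the invariance hypothesis. For each $X\in V^X$, the relation $\Lie_X\eta=0$ implies $(\phi^X_t)^\ast\eta = \eta$ wherever the flow is defined, and hence $\eta_{\phi^X_t(z)}=\bigl((\phi^X_{-t})^\ast\bigr)_z\eta_z$ for every $z$. Iterating along a chain of flows, if $y=\phi^{X_k}_{t_k}\circ\cdots\circ\phi^{X_1}_{t_1}(x_0)$, then $\eta_y$ is obtained from $\eta_{x_0}$ by a specific composition of pullback isomorphisms on exterior powers of cotangent spaces. Since every $y\in M$ admits such a representation by the previous step, $\eta_{x_0}$ determines $\eta_y$ for every $y\in M$, which is the statement.

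The only subtle point is the openness step, i.e.\ justifying that a Lie algebra of vector fields spanning the tangent space pointwise generates an open orbit; this is essentially the easy case of the Stefan--Sussmann orbit theorem and can be handled directly via the inverse function theorem as sketched above. No appeal to finite-dimensionality of $V^X$ is needed beyond what is used to produce a local frame at the chosen point, so the argument is uniform in whether $V^X$ is taken abstractly or as a Vessiot--Guldberg Lie algebra.
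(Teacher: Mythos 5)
Your argument is correct and follows essentially the same route as the paper: connect any two points by a composition of flows of vector fields in $V^X$ (possible because $\mathcal{D}^{V^X}=\T M$ and $M$ is connected), then transport $\eta$ along that chain using that $\Lie_X\eta=0$ implies flow-invariance. The only difference is that you prove the reachability claim directly via the open--closed orbit argument, whereas the paper simply cites it as a standard fact from control theory.
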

\begin{proof} Let $x\in M$ be a fixed arbitrary point. Since the vector fields in $V^X$ span the distribution $\T M$, it follows from basic control theory \cite{BR84} that $x$ can be connected to any other point $y\in M$ by a local diffeomorphism of the form
\begin{equation}\label{eq:phi_xy}
    \phi_{xy}=\exp(t_1X_{i_1})\circ \exp(t_2X_{i_2})\circ\cdots\circ \exp(t_kX_{i_k})\,, 
\end{equation}
where $k\in \mathbb{N}$ is a natural number or zero, $i_1,\ldots,i_k\in\{1,\ldots,r\}$, the vector fields $ X_1,\ldots,X_r$ form a basis of $V^X$, and $t_1,\ldots,t_k\in \mathbb{R}$. Since $\Lie_X\eta=0$ for every $X\in V$ and due to \eqref{eq:phi_xy}, it follows that $\phi_{xy}^*\eta_y=\eta_x$ and the value of $\eta_y$ is determined by $\eta_x$.
\end{proof}

\begin{proposition}\label{Pr:ClasInv} Given a locally automorphic Lie system $(M,V^X,X)$, there exists a bijection between the space $\mathcal{C}$ of contact forms turning the elements of $V^X$ into Hamiltonian vector fields and the one-chains, $\vartheta$, of the Chevalley--Eilenberg cohomology of $\mathfrak{g}$ isomorphic to $V^X$ such that $\vartheta\wedge (\delta\vartheta)^k$ is a non-zero $(2k+1)$-covector with $\dim M=2k+1$.
\end{proposition}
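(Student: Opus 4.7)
The plan is to reduce the question to the Chevalley--Eilenberg complex of $\mathfrak{g}$ via the rigidity provided by Lemma \ref{Lemma:Uni}. First, I would fix a base point $x_0\in M$ and a Lie algebra isomorphism $\Phi:\mathfrak{g}\to V^X$; the local automorphy $\mathcal{D}^{V^X}=\T M$ implies that the evaluation map $\mathrm{ev}_{x_0}:Y\in V^X\mapsto Y(x_0)\in \T_{x_0}M$ is surjective, and in the dimensional setting implicit in the statement (with $\dim V^X=\dim M=2k+1$) it is in fact an isomorphism, so that $\mathrm{ev}_{x_0}\circ\Phi:\mathfrak{g}\to \T_{x_0}M$ is an isomorphism as well.

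The main step, and the principal obstacle, is to show that every $\eta\in\mathcal{C}$ satisfies $\Lie_Y\eta=0$ for each $Y\in V^X$. Since each $Y\in V^X$ is a contact Hamiltonian vector field with Hamiltonian $h_Y=-i_Y\eta$, one has $\Lie_Y\eta=-(Rh_Y)\eta$, and the task reduces to showing $Rh_Y=0$. My approach would be to exploit the bracket identity $h_{[Y,Z]}=\{h_Y,h_Z\}=Yh_Z+h_ZRh_Y$ together with the finite-dimensionality of the image of $V^X$ under $Y\mapsto h_Y$ in $\Cinfty(M)$; the locally automorphic hypothesis, combined with the Lie-algebra closure of the $h_Y$ under the contact bracket, should force this image to lie inside the space $\Cinfty_g(M)$ of good Hamiltonian functions, whence $Rh_Y=0$ and invariance follows.

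Once invariance is established, Lemma \ref{Lemma:Uni} shows that $\eta$ is determined by $\eta_{x_0}\in \T^*_{x_0}M$. The assignment $\eta\mapsto\vartheta$ given by $\vartheta(\xi):=\eta_{x_0}(\Phi(\xi)(x_0))$ is then a well-defined injection $\mathcal{C}\hookrightarrow\mathfrak{g}^*$. Conversely, given $\vartheta\in\mathfrak{g}^*$, one defines $\eta_{x_0}$ by the same formula and extends it to $M$ by the prescription $\eta_x:=\phi^*_{x_0x}\eta_{x_0}$, where $\phi_{x_0x}$ is a local diffeomorphism of the form \eqref{eq:phi_xy}; by construction the resulting $\eta$ is $V^X$-invariant, and each $Y\in V^X$ is contact Hamiltonian with good Hamiltonian $-\eta(Y)$, placing $\eta\in\mathcal{C}$.

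Finally, I match the non-degeneracy conditions. For $V^X$-invariant $\eta$, Cartan's formula yields $\d\eta(Y,Z)=-\eta([Y,Z])$ for $Y,Z\in V^X$; evaluating at $x_0$ and transporting through $\mathrm{ev}_{x_0}\circ\Phi$ gives $\delta\vartheta(\xi,\mu)=-\vartheta([\xi,\mu])$. Hence $(\eta\wedge(\d\eta)^k)_{x_0}\in\Lambda^{2k+1}\T^*_{x_0}M$ corresponds under the induced exterior-algebra isomorphism to $\vartheta\wedge(\delta\vartheta)^k\in\Lambda^{2k+1}\mathfrak{g}^*$, and, by invariance, non-vanishing at $x_0$ is equivalent to non-vanishing throughout $M$. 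Thus $\eta$ is a contact form if and only if the associated $\vartheta$ satisfies $\vartheta\wedge(\delta\vartheta)^k\neq 0$, completing the bijection; the truly delicate point remains ensuring that every form in $\mathcal{C}$ is automatically $V^X$-invariant.
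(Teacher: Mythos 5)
Your overall skeleton coincides with the paper's: rigidity of invariant forms via Lemma \ref{Lemma:Uni}, evaluation at a base point (equivalently, at the neutral element after transferring to the locally diffeomorphic automorphic system on $G$), Cartan's formula identifying $\d\eta$ on $V^X$ with the Chevalley--Eilenberg differential $\delta$, and the observation that non-vanishing of $\eta\wedge(\d\eta)^k$ at one point propagates by invariance. However, the step you yourself single out as the principal obstacle --- that every $\eta\in\mathcal{C}$ automatically satisfies $\Lie_Y\eta=0$, i.e.\ $Rh_Y=0$, for all $Y\in V^X$ --- is not proven in your proposal, and the route you sketch for it cannot work: it is simply false that local automorphy plus closure of $\{h_Y\}$ under the contact bracket forces the Hamiltonian functions into $\Cinfty_g(M)$. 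The paper's own non-conservative example in Section \ref{Se:Examples} provides a counterexample: on $\{(q,p,z)\in\R^3 : p\neq 0\}$ with $\eta=\d z-p\,\d q$, the vector fields $X_1=\partial/\partial z$, $X_2=\partial/\partial q$, $X_3=z\,\partial/\partial q-p^2\,\partial/\partial p$ span $\T M$ and close a three-dimensional Lie algebra of contact Hamiltonian vector fields, yet $h_3=pz$ satisfies $Rh_3=p\neq 0$, so $\Lie_{X_3}\eta=-p\,\eta\neq 0$. Hence $\mathcal{C}$, read literally as ``contact forms turning the elements of $V^X$ into Hamiltonian vector fields,'' contains non-invariant forms, and your bijection (which is built on evaluation at a single point) breaks down for them.

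The paper sidesteps this by implicitly taking $\mathcal{C}$ to consist of the \emph{invariant} contact forms: its proof speaks of ``the invariant contact form for $X$'' being mapped to a left-invariant contact form on $G$, and the subsequent classification is precisely of left-invariant contact forms. With that reading, your remaining steps are sound and reproduce the paper's argument; you should therefore either add the invariance hypothesis to the definition of $\mathcal{C}$ or abandon the attempt to derive it. Two smaller points: for $V^X$-invariant $\eta$ and $Y,Z\in V^X$ one gets $\d\eta(Y,Z)=+\eta([Y,Z])$ (the sign $-\eta([\cdot,\cdot])$ holds for left-invariant forms evaluated on left-invariant fields, whereas $V^X$ corresponds to right-invariant fields); this only flips $\delta\vartheta$ by a sign and does not affect the non-degeneracy condition. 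Also, your converse construction $\eta_x:=\phi^*_{x_0x}\eta_{x_0}$ requires checking independence of the chosen word \eqref{eq:phi_xy}; this is cleanest if, as in the paper, one transfers to the Lie group and takes the left-invariant extension.
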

\begin{proof} By Lemma \ref{Lemma:Uni} and our assumptions, a contact form on $M$ is determined by its value at one point $x\in M$. Every locally automorphic Lie system $(M,V^X,X)$ is locally diffeomorphic to an automorphic Lie system \cite{GLMV19}, namely, in our case, a Lie system on a Lie group $G$ with Lie algebra $\mathfrak{g}$ so that
\begin{equation}\label{eq:aut}
    \frac{\d g}{\d t}=\sum_{\alpha=1}^r b_\alpha(t)X_\alpha^R(g)\,,\qquad \forall g\in G\,,
\end{equation}
for a basis of right-invariant vector fields $X^R_1,\dotsc,X^R_r$ on $G$ and some functions $b_1(t),\ldots,b_r(t)$. Since $V^X$ is the smallest Lie algebra containing the vector fields $\{X_t\}_{t\in \mathbb{R}}$, as $\mathcal{D}^{V^X} = \T N$, and \eqref{eq:aut} is locally diffeomorphic to $X$, it follows that the smallest Lie algebra of \eqref{eq:aut} is $\langle X_1^R,\dotsc,X_r^R\rangle$, which spans $\T G$. The local diffeomorphism maps the invariant contact form for $X$ to a left-invariant contact form $\eta^L$ for \eqref{eq:aut}. As $\eta^L$ is a left-invariant contact form, then $\eta^L\wedge (\d\eta^L)^k$ is a volume form on $G$ for $2k+1=\dim G=\dim V^X=\dim M$. Moreover,
$$
\d\eta^L(X^L_i,X^L_j)=-\eta^L([X^L_i,X^L_j])\,,\qquad i,j=1,\ldots,r\,.
$$
Define $\delta:\mathfrak{g}^*\rightarrow \bigwedge^2\mathfrak{g}^*$ to be minus the transpose of $[\cdot,\cdot]:\bigwedge^2\mathfrak{g}\rightarrow \mathfrak{g}$. On the other hand, $\eta^L\wedge(\d\eta^L)^k$ being a volume form amounts to the fact that its value at the neutral element $e$ is different from zero. But $\eta^L_e\wedge(\d\eta^L)^k_e = \eta_e^L\wedge(\delta\eta_e^L)^k$.
\end{proof}

Note that the conditions in Proposition \ref{Pr:ClasInv} can be checked for every automorphic Lie system on a three-dimensional Lie group with a smallest Lie algebra given by the right-invariant vector fields on the Lie group, as their Lie algebras are completely classified. It was proved in \cite{FJ15,LW20} that every real non-abelian three-dimensional Lie algebra is isomorphic to $(E,[\cdot,\cdot])$, where $E$ is a three-dimensional vector space and the Lie bracket is given on a canonical basis $\{e_1,e_2,e_3\}$ of $E$ by one of the cases in Figure \ref{fig:Lie-algebra-classification}. Note that it is not appropriate in our classification to relate contact Lie systems on three-dimensional Lie groups to Hamiltonian Lie systems on four-dimensional manifolds for evident reasons, e.g. this approach just makes the problem much harder to solve as it demands to analyse a problem on a four-dimensional Lie group and to study how the latter is related to the solution of our initial problem. 


Let us now classify left-invariant contact forms for automorphic Lie systems on  three particular types of three-dimensional Lie groups, namely those with Lie algebras $\mathfrak{sl}_2$, $\mathfrak{r}_{3,\lambda}$, and $\mathfrak{r}'_{3,\lambda\neq 0}$. More specifically, we will study the conditions required for an element $\lambda_1 e^1 + \lambda_2 e^2 + \lambda_3 e^3$, where $\{e^1,e^2,e^3\}$ is the dual basis to the basis $\{e_1,e_2,e_3\}$ of $T_eG$ and $\lambda_1,\lambda_2,\lambda_3\in \mathbb{R}$, to be the value of a left-invariant contact form on a three-dimensional Lie group at the neutral element.

$\bullet$ Case $\mathfrak{sl}_2$: 
The corresponding Lie bracket is an antisymmetric bilinear function that can be understood univocally as a mapping $[\cdot,\cdot]:\mathfrak{sl}_2\wedge \mathfrak{sl}_2\rightarrow \mathfrak{sl}_2$. Defining the map $\delta:\mathfrak{sl}_2^*\rightarrow \mathfrak{sl}_2^*\wedge\mathfrak{sl}_2^*$ as $\delta = -[\cdot,\cdot]^T$, we have
$$
    \delta(e^1) = -e^1([\cdot,\cdot])=   \frac{1}{2}e^3\wedge e^2\,,\quad
    \delta(e^2) = -e^2([\cdot,\cdot]) = -\frac{1}{2}e^1\wedge e^2\,,\quad
    \delta(e^3) = -e^3([\cdot,\cdot]) = \frac{1}{2}e^1\wedge e^3\,,
$$
and thus,
$$ \delta = \frac{1}{2}e_1\otimes e^3\wedge e^2 - \frac{1}{2}e_2\otimes e^1\wedge e^2 + \frac{1}{2}e_3\otimes e^1\wedge e^3\,. $$
In this case, $k=1$ and
\begin{multline*}
    0 \neq \delta(\lambda_1e^1 + \lambda_2e^2 + \lambda_3e^3)\wedge(\lambda_1e^1 + \lambda_2e^2 + \lambda_3e^3)\\= \frac 12\left( \lambda_1 e^3\wedge e^2 - \lambda_2e^1\wedge e^2 +  \lambda_3 e^1\wedge e^3 \right)\wedge(\lambda_1e^1 + \lambda_2e^2 + \lambda_3e^3)
    \\= -\frac{1}{2}\left(\lambda_1^2 + 2\lambda_2\lambda_3\right) e^1\wedge e^2\wedge e^3\,.
\end{multline*}
Then, the differential one-form $\eta^L=\sum_{\alpha=1}^3\lambda_\alpha\eta^L_\alpha$ on $\SL(2,\mathbb{R})$, where $\eta^L_\alpha(e)=e_\alpha$ for $\alpha=1,2,3$, is a contact form if and only if $\lambda_1^2 + 2\lambda_2\lambda_3\neq 0$.
    
$\bullet$ Case $\mathfrak{r}_{3,\lambda}$, with  $\lambda\in (-1,1)$. As previously, define the map $\delta:\mathfrak{r}_{3,\lambda}^*\rightarrow \mathfrak{r}_{3,\lambda}^*\wedge\mathfrak{r}_{3,\lambda}^*$ as $\delta = -[\cdot,\cdot]^T$. Then,
$$
    \delta(e^1) = -e^1([\cdot,\cdot])=   \frac{1}{2}e^1\wedge e^3\,,\quad
    \delta(e^2) = -e^2([\cdot,\cdot]) = -\frac{\lambda}{2}e^3\wedge e^2\,,\quad
    \delta(e^3) = -e^3([\cdot,\cdot]) = 0\,,
$$
and thus,
$$
    \delta = \frac{1}{2}e_1\otimes e^1\wedge e^3 - \frac{1}{2}\lambda e_2\otimes e^3\wedge e^2\,.
$$
Therefore,
\begin{multline*}
    0 \neq \delta(\lambda_1e^1 + \lambda_2\lambda e^2 + \lambda_3e^3)\wedge(\lambda_1e^1 + \lambda_2e^2 + \lambda_3e^3)\\= \left( \frac{\lambda_1}{2} e^1\wedge e^3 - \frac{\lambda_2\lambda}{2}e^3\wedge e^2\right)\wedge(\lambda_1e^1 + \lambda_2e^2 + \lambda_3e^3)\\
    = \frac{1}{2}\lambda_1\lambda_2(1-\lambda)e^1\wedge e^2\wedge e^3\,.
\end{multline*}
Then, the left-invariant contact forms on a Lie group with Lie algebra isomorphic to $\mathfrak{r}_{3,\lambda}$  are characterised by the condition $\lambda_1\lambda_2\neq 0$.

$\bullet$ Case $\mathfrak{r}'_{3,\lambda\neq 0}$. Defining the map $\delta:\mathfrak{r}_{3,\lambda\neq 0}^{\prime \,*}\rightarrow \mathfrak{r}_{3,\lambda\neq 0}^{\prime \,*}\wedge\mathfrak{r}_{3,\lambda\neq 0}^{\prime \,*}$ as $\delta = -[\cdot,\cdot]^T$, we have
$$
    \delta(e^1) = \frac{\lambda}{2}e^1\wedge e^3 - \frac{1}{2}e^3\wedge e^2\,,\quad
    \delta(e^2) = -\frac{1}{2}e^1\wedge e^3 - \frac{\lambda}{2}e^3\wedge e^2\,,\quad
    \delta(e^3) = 0\,,
$$
and thus,
$$
    \delta = \frac{\lambda}{2} e_1\otimes e^1\wedge e^3 - \frac{1}{2}e_1\otimes e^3\wedge e^2 - \frac{1}{2}e_2\otimes e^1\wedge e^3 - \frac{\lambda}{2}e_2\otimes e^3\wedge e^2\,.
$$
In this case,
\begin{multline*}
    0 \neq \delta(\lambda_1e^1 + \lambda_2e^2 + \lambda_3e^3)\wedge(\lambda_1e^1 + \lambda_2e^2 + \lambda_3e^3)\\
    = \left( \frac{\lambda\lambda_1}{2} e^1\wedge e^3 - \frac{\lambda_1}{2}e^3\wedge e^2 - \frac{\lambda_2}{2}e^1\wedge e^3 - \frac{\lambda\lambda_2}2e^3\wedge e^2 \right)\wedge(\lambda_1e^1 + \lambda_2e^2 + \lambda_3e^3)\\
    = \frac{1}{2}\left( \lambda_1^2 + \lambda_2^2\right) e^1\wedge e^2\wedge e^3\,.
\end{multline*}
Then, the differential one-form $\eta^L = \sum_{\alpha=1}^3\lambda_\alpha\eta^L_\alpha$ on each Lie group with Lie algebra $\mathfrak{r}_{3,\lambda\neq 0}^{\prime }$, where $\eta^L_\alpha(e)=e_\alpha$, with $\alpha=1,2,3$, is a contact form if and only if $\lambda_1^2 + \lambda_2^2 > 0$.

\bigskip

The other cases can be computed similarly, as summarised in the following theorem.

\begin{theorem}
    Let $G$ be a Lie group with a three-dimensional non-abelian Lie algebra $\mathfrak{g}$. Then, the left-invariant one-form $\eta^L = \sum_{\alpha=1}^3\lambda_\alpha\eta^L_\alpha$ on $G$, where $\eta^L_\alpha(e)=e^\alpha$ for $\alpha=1,2,3$ and $\lambda_i\in\R$, is a contact form if and only if the condition for the value of $\eta^L(e)$ in Figure \ref{fig:Lie-algebra-classification} for the Lie algebra $\mathfrak{g}$ of $G$ is satisfied.
    
\begin{figure}[ht]
    \centering
    \begin{tabular}{|c|c|c|c|c|}
        \hline
        Lie algebra & $[e_1,e_2]$ & $[e_1,e_3]$ & $[e_3,e_2]$ & Contact condition \\
        \hline\hline
        $\mathfrak{sl}_2$ & $e_2$ & $-e_3$ & $-e_1$ & $\lambda_1^2 + 2\lambda_2\lambda_3 > 0$ \\
        \hline
        $\mathfrak{su}_2$ & $e_3$ & $-e_2$ & $-e_1$ & $\lambda_1^2 + \lambda_2^2 + \lambda_3^2 > 0$ \\
        \hline
        $\mathfrak{h}_3$ & $e_3$ & $0$ & $0$ & $\lambda_3 \neq 0$ \\
        \hline
        $\mathfrak{r}'_{3,0}$ & $-e_3$ & $e_2$ & $0$ & $\lambda_2^2 + \lambda_3^2 > 0$ \\
        \hline
        $\mathfrak{r}_{3,-1}$ & $e_2$ & $-e_3$ & $0$ & $\lambda_2\lambda_3 \neq 0$ \\
        \hline
        $\mathfrak{r}_{3,1}$ & $e_2$ & $e_3$ & $0$ & $\nexists$ \\
        \hline
        $\mathfrak{r}_{3}$ & $0$ & $-e_1$ & $e_1 + e_2$ & $\lambda_1 \neq 0$ \\
        \hline
        $\mathfrak{r}_{3,\lambda}$ & $0$ & $-e_1$ & $\lambda e_2$ & $ \lambda_1\lambda_2\neq 0$ \\
        \hline
        $\mathfrak{r}'_{3,\lambda\neq 0}$ & $0$ & $e_2-\lambda e_1$ & $\lambda e_2 + e_1$ & $\lambda_1^2 + \lambda_2^2 > 0$ \\
        \hline
    \end{tabular}
    \caption{Classification of left-invariant contact forms on   non-abelian three-dimensional Lie algebras and left-invariant contact forms on their associated Lie groups. Note that $\lambda\in (-1,1).$}
    \label{fig:Lie-algebra-classification}
\end{figure}
\end{theorem}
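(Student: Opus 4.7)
The plan is to apply Proposition~\ref{Pr:ClasInv}. Since $\dim G = 3 = 2k+1$ forces $k=1$, the left-invariant one-form $\eta^L = \sum_\alpha \lambda_\alpha \eta^L_\alpha$ defines a contact form on $G$ if and only if the $3$-covector $\eta^L_e \wedge \delta\eta^L_e \in \bigwedge^3\mathfrak{g}^*$ is non-zero at the neutral element $e$, where $\delta\colon \mathfrak{g}^* \to \bigwedge^2\mathfrak{g}^*$ is the Chevalley--Eilenberg coboundary given by
$$ \delta(e^\alpha)(e_i,e_j) = -e^\alpha([e_i,e_j])\,. $$
Hence the problem reduces to a purely algebraic check at $e$, one Lie algebra at a time, and the theorem's statement is just a case-by-case enumeration of when this check succeeds.

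The three cases $\mathfrak{sl}_2$, $\mathfrak{r}_{3,\lambda}$ and $\mathfrak{r}'_{3,\lambda\neq 0}$ have been carried out explicitly in the preceding discussion. The six remaining algebras in Figure~\ref{fig:Lie-algebra-classification}, namely $\mathfrak{su}_2$, $\mathfrak{h}_3$, $\mathfrak{r}'_{3,0}$, $\mathfrak{r}_{3,-1}$, $\mathfrak{r}_{3,1}$ and $\mathfrak{r}_3$, are all handled by repeating the same three-step recipe: read the structure constants off the table, use them to compute $\delta e^1$, $\delta e^2$ and $\delta e^3$, then expand $\eta^L_e \wedge \delta\eta^L_e$ and extract the coefficient of $e^1 \wedge e^2 \wedge e^3$. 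In every case this coefficient is a homogeneous quadratic polynomial in $(\lambda_1,\lambda_2,\lambda_3)$, and the contact condition in Figure~\ref{fig:Lie-algebra-classification} is precisely its non-vanishing locus.

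The one case that calls for a separate remark is $\mathfrak{r}_{3,1}$, whose brackets $[e_1,e_2] = e_2$ and $[e_1,e_3] = e_3$ yield $\delta e^1 = 0$ together with $\delta e^2 \in \langle e^1 \wedge e^2 \rangle$ and $\delta e^3 \in \langle e^1 \wedge e^3 \rangle$. Every summand of $\eta^L_e \wedge \delta\eta^L_e$ then contains a repeated factor, and the surviving cross terms cancel, so the whole expression vanishes identically in $(\lambda_1,\lambda_2,\lambda_3)$; this justifies the $\nexists$ entry. The rest of the argument is entirely mechanical, and the only genuine obstacle is bookkeeping: one must keep the sign and antisymmetrisation conventions consistent with those used in the three worked-out cases, so that the polynomials obtained literally coincide with the conditions displayed in the last column of Figure~\ref{fig:Lie-algebra-classification}.
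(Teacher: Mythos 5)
Your proposal is correct and follows exactly the paper's own route: invoke Proposition \ref{Pr:ClasInv} to reduce the question to the non-vanishing of $\eta^L_e\wedge\delta\eta^L_e$ in $\bigwedge^3\mathfrak{g}^*$, cite the three cases already worked out, and note that the remaining algebras are handled by the same mechanical computation (the paper literally says ``the other cases can be computed similarly''). Your explicit cancellation argument for $\mathfrak{r}_{3,1}$ is a correct and welcome addition justifying the $\nexists$ entry, which the paper leaves implicit.
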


The following proposition takes a deeper look at the properties of left-invariant contact forms on Lie groups and show some of their properties. In particular, it shows that the space of left-invariant contact forms on a Lie group must be invariant under the natural action of ${\rm Aut}(G)$, namely the space of Lie group automorphism of $G$, on $\mathfrak{g}^*$. Recall that $\mathrm{Aut}(G)$ acts on $G$, which gives rise to a Lie group action $(f_g,v)\in \mathrm{Aut}(G)\times\mathfrak{g}\mapsto \T_e f_g(v)\in\mathfrak{g}$ and its dual one.

\begin{proposition}
    Let ${\rm Aut}(G)$ be the Lie group of Lie group automorphisms of $G$ and let $\varphi:{\rm Aut}(G)\times \mathfrak{g}\rightarrow \mathfrak{g}$ be its associated action. Then, the space $\mathcal{C}$ of left-invariant contact forms on $G$ is invariant relative to the action of ${\rm Aut}(G)$ on $\mathfrak{g}^*$.
\end{proposition}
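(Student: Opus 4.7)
The plan is to identify each element of $\mathcal{C}$ with its value at the neutral element in $\mathfrak{g}^*$, then show that pullback by any $f\in\mathrm{Aut}(G)$ realises the dual action of $\varphi$ on $\mathfrak{g}^*$ while preserving both left-invariance and the contact condition. Because every left-invariant one-form on $G$ is determined by its value at $e$, establishing these two preservation properties is enough.

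First I would verify that $f^*\eta^L$ is left-invariant whenever $\eta^L$ is. The key identity is $f\circ L_g = L_{f(g)}\circ f$, which follows directly from $f$ being a group homomorphism. Pulling back,
$$ L_g^*(f^*\eta^L) = (f\circ L_g)^*\eta^L = f^*(L_{f(g)}^*\eta^L) = f^*\eta^L, $$
where the last equality uses left-invariance of $\eta^L$. Moreover, since $f(e)=e$, the value of $f^*\eta^L$ at $e$ is
$$ (f^*\eta^L)_e = \eta^L_{f(e)}\circ \T_e f = \eta^L_e\circ \T_e f, $$
which is precisely the transpose of $\T_e f$ applied to $\eta^L_e$, i.e.\ the dual action on $\mathfrak{g}^*$ induced by $\varphi$ evaluated on $\eta^L_e$.

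Next I would check that the contact condition survives the pullback. Since $f$ is a diffeomorphism of $G$, its pullback sends volume forms to volume forms. As $f^*$ commutes with $\d$ and with $\wedge$,
$$ (f^*\eta^L)\wedge (\d f^*\eta^L)^n = f^*\bigl(\eta^L\wedge (\d\eta^L)^n\bigr), $$
and the right-hand side is a volume form on $G$. Therefore $f^*\eta^L$ is again a (left-invariant) contact form, and its value at $e$ lies in $\mathcal{C}$.

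Combining both observations, the image of $\eta^L_e\in\mathcal{C}$ under the dual action of any $f\in\mathrm{Aut}(G)$ on $\mathfrak{g}^*$ is $(f^*\eta^L)_e$, which again belongs to $\mathcal{C}$. There is no real obstacle here: the argument is essentially the naturality of pullback under automorphisms, combined with the diffeomorphism invariance of the volume form characterising the contact condition. The only point one must be slightly careful about is the convention relating $\varphi$ and its dual action on $\mathfrak{g}^*$, but this is fixed by the computation of $(f^*\eta^L)_e$ above.
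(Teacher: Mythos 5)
Your proof is correct and follows essentially the same strategy as the paper: identify a left-invariant form with its value at $e$, show that (pull-back by) an automorphism preserves left-invariance via $f\circ L_g = L_{f(g)}\circ f$, and use naturality of $\d$ and $\wedge$ together with diffeomorphism-invariance of volume forms to preserve the contact condition. If anything, your argument is more faithful to the statement than the paper's own proof, which only carries out the computation for the inner automorphisms ${\rm Ad}_h$ (and hence literally concludes invariance along coadjoint orbits); your version handles an arbitrary $f\in{\rm Aut}(G)$ directly, and your closing remark about the convention for the induced action on $\mathfrak{g}^*$ is the right caveat.
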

\begin{proof} Let us prove that every ${\rm Ad}_g$, with $g\in G$, maps left-invariant vector fields on $G$ into left-invariant vector fields on $G$. Given  a left-invariant vector field $X^L$ on $G$ with $X^L(e)=\xi$, one has
$$
X^L(g)=\frac{\d}{\d t}\bigg|_{t=0}g\exp(t\xi)\,,\qquad \forall g\in G\,,
$$
and then
$$
{\rm Ad}_{h*g}[X^L(g)]=\frac{\d}{\d t}\bigg|_{t=0}hg\exp(t\xi)h^{-1}=\frac{\d}{\d t}\bigg|_{t=0}hgh^{-1}\exp(t{\rm Ad}_{h*e}(\xi))=Y^L(hgh^{-1})\,,
$$
for the left-invariant vector field $Y^L$ on $G$ such that $ Y^L(e)={\rm Ad}_{g*e}(\xi)$. As left-invariant one-forms are dual to left-invariant vector fields,  ${\rm Ad}_{g*}\eta^L$ is a left-invariant one-form on $G$ for every $g\in G$. Hence, if $\eta^L$ is a left-invariant contact form on $G$ and $\dim G=2k+1$, one has that
$$
0\neq {\rm Ad}_{g*}[(\d\eta^L)^k\wedge \eta^L]=[\d {\rm Ad}_{g*}\eta^L]^k\wedge {\rm Ad}_{g*}\eta^L\,,\qquad \forall g\in G\,.
$$
And ${\rm Ad}_{g*}\eta^L$ is a new contact form.  Moreover, the value of $\eta_L(e)$ at the neutral element $e$ of $G$ is such that $[{\rm Ad}_g^*\eta^L]_e={\rm Ad}_{g*e}^T[\eta^L_e]$. Hence, if an element of $\mu\in \mathfrak{g}^*$ determines the value at $e$ of a left-invariant contact form, all left-invariant one-forms with values at $e$ within the coadjoint orbit of $\mu$ in $\mathfrak{g}^*$ give rise to contact forms.
\end{proof}

It is worth noting that since the tangent map at $e$ in $G$ to every element of $\mathrm{Aut}(G)$ is an element of $\mathrm{Aut}(\mathfrak{g})$ and vice versa, where ${\rm Aut}(\mathfrak{g})$ stands for the group of Lie algebra automorphisms of $\mathfrak{g}$, the action of $\mathrm{Aut}(G)$ on $\mathfrak{g}^*$ is indeed the action of $\mathrm{Aut}(\mathfrak{g})$ on $\mathfrak{g}^*$.




\section{Examples}\label{Se:Examples}


\subsection{The Brockett control system}
Let us consider a second example of contact Lie system. The Brockett control system \cite{Ra11} in $\R^3$ is given by
\begin{equation}\label{eq:Brocket-system}
\begin{dcases}
        \frac{\d x}{\d t} = b_1(t)\,,\\
        \frac{\d y}{\d t} = b_2(t)\,,\\
        \frac{\d z}{\d t} = b_2(t)x - b_1(t)y\,,
    \end{dcases}
  \end{equation}
where $b_1(t)$ and $b_2(t)$ are arbitrary $t$-dependent functions. System \eqref{eq:Brocket-system} is associated with the $t$-dependent vector field
$$ X = b_1(t)X_1 + b_2(t)X_2\,, $$
where
$$ X_1 = \parder{}{x} - y\parder{}{z}\,,\quad X_2 = \parder{}{y} + x\parder{}{z}\,, $$
along with the vector field $X_3 = 2\dparder{}{z}$, span a three-dimensional Vessiot--Guldberg Lie algebra $V = \langle X_1,X_2,X_3\rangle$ with commutation relations
$$ [X_1,X_2] = X_3\,,\quad [X_1,X_3] = 0\,,\quad [X_2,X_3] = 0\,. $$
As in Example \ref{ex:simple-control}, the vector space $\langle X_1,X_2,X_3\rangle$ is a Vessiot--Guldberg Lie algebra isomorphic to the three-dimensional Heisenberg Lie algebra $\mathfrak{h}_3$ (see Figure \ref{fig:Lie-algebra-classification}).

The Lie algebra of Lie symmetries of $V$,  i.e. the vector fields on $\mathbb{R}^5$ commuting with all the elements of $V^Q$, is spanned by the vector fields
$$ Y_1 = \parder{}{x} + y\parder{}{z}\,,\quad Y_2 = \parder{}{y} - x\parder{}{z}\,,\quad Y_3 = 2\parder{}{z}\,, $$
which have commutation relations
$$ [Y_1,Y_2] = -Y_3\,,\quad [Y_1,Y_3] = 0\,,\quad [Y_2,Y_3] = 0\, $$
Let us denote the Lie algebra of Lie symmetries of $V$ by $\Sym(V)$. The dual base of one-forms to $Y_1,Y_2,Y_3$ is
$$ \eta_1 = \d x\,,\quad \eta_2 = \d y\,,\quad \eta_3 = \frac{1}{2}(\d z - y\d x + x\d y)\,. $$
It is clear that $\d\eta_3 = \d x\wedge\d y$. Since $\eta_3\wedge\d\eta_3 = \frac{1}{2}\d x\wedge\d y\wedge\d z\neq 0$, we have that $\eta_3$ is a contact form in $\R^3$.

A short calculation shows that $X_1,X_2,X_3$ are contact Hamiltonian vector fields with respect to the contact structure given by $\eta_3$ with Hamiltonian functions
$$ h_1 = y\,,\quad h_2 = -x\,,\quad h_3 = -1 $$
respectively. Hence, $\langle X_1,X_2,X_3\rangle$ are also Hamiltonian vector fields relative to $(\R^3,\eta_3,X)$. Thus, the triple $(\R^3,\eta_3,X)$ is a contact Lie system with a Vessiot--Guldberg Lie algebra $\langle X_1,X_2,X_3\rangle\simeq \mathfrak{h}_3$. Moreover, the Reeb vector field is given by $Y_3$. 

The projection of the original Hamiltonian contact system (\ref{eq:Brocket-system}) onto $\mathbb{R}^2$ reads
\begin{equation}\label{Eq:ProjCon}
\frac{\d x}{\d t}=b_1(t)\,,\qquad \frac{\d y}{\d t}=b_2(t)\,,
\end{equation}
which, as foreseen by Proposition \ref{prop:conservative-contact-project-symplectic}, is Hamiltonian relative to the symplectic form $\Omega=\d x\wedge \d y$ that is determined by the condition $\d\eta=\pi^*\Omega$ for $\pi:(x,y,z)\in\mathbb{R}^3\mapsto(x,y)\in \mathbb{R}^2$. It is worth noting that the Liouville theorem for $\Omega$ on $\mathbb{R}^2$ tells us that the evolution of \eqref{Eq:ProjCon} on $\mathbb{R}^2$ leaves invariant the area of any surface, but since $\{x,y\}$ are Darboux coordinates for $\Omega$, the non-squeezing theorem also says that given a ball in $\mathbb{R}^2$ centred at a point of radius $r$, then if the image of such a ball under the dynamics of \eqref{Eq:ProjCon} is inside a ball in $\mathbb{R}^2$ of radius $R$ with center matching the center of the original ball, then $R\geq r$. In fact, the evolution of \eqref{Eq:ProjCon} is given by
$$
    x' = x+\int_0^tb_1(t')\d t'\,,\qquad y' = y+\int_0^tb_2(t')\d t'\,.
$$
Then, the image of a ball with center at a point $(x,y)$ at the time $t_0=0$ evolved relative to the evolution given by \eqref{Eq:ProjCon} until $t$ is a new ball with center at $(x',y')$ and the same radius.

It is worth noting that, by the Liouville theorem for conservative contact Lie systems, one has that the volume of a space of solutions in $\mathbb{R}^3$ does not vary on time. Hence, \eqref{eq:Brocket-system} is then a Hamiltonian system relative to a multisymplectic form $\Omega_\eta$, and therefore the  methods developed in \cite{GLRRV22} can be applied to study its properties.

\subsection{The Schwarz equation}

Consider a Schwarz equation \cite{Ber2007, Ovs2009} of the form
\begin{equation}\label{eq:Schwarz-equation}
    \frac{\d^3 x}{\d t^3} = \frac{3}{2}\left(\frac{\d x}{\d t}\right)^{-1}\left(\frac{\d^2 x}{\d t^2}\right)^2 + 2 b_1(t)\frac{\d x}{\d t}\,,
\end{equation}
where $b_1(t)$ is any non-constant $t$-dependent function. Equation \eqref{eq:Schwarz-equation} is of great relevance since it appears when dealing with Ermakov systems \cite{Lea2008} and the Schwarzian derivative \cite{Car2014}.

It is well known that equation \eqref{eq:Schwarz-equation} is a {\it higher-order Lie system} \cite{Car2012}, i.e. the associated first-order system
\begin{equation}\label{eq:ScFirst}
\frac{\d x}{\d t} = v\,,\quad \frac{\d v}{\d t} = a\,,\quad \frac{\d a}{\d t} = \frac{3}{2}\frac{a^2}{v} + 2b_1(t)v\,, 
\end{equation}
is a Lie system. Indeed, the latter system is associated with the $t$-dependent vector field $X = X_3 + b_1(t) X_1$ defined on $\mathcal{O} = \{ (x,v,a)\in\R^3\mid v\neq 0 \}$, where
$$ X_1 = 2v\parder{}{a}\,,\quad X_2 = v\parder{}{v} + 2a\parder{}{a}\,,\quad X_3 = v\parder{}{x} + a\parder{}{v} + \frac{3}{2}\frac{a^2}{v}\parder{}{a}\,. $$
These vector fields satisfy the commutation relations
$$ [X_1,X_2] = X_1\,,\quad [X_1,X_3] = 2X_2\,,\quad [X_2,X_3] = X_3\,, $$
and thus span a three-dimensional Vessiot--Guldberg Lie algebra $V = \langle X_1,X_2,X_3\rangle\simeq\mathfrak{sl}_2$.

The Schwarz equation, when written as a first-order system (\ref{eq:ScFirst}), i.e. the hereafter called {\it Schwarz system}, admits a Lie algebra of Lie symmetries, denoted by $\mathrm{Sym}(V)$, spanned by the vector fields (see \cite{LS13} for details) 
$$ Y_1 = \parder{}{x}\,,\quad Y_2 = x\parder{}{x} + v\parder{}{v} + a\parder{}{a}\,,\quad Y_3 = x^2\parder{}{x} + 2vx\parder{}{v} + 2(ax + v^2)\parder{}{a}\,. $$
These Lie symmetries satisfy the commutation relations
$$ [Y_1,Y_2] = Y_1\,,\quad [Y_1,Y_3] = 2Y_2\,,\quad [Y_2,Y_3] = Y_3\,, $$
and thus $V \simeq\Sym(V)$. The basis $\{Y_1,Y_2,Y_3\}$ admits a dual basis of one-forms $\{\eta_1,\eta_2,\eta_3\}$ given by
$$ \eta_1 = \d x - \frac{x(ax + 2v^2)}{2v^3}\d v + \frac{x^2}{2v^2}\d a\,,\quad \eta_2 = \frac{ax + v^2}{v^3}\d v - \frac{x}{v^2}\d a\,,\quad \eta_3 = -\frac{a}{2v^3}\d v + \frac{1}{2v^2}\d a\,. $$
Since
$$ \eta_2\wedge\d\eta_2 = \frac{1}{v^3}\d x\wedge\d v\wedge\d a\,, $$
we have that $(\mathcal{O},\eta_2)$ is a contact manifold. The vector fields $X_1,X_2,X_3$ are contact Hamiltonian vector fields with Hamiltonian functions
$$ h_1 = \frac{2x}{v}\,,\quad h_2 = \frac{ax - v^2}{v^2}\,,\quad h_3 = \frac{a(ax-2v^2)}{2v^3}\,, $$
respectively. Hence, $V$ consists of Hamiltonian vector fields relative to $(\mathcal{O},\eta_2)$. Thus, $(\mathcal{O},\eta_2, X)$ becomes a contact Lie system and its Reeb vector field is $Y_2$. 

Coordinates $\{x,v,a\}$ are not Darboux coordinates. Consider a new coordinate system on $\mathcal{O}$ given by
$$ q = \frac{a}{v}\,,\qquad p = \frac{x}{v}\,,\qquad z = \ln v\,. $$
Using these coordinates, $\eta_2 = \d z - p\d q$, we obtain that the Reeb vector field $Y_2$ becomes $\tparder{}{z}$, and
$$ X_1 = 2\parder{}{q}\,,\qquad X_2 = q\parder{}{q} - p\parder{}{p} + \parder{}{z}\,,\qquad X_3 = \frac{q^2}{2}\parder{}{q} + (1-pq)\parder{}{p} + q\parder{}{z}\,. $$
In Darboux coordinates $\{q,p,z\}$, the Lie symmetries $Y_1,Y_2,Y_3$ read
$$ Y_1 = \frac{1}{e^z}\parder{}{p}\,,\qquad Y_2 = \parder{}{z}\,,\qquad Y_3 = e^z\left( 2\parder{}{q} - p^2\parder{}{p} + 2p\parder{}{z} \right)\,. $$
The vector fields $X_1,X_2,X_3$ have Hamiltonian functions 
$$ h_1 = 2p\,,\qquad h_2 = pq-1\,,\qquad h_3 = \frac{1}{2}q^2 p - q\,, $$
respectively, in the given Darboux coordinates. Moreover,
$$ X = X_3 + b_1(t)X_1 = \left(\frac{q^2}{2} + 2b_1(t)\right)\parder{}{q} + (1-pq)\parder{}{p} + q\parder{}{z}\,, $$
defining the system of ordinary differential equations
\begin{equation}\label{eq:schwarz-darboux}
    \begin{dcases}
        \frac{\d q}{\d t} = \frac{q^2}{2} + 2b_1(t)\,,\\
        \frac{\d p}{\d t} = 1 - pq\,,\\
        \frac{\d z}{\d t} = q\,.
    \end{dcases}
\end{equation}
The phase portrait of system \eqref{eq:schwarz-darboux} is depicted in Figure \ref{fig:Schwarz-darboux}. It is a well-known result in contact dynamics \cite{Gas2019,DeLeo2019b} that the evolution of the Hamiltonian function along a solution is given by
$$ \Lie_{X_h}h = -(\Lie_R h)h\,, $$
where $R$ denotes the Reeb vector field. Since our Reeb vector field is $Y_2 = \tparder{}{z}$ and the Hamiltonian functions $h_1,h_2,h_3$ do not depend on the coordinate $z$, we have that our system preserves the energy along the solutions. Then, it is conservative.

\begin{figure}[ht]
    \centering
    \includegraphics[width=0.3\textwidth]{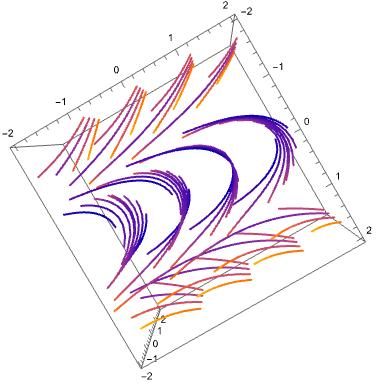}
    \includegraphics[width=0.3\textwidth]{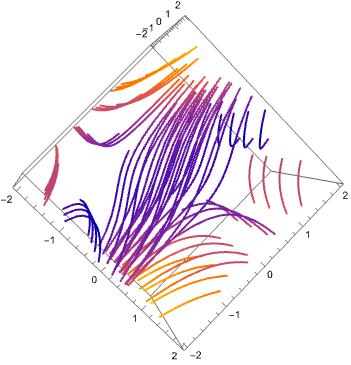}
    \includegraphics[width=0.3\textwidth]{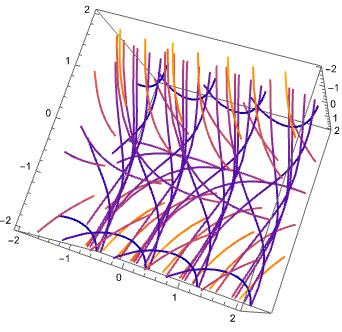}
    \caption{Phase portrait of system \eqref{eq:schwarz-darboux} from three different perspectives.}
    \label{fig:Schwarz-darboux}
\end{figure}

Note that  system \eqref{eq:schwarz-darboux} can be projected onto $\mathcal{O}/Y_2\simeq \mathbb{R}^2$, which is a consequence of Proposition \ref{prop:conservative-contact-project-symplectic}. The projected system reads
\begin{equation}\label{Eq:ProSc}
\frac{\d q}{\d t}=\frac{q^2}2+2b_1(t)\,,\qquad
\frac{\d p}{\d t}=1-pq\,,
\end{equation}
which is Hamiltonian relative to the symplectic form $\Omega=\d q\wedge \d p$. Indeed, its Hamiltonian function reads
$$
k(t,q,p)=\frac 12q^2p+2b_1(t)p\,.
$$
System \eqref{Eq:ProSc} has no equilibrium points for $b_1(t)\geq0$. Meanwhile, system \eqref{Eq:ProSc} and two equilibrium points at
$$ q = \pm 2\sqrt{-b_1(t)}\,,\quad p = \frac{\pm 1}{2\sqrt{-b_1(t)}} $$
for $b_1(t) < 0$. Setting $b_1(t) = -1/4$, system \eqref{Eq:ProSc} has the form
\begin{equation}\label{Eq:ProSc-part}
\frac{\d q}{\d t}=\frac{q^2}2-\frac{1}{2}\,,\qquad
\frac{\d p}{\d t}=1-pq\,,
\end{equation}
and has equilibrium points $(1,1)$ and $(-1,-1)$. Both equilibria are saddle points. The phase portrait for the system \eqref{Eq:ProSc-part} is depicted in Figure \ref{fig:phase-portrait-schwarz-reduced}.

\begin{figure}[ht]
    \centering
    \includegraphics[width=0.3\textwidth]{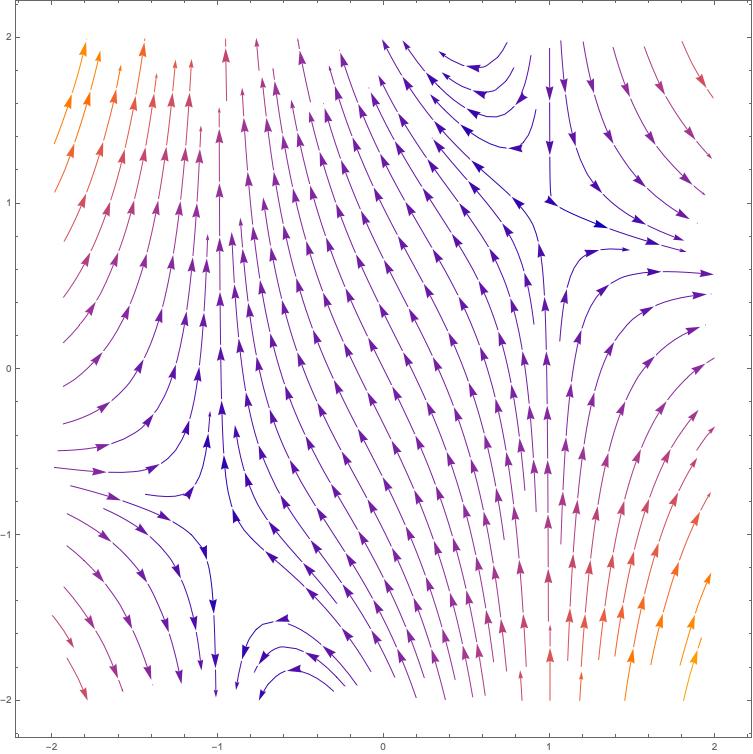}
    \caption{Phase portrait of the reduced Schwarz system \eqref{Eq:ProSc}. One can see the two saddle points at $(-1,-1)$ and $(1,1)$.}
    \label{fig:phase-portrait-schwarz-reduced}
\end{figure}

As commented in the previous section, the volume of the evolution of a ball under the dynamics of (\ref{Eq:ProSc-part}) is constant, as can be seen in Figure \ref{fig:evolution-ball-schwarz}, but if the initial ball has radius $r$, then the evolution of the ball cannot be bounded by a ball of radius smaller than $r$ with centre at the origin. 

\begin{figure}[ht]
    \centering
    \includegraphics[width=0.3\textwidth]{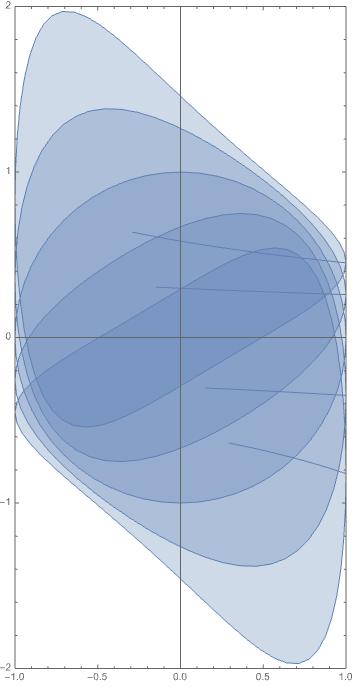}
    \caption{Evolution of a ball under the reduced Schwarz system \eqref{Eq:ProSc}. One can see that although the ball is deformed, its area is preserved.}
    \label{fig:evolution-ball-schwarz}
\end{figure}

\subsection{A quantum contact Lie system}
Let us illustrate how contact reduction can be used to reduce contact Lie systems. Consider the linear space over the real numbers, $\mathfrak{W}=\langle i\widehat{H}_1,\ldots,i\widehat{H}_5\rangle $, spanned by the basis of skew-Hermitian operators on $\mathbb{R}^2$ given by
$$ i\widehat{H}_1:=i\hat{x}\,,\quad i\widehat{H}_2:=i\hat{p}_x = \parder{}{x}\,,\quad i\widehat{H}_3:=i\hat{y}\,,\quad i\widehat{H}_4:=i\hat{p}_y = \parder{}{y}\,,\quad i\widehat{H}_5:=i\Id\,, $$
where the only non-vanishing commutation relations between the elements of the basis read 
$$ [i\widehat{H}_1,i\widehat{H}_2] = -i\widehat{H}_5\,,\quad [i\widehat{H}_3,i\widehat{H}_4] = -i\widehat{H}_5\,. $$
The Lie algebra $\mathfrak{W}$ appears in quantum mechanical problems. Let us consider the Lie algebra morphism $\rho:\mathfrak{W}\mapsto \mathfrak{X}(\mathbb{R}^5)$ satisfying that
\begin{gather*}
    \rho(i\widehat{H}_1) =: X_1 = \parder{}{x_1}\,,\qquad\rho(i\widehat{H}_2) =: X_2 = \parder{}{x_2} - x_1\parder{}{x_5}\,,\qquad \rho(i\widehat{H}_3) =: X_3 = \parder{}{x_3}\,,\\
    \rho(i\widehat{H}_4) =: X_4 = \parder{}{x_4} - x_3\parder{}{x_5}\,,\qquad \rho(i\widehat{H}_5) =: X_5 = \parder{}{x_5}\,.
\end{gather*}

Consider the Lie system on $\mathbb{R}^5$ associated with the $t$-dependent vector field
$$
    X^Q(t,x)=\sum_{\alpha=1}^5b_\alpha(t)X_\alpha(x)\,,\qquad \forall t\in \mathbb{R}\,,\quad x\in \mathbb{R}^5\,,
$$
with arbitrary $t$-dependent functions $b_1(t),\ldots,b_5(t)$, which has a Vessiot--Guldberg Lie algebra $V^Q=\langle X_1,\ldots,X_5\rangle$. The Lie algebra of Lie symmetries of $V^Q$ is spanned by the vector fields
$$\begin{gathered}
    Y_1 = \parder{}{x_1} - x_2\parder{}{x_5}\,,\qquad Y_2 = \parder{}{x_2}\,,\qquad Y_3 = \parder{}{x_3} - x_4\parder{}{x_5}\,,\\
    Y_4 = \parder{}{x_4}\,,\qquad Y_5 = \parder{}{x_5}\,.
\end{gathered}
$$
Since $Y_1,\wedge \ldots \wedge Y_5\neq 0$ at every point of $\mathbb{R}^5$, there exists a basis of differential one-forms  on $\mathbb{R}^5$ dual to $Y_1,\ldots,Y_5$ given by
$$ \eta_1 = \d x_1\,,\quad\eta_2 = \d x_2\,,\quad\eta_3 = \d x_3\,,\quad\eta_4 = \d x_4\,,\quad\eta_5 = \d x_5 + x_2\d x_1 + x_4\d x_3\,, $$
i.e. $\eta_i(Y_j)=\delta_{ij}$, for $i,j=1,\ldots,5$, where $\delta_{ij}$ is the Kronecker's delta function. 
Then, $\eta_5\wedge (\d\eta_5)^2=2\d x_1\wedge\d x_2\wedge\d x_3\wedge\d x_4\wedge\d x_5$ is a volume form on $\R^5$ and thus $\eta_5$ becomes a contact form on $\R^5$. Moreover, $X_1,X_2,X_3,X_4,X_5$ are contact Hamiltonian vector fields with Hamiltonian functions
$$
h_1 = -x_2\,,\quad h_2 = x_1\,,\quad h_3 = -x_4\,,\quad h_4 = x_3\,,\quad h_5 = -1\,,
$$
respectively. Thus, $(\R^5, \eta_5, X^Q)$ admits a Vessiot--Guldberg Lie algebra $V^Q$ of Hamiltonian vector fields relative to $\eta_5$ and $(\R^5, \eta_5, X^Q)$ becomes a contact Lie system. The Reeb vector field of $\eta_5$ is given by $X_5=Y_5$. Since the Hamiltonian functions $h_1,\ldots,h_5$ are first-integrals of the Reeb vector field, $(\mathbb{R}^5,\eta_5,X)$ is a conservative contact Lie system. It is relevant that many important techniques for studying contact Lie system will be available only for conservative contact Lie systems. 

Let us consider the Lie algebra of symmetries of  $V^Q$ spanned by
$$
    V^S = \langle Y_1,Y_2,Y_5\rangle.
$$
This Lie algebra is isomorphic to the Heisenberg three-dimensional Lie algebra $\mathfrak{h}_3$. Moreover, the vector fields of $V^S$ are also Hamiltonian relative the contact structure $\eta_5$.
The momentum map $J:\mathbb{R}^5\rightarrow \mathfrak{h}_3^*$ associated with $V^S$ is such that $i_{X_i}\eta_5=J^i$ for $i=1,2,5$, where 
$$
J^1=x_2,\qquad J^2=-x_1,\qquad J^5=-1,\qquad 
$$
Note that $J$ is not a submersion, but its tangent map has constant rank. By the Constant Rank Theorem, $J^{-1}(\mu)$ is a submanifold for every $\mu\in \mathfrak{h}_3^*$ and the tangent space at one of its points is given by the kernel of $\T_pJ$, whatever $\mu\in \mathfrak{h}_3^*$ is. By Theorem \ref{thm:reduction-willet}, the submanifold $J^{-1}(\mathbb{R}_+\mu)$ is invariant relative to the evolution of the contact Lie system. 

Let us give the integral curves of the vector fields $X_1,X_2,X_5$:
\begin{align*}
    & X_1\quad\longrightarrow\quad x_1'=x_1+\lambda_1\,,&& x_2'=x_2\,,&& x_3'=x_3\,,&& x_4'=x_4\,,&& x_5'=x_5\,,\\
    & X_2\quad\longrightarrow\quad x_1'=x_1\,,&& x_2'=x_2+\lambda_2\,,&& x_3'=x_3\,,&& x_4'=x_4\,,&& x_5'=x_5-\lambda_2x_1\,,\\
    & X_5\quad\longrightarrow\quad x_1'=x_1\,,&& x_2'=x_2\,,&& x_3'=x_3\,,&& x_4'=x_4\,,&& x_5'=x_5+\lambda_3\,,
\end{align*}
where $\lambda_1,\lambda_2,\lambda_3\in\R$. Therefore, $\mathscr{L}_{X_5}{ J}=0$ and
$
\lambda(\mu_1,\mu_2,-1)=(\lambda\mu_1,\lambda \mu_2,-\lambda)\notin {\rm Im}\,J$ unless $\lambda=1$. Then, $J^{-1}(\mathbb{R}_+\mu)=J^{-1}(\mu)=\{x_1,x_2\}\times \mathbb{R}^3$.
Moreover,
$$
J^{-1}(\mu)/G_5=\{x_1,x_2\}\times \mathbb{R}^3\,. 
$$
Therefore, $J^{-1}(\mu)/G_5$ admits coordinates $\{x_3,x_4,x_5\}$. Note that the projection of the initial contact Lie system onto $J^{-1}(\mu)/G_5$ reads
$$
    \bar X^Q(t,x)=\sum_{\alpha=2}^5b_\alpha(t)\widehat X_\alpha(x),\qquad \forall t\in \mathbb{R}\,,\qquad \forall x\in \mathbb{R}^3,
$$
while the projection of the initial Vessiot--Guldberg Lie algebra is spanned by the vector fields
\begin{gather*}
    \widehat X_2=-x_1\frac{\partial}{\partial x_5},\qquad \widehat X_3 = \parder{}{x_3}\,,\qquad
    \widehat X_4 = \parder{}{x_4} - x_3\parder{}{x_5}\,,\qquad  \widehat X_5 = \parder{}{x_5}\,.
\end{gather*}
These are Hamiltonian vector fields relative to the contact form $\d x_5 + x_4\d x_3$ with Hamiltonian functions
$$
\bar h_2= x_1,\qquad \bar h_3 = -x_4\,,\qquad \bar h_4 = x_3\,,\qquad \bar h_5 = -1\,.
$$
Since $X_5$ is the Reeb vector fields on $\mathbb{R}^3$ relative to $\d x_5+x_4\d x_3$, the reduced contact Lie system is also conservative. In fact, it could be projected onto $\mathbb{R}^3/X_5\simeq \mathbb{R}^2$, giving rise to a Lie--Hamilton system on $\mathbb{R}^2$ of the form
$$
\frac{\d x_3}{\d t}=b_3(t),\qquad \frac{\d x_4}{\d t}=b_4(t)
$$
relative to $\omega=\d x_4\wedge \d x_3$.

\subsection{A non-conservative example}

Consider the manifold $M = \R^3$ equipped with linear coordinates $\{q,p,z\}$. The manifold $M$ has a natural contact structure given by the one-form $\eta = \d z - p\,\d q$. Its associated Reeb vector field is $R = \tparder{}{z}$. Consider the vector fields on $M$ given by
$$ X_1 = \parder{}{z}\,,\qquad X_2 = \parder{}{q}\,,\qquad X_3 = z\parder{}{q} - p^2\parder{}{p}\,. $$
These vector fields are Hamiltonian relative to $(\R^3,\eta)$ with Hamiltonian functions
$$ h_1 = -1\,,\qquad h_2 = p\,,\qquad h_3 = pz\,, $$
and span a three-dimensional Lie algebra with commutation relations
$$ [X_1,X_2] = 0\,,\qquad [X_1,X_3] = X_2\,,\qquad [X_2,X_3] = 0\,, $$
isomorphic to $\mathfrak{h}_3$. 
This allows us to define a contact Lie system on $\mathbb{R}^3$ relative to $\eta$ given by
\begin{equation}\label{eq:vector-field-example-nonconservative}
    X=\sum_{\alpha=1}^3b_\alpha(t)X_\alpha\,.
\end{equation}
where $b_1(t),b_2(t),b_3(t)$ are arbitrary $t$-dependent functions.
Since the Hamiltonian function of $X_3$ is not a first-integral of the Reeb vector field $R$, then $X$ is a non-conservative contact Lie system. Note also that $X$ is associated with the $t$-dependent Hamiltonian function
$$
    h = \sum_{\alpha = 1}^3 b_\alpha(t)h_\alpha\,,
$$
namely each $X_t$ is the Hamiltonian vector field related to $h_t$ for every $t\in \mathbb{R}$. As a consequence, the volume form related to the contact form, namely
$$
\Omega = \d\eta\wedge \eta = \d q\wedge \d p\wedge \d z\,,
$$
is not invariant relative to the vector fields of the Vessiot--Guldberg Lie algebra and $\Omega$ is not, in general, preserved by the evolution of $X$. More specifically, if $F_{t_0}:\R\times \R^3\rightarrow \R^3$ is the flow starting from the point $t_0$ of $X$, namely $F_{t_0}(t_0,x_0)=x_0$   and $F_{t_0}(t,x_0) = x(t)$, where $x(t)$ is the particular solution to \eqref{eq:vector-field-example-nonconservative} with $x(t_0) = x_0$, for every $x_0\in \R^3$, then
$$
\frac{\d}{\d t}\int_{F_{t_0}(t,A)}\Omega=\frac{\d}{\d t}\int_AF^*_{t_0,t}\Omega=\int_A\Lie_X\Omega\,,
$$
for every subset $A\subset \R^3$.
But $\Lie_X\Omega = 2(Rh) \Omega$. Hence,
$$
\frac{\d}{\d t}\int_{F_{t_0}(t,A)}\Omega = 2\int_A(Rh)\Omega = 2\int_A\left(\sum_{\alpha=1}^3b_\alpha(t)Rh_\alpha \right)\Omega\,.
$$
Note that if $V=V^X$, then
$$
\sum_{\alpha=1}^3b_\alpha(t)Rh_\alpha = b_3(t)p\neq0\,
$$
for a generic value of $p\in \mathbb{R}$ and $t\in\mathbb{R}$. 

\section{Coalgebra method to obtain superposition rules of Jacobi--Lie systems}\label{Se:Coalgebra}

Let us provide a method to derive superposition rules for contact Lie systems via Poisson coalgebras. Our method is a modification of the coalgebra method for deriving superposition rules for Dirac--Lie systems devised in \cite{Car2014}. It is worth noting that the coalgebra method does not work for contact Lie systems {\it per se} since, as proved next, the diagonal prolongations of a contact Lie system will not always be a contact Lie system.


Let us start by defining the diagonal prolongation of the sections of a vector bundle, as this is a key for developing the coalgebra method for Lie systems admitting Vessiot--Guldberg Lie algebras of Hamiltonian vector fields relative to different geometric structures.

The {\it diagonal prolongation to $M^k$ of a vector bundle} $\tau:F\to M$ on $M$ is defined to be the Whitney sum of $k$-times the vector bundle $\tau$ with itself, namely the vector bundle $\tau^{[k]}:F^k = F\times\overset{(k)}{\dotsb}\times F\mapsto M^k= M\times \overset{(k)}{\dotsb} \times M$, viewed as a vector bundle over $M^k$ in the natural way, i.e.
\begin{equation*}
   \tau^{[k]}(f_{(1)},\dotsc,f_{(k)}) = (\tau(f_{(1)}),\dotsc,\tau(f_{(k)}))\,,\qquad \forall f_{(1)},\ldots,f_{(k)}\in F\,, 
\end{equation*}
and
\begin{equation*}
    F^k_{(x_{(1)},\ldots,x_{(k)})}=F_{x_{(1)}}\oplus\cdots\oplus F_{x_{(k)}}\,,\qquad \forall (x_{(1)},\ldots,x_{(k)})\in M^k\,.
\end{equation*}

Every section $e:M\to F$ of the vector bundle $\tau$ has a natural {\it  diagonal prolongation} to a section $e^{[k]}$ of the vector bundle $\tau^{[k]}$ given by
\begin{equation*}
    e^{[k]}(x_{(1)},\ldots,x_{(k)})=e(x_{(1)})+\cdots +e(x_{(k)})\,,\qquad \forall (x_{(1)},\ldots,x_{(k)})\in M^k.
\end{equation*}
The {\it  diagonal prolongation of a function $f\in \Cinfty (M)$} to $M^k$ is the function
$f^{[k]}(x_{(1)},\ldots,x_{(k)})= f(x_{(1)})+\ldots+f(x_{(k)})$. Consider also the sections $e^{(j)}$ of $\tau^{[k]}$, where $j\in \{1,\ldots,n\}$ and $e$ is a section of $\tau$, given by
\begin{equation}\label{prol1}
e^{(j)}(x_{(1)},\dots,x_{(k)})=0+\cdots +e(x_{(j)})+\cdots+0\,,\qquad \forall (x_{(1)},\ldots,x_{(k)})\in M^k\,.
\end{equation}
If $\{e_1,\ldots, e_r\}$ is a basis of local sections of the vector bundle $\tau$, then $e_i^{(j)}$, with $j = 1,\dotsc,k$ and $i = 1,\dotsc,r$, is a basis of local sections of $\tau^{[k]}$.

Due to the obvious canonical isomorphisms
$$(\T M)^{[k]}\simeq \T M^k\quad\text{and}\quad (\cT M)^{[k]}\simeq \cT M^k\,,$$
the diagonal prolongation $X^{[k]}$ of a vector field $X\in\X(M)$ can be understood as a vector field $\widetilde{X}^{[k]}$ on $M^k$, and the diagonal prolongation, $\alpha^{[k]}$, of a one-form $\alpha$ on $M$ can be understood as a one-form $\widetilde{\alpha}^{[k]}$ on $M^k$. If $k$ is assumed to be fixed, we will simply write $\widetilde{X}$ and $\widetilde{\alpha}$ for their diagonal prolongations. 

The proofs of Proposition \ref{Pro:ProlJL}, its Corollaries \ref{Cor:LieAlgJL} and \ref{Cor:LieAlgJL2}, and Proposition \ref{Prop:Obv2} below are straightforward as they rely, almost entirely, on the definition of diagonal prolongations. Anyhow, as Jacobi manifolds with a non-vanishing Reeb vector field give rise to a Dirac manifold, they can also be considered as particular cases of the results given for Dirac structures in \cite{Car2014}. 

\begin{proposition}\label{Pro:ProlJL} Let $(M,\Lambda,E)$ be a Jacobi manifold with bracket $\{\cdot,\cdot\}_{\Lambda,E}$. Let $X$ and $f$ be a vector field and a function on $N$. Then:
\begin{enumerate}[(a)]
    \item $(M^k,\Lambda^{[k]},E^{[k]})$ is a Jacobi manifold for every $k\in \mathbb{N}$.
    \item If $f$ is a Hamiltonian function for a Hamiltonian vector field $X$ relative to $(M,\Lambda,E)$, its diagonal prolongation $f^{[k]}$ to $M^k$ is a Hamiltonian function of the diagonal prolongation, $X^{[k]}$, to $M^k$ with respect to $(M^k,\Lambda^{[k]},E^{[k]})$. 
    \item If $f \in {\rm Cas}(M ,\Lambda, E)$, then $f^{[k]}\in {\rm Cas}(M^k,\Lambda^{[k]}, E^{[k]})$.
    \item The map $\lambda: (\Cinfty(M),\{\cdot,\cdot\}_{\Lambda,E})\ni f \mapsto f^{[k]}\in (\Cinfty(M^k), \{\cdot,\cdot \}_{\Lambda^{[k]},E^{[k]}})$ is an injective Lie algebra morphism.
\end{enumerate}
\end{proposition}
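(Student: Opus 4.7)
The plan is to verify each of the four parts by reducing all tensorial computations on $M^k$ to independent computations on each factor, using linearity of the diagonal prolongation and the observation that sections of tensor bundles supported on disjoint factors commute with respect to the standard multilinear operations. Concretely, decompose $\Lambda^{[k]}=\sum_{j=1}^k\Lambda^{(j)}$, $E^{[k]}=\sum_{j=1}^k E^{(j)}$, and $f^{[k]}=\sum_{j=1}^k f^{(j)}$, where the superscript $(j)$ denotes the prolongation supported on the $j$-th factor in the sense of \eqref{prol1}. Because sections with distinct superscripts involve mutually independent local coordinate systems, both the Schouten--Nijenhuis bracket $[T^{(i)},S^{(j)}]$ and the Lie derivative $\Lie_{T^{(i)}}S^{(j)}$ vanish whenever $i\neq j$.

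For part (a), bilinearity of the Schouten--Nijenhuis bracket together with the Jacobi conditions on each factor yields
$$[\Lambda^{[k]},\Lambda^{[k]}]=\sum_{i,j=1}^k[\Lambda^{(i)},\Lambda^{(j)}]=\sum_{j=1}^k[\Lambda^{(j)},\Lambda^{(j)}]=\sum_{j=1}^k(2E\wedge\Lambda)^{(j)},$$
and similarly $[E^{[k]},\Lambda^{[k]}]=\sum_{j=1}^k[E^{(j)},\Lambda^{(j)}]=0$. One then matches the first right-hand side with $2E^{[k]}\wedge\Lambda^{[k]}=2\sum_{i,j}E^{(i)}\wedge\Lambda^{(j)}$ by invoking the factor decomposition. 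For part (b), write the single-factor formula $X=\Lambda^\sharp(\d f)+fE$, prolong diagonally, and use linearity of $\sharp$, of the diagonal prolongation, and of the pointwise multiplication to obtain
$$X^{[k]}=\sum_{j=1}^k\bigl[(\Lambda^{(j)})^\sharp(\d f^{(j)})+f^{(j)}E^{(j)}\bigr]=(\Lambda^{[k]})^\sharp(\d f^{[k]})+f^{[k]}E^{[k]}.$$

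Part (c) then follows immediately from (b) applied to the zero vector field: a Casimir $f\in\mathrm{Cas}(M,\Lambda,E)$ satisfies $X_f=0$, so $X_{f^{[k]}}=(X_f)^{[k]}=0$ and $f^{[k]}\in\mathrm{Cas}(M^k,\Lambda^{[k]},E^{[k]})$. For part (d), expand
$$\{f^{[k]},g^{[k]}\}_{\Lambda^{[k]},E^{[k]}}=\Lambda^{[k]}(\d f^{[k]},\d g^{[k]})+f^{[k]}E^{[k]}g^{[k]}-g^{[k]}E^{[k]}f^{[k]},$$
reduce each term via the factor decomposition, and identify the result with $\sum_{j=1}^k\{f,g\}^{(j)}=\{f,g\}^{[k]}$. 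Injectivity of $\lambda$ is immediate from the observation that $f^{[k]}\circ\Delta=kf$ along the diagonal embedding $\Delta\colon M\hookrightarrow M^k$, so $f^{[k]}=0$ forces $f=0$.

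The main obstacle, and where the ``straightforward'' claim hides the bulk of the work, is the careful handling of the cross-contributions $E^{(i)}\wedge\Lambda^{(j)}$ in part (a) and $f^{(i)}E^{(j)}g^{(j)}-g^{(i)}E^{(j)}f^{(j)}$ with $i\neq j$ in part (d); one must check that these either pair off after symmetrisation or are absorbed into the identifications above, so that only the diagonal $i=j$ contributions survive. Once this bookkeeping is completed factor by factor, all four statements follow at once, and the same argument can be recovered as a byproduct of the Dirac--Lie analogue in \cite{Car2014} whenever the Jacobi structure comes from a contact manifold with non-vanishing Reeb vector field.
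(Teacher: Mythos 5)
Your overall strategy---decompose $\Lambda^{[k]}=\sum_j\Lambda^{(j)}$, $E^{[k]}=\sum_jE^{(j)}$, $f^{[k]}=\sum_jf^{(j)}$, discard cross-brackets between disjoint factors, and compare diagonal contributions---is exactly the direct computation the paper has in mind when it declares these proofs ``straightforward'' from the definition of diagonal prolongations. The trouble is that you stop precisely where the computation stops being straightforward. You correctly single out the cross-contributions as ``the main obstacle'', but you never resolve it: you only assert that the terms $E^{(i)}\wedge\Lambda^{(j)}$ and $f^{(i)}E^{(j)}g^{(j)}$ with $i\neq j$ ``either pair off after symmetrisation or are absorbed into the identifications''. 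They do not. In (a) your own computation gives $[\Lambda^{[k]},\Lambda^{[k]}]=2\sum_jE^{(j)}\wedge\Lambda^{(j)}$, while $2E^{[k]}\wedge\Lambda^{[k]}=2\sum_{i,j}E^{(i)}\wedge\Lambda^{(j)}$; the off-diagonal trivectors $E^{(i)}\wedge\Lambda^{(j)}$, $i\neq j$, are generically nonzero and no symmetrisation removes them. The same happens in (b), where $f^{[k]}E^{[k]}=\sum_{i,j}f^{(i)}E^{(j)}\neq\sum_jf^{(j)}E^{(j)}$. A concrete test is $f=1$ with $E\neq0$ and $k=2$: then $X_f=E$, so $(X_f)^{[2]}=E^{[2]}$, whereas the Hamiltonian vector field of $f^{[2]}=2$ relative to $(\Lambda^{[2]},E^{[2]})$ is $2E^{[2]}$; likewise $\{1,g\}=Eg$ prolongs to $(Eg)^{[2]}$, while $\{1^{[2]},g^{[2]}\}=2E^{[2]}g^{[2]}=2(Eg)^{[2]}$. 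So the identities you need fail on the nose whenever $E\neq0$, and part (c), which you derive from (b), inherits the same defect; only the injectivity argument via $f^{[k]}\circ\Delta=kf$ is complete as written.

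What this shows is that the naive route of prolonging $\Lambda$ and $E$ separately and computing cannot be closed as you have written it; the paper's actual escape hatch is the second half of its remark, namely to regard the Jacobi structure as part of a Dirac-type structure and invoke the prolongation results of \cite{Car2014}, where the product of the relevant subbundles is again a structure of the same kind and the bracket of admissible functions is controlled by that subbundle rather than by the pair $(\Lambda,E)$ directly. You mention this alternative in your last sentence but do not use it. To repair the argument you must either carry out that Dirac-theoretic reduction explicitly, or restrict the statements to a setting in which the offending cross terms genuinely vanish (for instance, by working with functions annihilated by $E$ and a correspondingly corrected prolonged structure). As it stands, the cross terms are a genuine gap, not bookkeeping.
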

 
\begin{corollary}\label{Cor:LieAlgJL}
    If $h_1,\dotsc, h_r:M\to \mathbb{R}$ is a family of functions on a Jacobi manifold $(M,\Lambda,E)$ spanning a finite-dimensional real Lie algebra of functions with respect to the Lie bracket $\{\cdot,\cdot\}_{\Lambda,E}$, then their diagonal prolongations $\tilde h_1, \dotsc,\tilde h_r$ to $M^k$ close an isomorphic Lie algebra of functions with respect to the Lie bracket $\{\cdot,\cdot\}_{\Lambda,E}$ induced by the Jacobi manifold $(M^k,\Lambda^{[k]}, E^{[k]})$.
\end{corollary}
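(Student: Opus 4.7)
The plan is to extract the corollary as an immediate consequence of part (d) of Proposition \ref{Pro:ProlJL}. Denote by $V = \langle h_1,\dotsc,h_r\rangle$ the finite-dimensional real Lie subalgebra of $(\Cinfty(M), \{\cdot,\cdot\}_{\Lambda,E})$ spanned by the given functions, and consider the diagonal prolongation map
$$ \lambda:(\Cinfty(M),\{\cdot,\cdot\}_{\Lambda,E})\ni f\longmapsto f^{[k]}\in(\Cinfty(M^k),\{\cdot,\cdot\}_{\Lambda^{[k]},E^{[k]}}). $$
The strategy is to restrict $\lambda$ to $V$ and use that it is an injective Lie algebra morphism to transport the Lie algebra structure from $V$ onto $\lambda(V)\subset\Cinfty(M^k)$.

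First, I would observe that the diagonal prolongation is $\mathbb{R}$-linear, so the image $\lambda(V)$ is a real vector subspace of $\Cinfty(M^k)$ generated by $\tilde h_1,\dotsc,\tilde h_r$. Next, by Proposition \ref{Pro:ProlJL}(d), $\lambda$ is a Lie algebra morphism, so for every $i,j\in\{1,\dotsc,r\}$,
$$ \{\tilde h_i,\tilde h_j\}_{\Lambda^{[k]},E^{[k]}} = \{h_i,h_j\}^{[k]}_{\Lambda,E}, $$
and since $\{h_i,h_j\}_{\Lambda,E}\in V$ can be written as $\sum_m c_{ijm}h_m$ for some structure constants $c_{ijm}\in\mathbb{R}$, linearity of $\lambda$ yields $\{\tilde h_i,\tilde h_j\}_{\Lambda^{[k]},E^{[k]}} = \sum_m c_{ijm}\tilde h_m\in\lambda(V)$. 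This proves that $\lambda(V)$ is closed under the bracket on $M^k$ and hence is a Lie subalgebra with the same structure constants as $V$.

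Finally, I would invoke injectivity of $\lambda$ from Proposition \ref{Pro:ProlJL}(d) to conclude that $\lambda|_V:V\to\lambda(V)$ is a bijective Lie algebra morphism, hence a Lie algebra isomorphism, yielding the stated isomorphism of Lie algebras. There is no real obstacle here, as the argument is a mechanical transport of the finite-dimensional Lie algebra structure through the morphism $\lambda$; the only subtlety to mention is that if the $h_i$ are not linearly independent then one simply picks a basis of $V$ first and applies the same reasoning, which guarantees that $\lambda(V)$ has the same dimension as $V$.
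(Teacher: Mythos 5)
Your proposal is correct and follows the route the paper intends: the paper explicitly leaves this corollary as an immediate consequence of Proposition \ref{Pro:ProlJL}, and your argument—restricting the injective Lie algebra morphism $\lambda$ of part (d) to the finite-dimensional Lie algebra spanned by $h_1,\dotsc,h_r$ and transporting the structure constants—is exactly that straightforward derivation. The remark about first passing to a basis when the $h_i$ are linearly dependent is a sensible, if minor, point of care.
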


\begin{corollary}\label{Cor:LieAlgJL2}
    Let $(M, \Lambda, E, X)$ be a Jacobi--Lie system admitting a Lie--Hamiltonian $(M, \Lambda, E, h)$. Then, the tuple $(M^k,\Lambda^{[k]}, E^{[k]},X^{[k]})$ is a Jacobi--Lie system admitting a Jacobi--Lie Hamiltonian of the form $(M^k,\Lambda^{[k]}, E^{[k]},h^{[k]})$, where $h_t^{[k]} = \tilde h_t^{[k]}$ is the diagonal prolongation of $h_t$ to $M^k$.
\end{corollary}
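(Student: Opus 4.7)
The plan is to assemble the statement directly from Proposition \ref{Pro:ProlJL} and Corollary \ref{Cor:LieAlgJL}, since Corollary \ref{Cor:LieAlgJL2} is essentially a packaging of those two results at the level of $t$-dependent objects. First, I would invoke Proposition \ref{Pro:ProlJL}(a) to guarantee that $(M^k,\Lambda^{[k]},E^{[k]})$ is itself a Jacobi manifold, so that the notions of Hamiltonian vector field, Hamiltonian function, and Jacobi bracket make sense on $M^k$.

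Next, I would fix $t\in\R$ and apply Proposition \ref{Pro:ProlJL}(b) to the pair $(X_t,h_t)$: since $X_t$ is the Hamiltonian vector field of $h_t$ on $(M,\Lambda,E)$, its diagonal prolongation $X_t^{[k]}$ is the Hamiltonian vector field of $h_t^{[k]}=\tilde h_t^{[k]}$ on $(M^k,\Lambda^{[k]},E^{[k]})$. This is exactly the condition required in the definition of a Jacobi--Lie Hamiltonian, applied pointwise in $t$.

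Then I would deal with the finite-dimensionality condition. By the assumption that $(M,\Lambda,E,h)$ is a Jacobi--Lie Hamiltonian, the family $\{h_t\}_{t\in\R}$ generates a finite-dimensional Lie subalgebra of $(\Cinfty(M),\{\cdot,\cdot\}_{\Lambda,E})$. Corollary \ref{Cor:LieAlgJL} then provides an isomorphism of Lie algebras between this subalgebra and the one generated by $\{h_t^{[k]}\}_{t\in\R}$ inside $(\Cinfty(M^k),\{\cdot,\cdot\}_{\Lambda^{[k]},E^{[k]}})$, so the latter is also finite-dimensional. Therefore $(M^k,\Lambda^{[k]},E^{[k]},h^{[k]})$ is a Jacobi--Lie Hamiltonian in the sense of the definition, and $X^{[k]}$ admits it by construction.

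Finally, I would remark that this forces $X^{[k]}$ to be a Jacobi--Lie system: each $X_t^{[k]}$ is Hamiltonian relative to $(M^k,\Lambda^{[k]},E^{[k]})$, and since the Lie algebra morphism sending functions to their Hamiltonian vector fields has finite-dimensional image whenever its domain of Hamiltonian functions is finite-dimensional, the smallest Lie algebra $V^{X^{[k]}}$ is finite-dimensional and consists of Hamiltonian vector fields on the Jacobi manifold $(M^k,\Lambda^{[k]},E^{[k]})$. There is essentially no obstacle here; the only point worth explicit mention is the compatibility between the prolongation of $t$-dependent Hamiltonian functions and the prolongation of the corresponding $t$-dependent vector fields, and this is already a pointwise consequence of Proposition \ref{Pro:ProlJL}(b).
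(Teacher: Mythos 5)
Your proposal is correct and follows exactly the route the paper intends: the paper gives no written proof, remarking only that Corollary \ref{Cor:LieAlgJL2} is a straightforward consequence of the definition of diagonal prolongations via Proposition \ref{Pro:ProlJL} and Corollary \ref{Cor:LieAlgJL}, which is precisely the assembly you carry out. Your explicit handling of the $t$-wise application of Proposition \ref{Pro:ProlJL}(b) and of the finite-dimensionality via the Lie algebra morphism $f\mapsto X_f$ fills in the details the paper leaves implicit, with no gaps.
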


\begin{proposition}\label{Prop:Obv2}
    If $X$ be a system possessing a $t$-independent constant of the motion $f$ and $Y$ is a $t$-independent Lie symmetry of $X$, then:
    \begin{enumerate}
        \item The diagonal prolongation $f^{[k]}$ is a $t$-independent constant of the motion for $X^{[k]}$.
        \item Then $Y^{[k]}$ is a $t$-independent Lie symmetry of $X^{[k]}$.
        \item If $h$ is a $t$-independent constant of the motion for $X^{[k]}$, then $Y^{[k]}h$ is another $t$-independent constant of the motion for $X^{[k]}$.
    \end{enumerate}
\end{proposition}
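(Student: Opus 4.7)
The plan is to derive all three statements from a single pair of compatibility properties of diagonal prolongations with the Lie algebra structure of vector fields and their action on functions. Concretely, I will first establish (or recall) that for any $Y\in\X(M)$ and $f\in\Cinfty(M)$ the identity $(Yf)^{[k]}=Y^{[k]}f^{[k]}$ holds, and that for any $Y,Z\in\X(M)$ one has $[Y,Z]^{[k]}=[Y^{[k]},Z^{[k]}]$. Both facts follow directly from the pointwise definition of the diagonal prolongation: the action of $Y^{[k]}$ at $(x_{(1)},\dots,x_{(k)})\in M^k$ is $Y_{x_{(i)}}$ on the $i$-th slot, so evaluating on the sum $f^{[k]}=\sum_i f\circ\mathrm{pr}_i$ reproduces the sum of $(Yf)(x_{(i)})$, and the commutator formula follows from applying this to a test function on each factor separately. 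These two equalities extend trivially to $t$-dependent vector fields by fixing $t$.

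For part (1), applying the first identity to $Y=X_t$ gives $X_t^{[k]} f^{[k]}=(X_t f)^{[k]}$. Because $f$ is a $t$-independent constant of motion of $X$, one has $X_t f=0$ for every $t\in\R$, hence $X_t^{[k]} f^{[k]}=0^{[k]}=0$, showing that $f^{[k]}$ is a $t$-independent constant of motion of $X^{[k]}$.

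For part (2), applying the commutator identity gives $[Y^{[k]},X_t^{[k]}]=[Y,X_t]^{[k]}$ for each $t\in\R$. Since $Y$ is a $t$-independent Lie symmetry of $X$, $[Y,X_t]=0$ for every $t$, and therefore $[Y^{[k]},X_t^{[k]}]=0$, which is exactly the assertion that $Y^{[k]}$ is a $t$-independent Lie symmetry of $X^{[k]}$.

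For part (3), combine (2) with the hypothesis that $h$ is a $t$-independent constant of motion of $X^{[k]}$: writing
\begin{equation*}
X_t^{[k]}(Y^{[k]} h)=Y^{[k]}(X_t^{[k]} h)+[X_t^{[k]},Y^{[k]}]\,h,
\end{equation*}
the first summand vanishes because $X_t^{[k]} h=0$, and the second vanishes by part (2), so $Y^{[k]} h$ is likewise a $t$-independent constant of motion of $X^{[k]}$. No step poses a genuine obstacle; the only care required is to keep track of the fact that $X^{[k]}$, although originally defined as a section of $(\T M)^{[k]}$, is being used here through its canonical identification with an element of $\X(M^k)$, and similarly for one-forms and functions. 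This is the identification already fixed in the paragraph preceding the statement, so the above derivation proceeds without further subtlety.
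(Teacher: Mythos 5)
Your proof is correct, and it follows exactly the route the paper indicates: the paper omits an explicit proof, stating only that the result is straightforward from the definition of diagonal prolongations, which is precisely what your two compatibility identities $(Yf)^{[k]}=Y^{[k]}f^{[k]}$ and $[Y,Z]^{[k]}=[Y^{[k]},Z^{[k]}]$ make precise. All three parts then follow as you describe, so nothing further is needed.
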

\begin{proposition} The diagonal prolongation to $M^k$ of a Jacobi--Lie system $(M,\Lambda,E,X)$ is a Jacobi--Lie system $(M,\Lambda^{[k]},E^{[n]},X^{[k]})$.
\end{proposition}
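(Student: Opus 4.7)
The plan is to verify the two defining properties of a Jacobi--Lie system for the tuple $(M^k,\Lambda^{[k]},E^{[k]},X^{[k]})$: first, that $X^{[k]}$ is a Lie system on $M^k$, and second, that its smallest Lie algebra consists of Hamiltonian vector fields relative to the prolonged Jacobi manifold $(M^k,\Lambda^{[k]},E^{[k]})$. Both claims follow almost entirely by combining the preceding propositions with the fact that diagonal prolongation of vector fields is a Lie algebra morphism.

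First, since $(M,\Lambda,E,X)$ is a Jacobi--Lie system, $X$ takes values in a finite-dimensional Vessiot--Guldberg Lie algebra $V$ of Hamiltonian vector fields relative to $(M,\Lambda,E)$. I would consider the candidate prolonged Vessiot--Guldberg Lie algebra
$$V^{[k]}:=\{Y^{[k]}:Y\in V\}\subset\X(M^k).$$
Using the decomposition $Y^{[k]}=\sum_{j=1}^k Y^{(j)}$ and the fact that $[Y_1^{(i)},Y_2^{(j)}]=0$ for $i\neq j$ while $[Y_1^{(j)},Y_2^{(j)}]=[Y_1,Y_2]^{(j)}$, one gets $[Y_1^{[k]},Y_2^{[k]}]=[Y_1,Y_2]^{[k]}$. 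Hence $Y\in V\mapsto Y^{[k]}\in V^{[k]}$ is an injective Lie algebra morphism, so $V^{[k]}$ is a finite-dimensional real Lie algebra of vector fields on $M^k$.

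Next, I would note that $(X^{[k]})_t=(X_t)^{[k]}\in V^{[k]}$ for every $t\in\R$ by the very definition of the diagonal prolongation of a $t$-dependent vector field, so $X^{[k]}$ is a $t$-dependent vector field taking values in the finite-dimensional Lie algebra $V^{[k]}$. In particular, $X^{[k]}$ is a Lie system on $M^k$. To conclude, I would invoke Proposition \ref{Pro:ProlJL}: part (a) asserts that $(M^k,\Lambda^{[k]},E^{[k]})$ is a Jacobi manifold, and part (b) asserts that if $Y\in V$ is the Hamiltonian vector field of $f\in\Cinfty(M)$ relative to $(M,\Lambda,E)$, then $Y^{[k]}$ is the Hamiltonian vector field of $f^{[k]}$ relative to $(M^k,\Lambda^{[k]},E^{[k]})$. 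Consequently, every element of $V^{[k]}$ is Hamiltonian with respect to the prolonged Jacobi structure, which completes the verification.

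There is no real obstacle here: the substantive work (that diagonal prolongation preserves the Jacobi structure and maps Hamiltonian vector fields to Hamiltonian vector fields) has already been done in Proposition \ref{Pro:ProlJL}, and the only additional observation needed is that diagonal prolongation of vector fields respects Lie brackets, which is an elementary calculation. The proposition is therefore essentially a packaging of Corollary \ref{Cor:LieAlgJL2} in terms of Vessiot--Guldberg Lie algebras rather than Jacobi--Lie Hamiltonians.
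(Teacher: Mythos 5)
Your proof is correct and follows exactly the route the paper intends: the paper omits the proof as an immediate consequence of Proposition \ref{Pro:ProlJL} and the fact that diagonal prolongation is a Lie algebra morphism on vector fields, which is precisely what you spell out. (Incidentally, you also silently correct the typographical slips in the statement, which should read $(M^k,\Lambda^{[k]},E^{[k]},X^{[k]})$.)
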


For the sake of completeness, let us prove the following result.

\begin{proposition}
    Let $(M,\Lambda,E,X)$ be a Jacobi--Lie system possessing a Jacobi--Lie Hamiltonian $(M, \Lambda,E, h)$. A function $f\in \Cinfty(M)$ is a constant of the motion for $X$ if and only if it commutes with all the elements of ${\rm Lie}(\{h_t\}_{t\in \R},\{\cdot,\cdot\})$.
\end{proposition}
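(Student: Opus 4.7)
My plan is to exploit the Lie algebra morphism $g \in \Cinfty(M) \mapsto X_g \in \X(M)$ associated with the Jacobi manifold $(M,\Lambda,E)$, which sends $\{g_1,g_2\}$ to $[X_{g_1},X_{g_2}]$, together with the definitional identity $\{g,f\} = X_g f - fEg$ relating the Jacobi bracket to the action of Hamiltonian vector fields. Since $f$ is taken to be $t$-independent, the statement ``$f$ is a constant of the motion for $X$'' amounts to $X_t f = 0$ for every $t \in \R$. The Jacobi--Lie Hamiltonian hypothesis rewrites this as $X_{h_t} f = 0$ for every $t$, and using the identity above I would translate this into a bracket condition on $f$ with each generator $h_t$ of $L := \mathrm{Lie}(\{h_t\}_{t\in\R},\{\cdot,\cdot\})$. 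Conversely, specializing the hypothesis that $f$ commutes with all of $L$ to the elements $h_t$ and reading the same identity in reverse recovers $X_{h_t}f = X_t f = 0$, giving the backward direction.

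To pass from commutation with the generators to commutation with the entire Lie algebra $L$, I would invoke the Jacobi identity satisfied by $\{\cdot,\cdot\}$: if $\{g_1,f\}=0$ and $\{g_2,f\}=0$, then
\[
\{\{g_1,g_2\},f\} = \{g_1,\{g_2,f\}\} - \{g_2,\{g_1,f\}\} = 0,
\]
so the set of elements of $\Cinfty(M)$ commuting with $f$ is a Lie subalgebra of $(\Cinfty(M),\{\cdot,\cdot\})$. Since it contains $\{h_t\}_{t\in\R}$, it must contain the whole of $L$, and by $\R$-linearity of the bracket this extends to arbitrary linear combinations.

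The step I expect to need the most care is the middle translation between $X_{h_t} f = 0$ and $\{h_t,f\} = 0$, because the Jacobi bracket differs from a genuine Poisson bracket by the terms involving $E$; keeping the formula $X_g f = \{g,f\} + fEg$ straight and using it consistently in both directions is where the real content of the equivalence lies. Once this is settled for the generators $h_t$, the Lie-algebraic extension to all of $L$ via the Jacobi identity is purely formal and no further input from the geometry of $(M,\Lambda,E)$ is required.
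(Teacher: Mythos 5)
Your overall architecture coincides with the paper's: reduce the statement to the generators $h_t$, then use the Jacobi identity to show that the commutant of $f$ in $(\Cinfty(M),\{\cdot,\cdot\})$ is a Lie subalgebra, so that containing $\{h_t\}_{t\in\R}$ forces it to contain all of $\mathrm{Lie}(\{h_t\}_{t\in\R},\{\cdot,\cdot\})$; the converse by specialization is also how the paper argues. Your Jacobi-identity computation is correct and is exactly the paper's inductive step (whose displayed identity contains a typo but has the same intent).

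The gap sits precisely at the point you flag but never close. With the formula you quote, $\{g,f\}=X_g f-fEg$, the condition $X_{h_t}f=0$ translates to $\{h_t,f\}=-fEh_t$, not to $\{h_t,f\}=0$; likewise, in the converse direction $\{h_t,f\}=0$ only yields $X_{h_t}f=fEh_t$. Unless the term $fEh_t$ is shown to vanish (for instance because the $h_t$ are good Hamiltonian functions with $Eh_t=0$, or by a separate argument), the two conditions are not equivalent; moreover, the set of $g$ satisfying $\{g,f\}=-fEg$ is not a Lie subalgebra, so the Jacobi-identity extension you rely on does not apply to the condition your translation actually produces. The paper's proof simply asserts $0=X_tf=\{h_t,f\}$, i.e. it identifies the evolution with the bracket outright; you correctly diagnose this identification as ``where the real content of the equivalence lies'' but supply no argument for it, so as written the proposal is incomplete at its central step.
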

\begin{proof} The function $f$ is a constant of the motion for $X$ if
\begin{equation}\label{Eq:Cons}
0=X_tf=\{h_t,f\}\,,\qquad \forall t\in \mathbb{R}\,.
\end{equation}
Hence, 
$$
\{f,\{h_t,h_{t'}\}\}=\{f,\{h_t,h_{t'}\}\}+\{f,\{h_t,h_{t'}\}\}\,,\qquad \forall t,t'\in \mathbb{R}\,.
$$
Inductively, $f$ is shown to commute with all the elements of ${\rm Lie}(\{h_t\}_{t\in \mathbb{R}})$. 
Conversely, if $f$ commutes with all ${\rm Lie}(\{h_t\}_{t\in \mathbb{R}})$ relative to $\{\cdot,\cdot\}$, in particular, \eqref{Eq:Cons} holds and $f$ is a constant of the motion of $X$.
\end{proof}

The bracket for Jacobi--Lie systems is not a Poisson bracket. It becomes only a Poisson bracket for good Hamiltonian functions. Nevertheless, when a Lie group action gives rise to a momentum map, the components of the momentum map are first-integrals of $R$. As a consequence, the following proposition, which can be considered as an adaptation of \cite[Proposition 8.4]{Car2014}, is satisfied. Recall that if $(M,\Lambda,E)$ is a Jacobi manifold, then $\Cinfty(M^k)$ becomes a Lie algebra relative to the Lie bracket $\{\cdot,\cdot\}_k$ related to $\Lambda^{[k]}$ and $\Cinfty(\mathfrak{W}^*)$ is a Poisson algebra relative to the Kirillov--Kostant--Souriau bracket.

\begin{proposition}
    Given a Jacobi--Lie system $(M,\Lambda,R,X)$ that admits a Jacobi--Lie Hamiltonian $(M,\Lambda, R,h)$ such that $\{h_t\}_{t\in\R}$ is contained in a finite-dimensional Lie algebra of functions $(\mathfrak{W}, \{\cdot,\cdot\})$. Given the good momentum map $J : M\to \mathfrak{W}^\ast$ associated with a contact Lie group action leaving $h$ invariant, the pull-back $J^\ast(C)$ of any Casimir function $C$ on $\mathfrak{W}^\ast$ is a constant of the motion for $X$. Moreover, if $C=C(v_1,\ldots,v_r)$, where $v_1,\ldots v_r$ is a basis of linear coordinates on $\mathfrak{W}^*$, then 
    \begin{equation}\label{Eq:ExCas}
C\left(\sum_{a=1}^kh_1(x_{(a)}),\ldots, \sum_{a=1}^kh_r(x_{(a)})\right)\,, \qquad J^*v_i=h_i\,,\qquad i=1,\ldots,r\,,
\end{equation} is a constant of the motion of $X^{[k]}$.
\end{proposition}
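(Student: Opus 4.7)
The plan is to obtain the first assertion as an immediate consequence of the Poisson morphism property of $J^\ast$ proved in the earlier proposition, and then to bootstrap the coalgebra identity \eqref{Eq:ExCas} from the first assertion by transporting everything to the diagonal prolongation via Proposition \ref{Pro:ProlJL} and its corollaries.

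First I would argue as follows. Since $J$ is a good momentum map, the comomentum $\xi\in \mathfrak{W}\mapsto J_\xi\in \Cinfty_g(M)$ takes values in good Hamiltonian functions, hence $E h_i = 0$ for $i = 1,\ldots,r$. The earlier proposition gives that $J^\ast:\Cinfty(\mathfrak{W}^\ast)\to \Cinfty_g(M)$ is a Poisson algebra morphism relative to the Kirillov--Kostant--Souriau bracket on $\mathfrak{W}^\ast$ and the restriction of the Jacobi bracket to $\Cinfty_g(M)$. If $C$ is a Casimir function on $\mathfrak{W}^\ast$, then $\{C,v_i\}_{\mathfrak{W}^\ast} = 0$ for every linear coordinate $v_i$, and applying $J^\ast$ yields $\{J^\ast C,h_i\} = 0$ for every $i$. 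Since $\{h_t\}_{t\in\R}\subset \mathfrak{W} = \langle h_1,\ldots,h_r\rangle$, linearity of the bracket in its first slot forces $\{J^\ast C,h_t\} = 0$ for every $t\in\R$. Using the identity $\{f,g\} = X_f g - g E f$ valid for any Jacobi manifold, together with $E h_t = 0$, one gets
$$
X_t(J^\ast C) = X_{h_t}(J^\ast C) = \{h_t, J^\ast C\} + (J^\ast C)\,E h_t = -\{J^\ast C,h_t\} = 0,
$$
so $J^\ast C$ is a constant of motion of $X$.

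Next I would derive \eqref{Eq:ExCas} by applying this first assertion to the diagonal prolongation. By Proposition \ref{Pro:ProlJL} and Corollaries \ref{Cor:LieAlgJL}--\ref{Cor:LieAlgJL2}, $(M^k,\Lambda^{[k]},E^{[k]},X^{[k]})$ is a Jacobi--Lie system with Jacobi--Lie Hamiltonian $h^{[k]}$, and the diagonal prolongations $h_1^{[k]},\ldots,h_r^{[k]}$ span a Lie subalgebra of $\Cinfty(M^k)$ isomorphic to $\mathfrak{W}$. Define $J^{[k]}\colon M^k\to \mathfrak{W}^\ast$ by $\langle J^{[k]}(x_{(1)},\ldots,x_{(k)}),\xi\rangle = \sum_{a=1}^k\langle J(x_{(a)}),\xi\rangle$, so that $J^{[k]\ast}v_i = h_i^{[k]} = \sum_{a=1}^k h_i(x_{(a)})$. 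This is the momentum map associated with the diagonal action of $G$ on $M^k$, and it is still good since $E^{[k]} h_i^{[k]} = \sum_{a=1}^k (E h_i)(x_{(a)}) = 0$. Applying the first assertion to $(M^k,\Lambda^{[k]},E^{[k]},X^{[k]})$ with momentum map $J^{[k]}$ shows that
$$
J^{[k]\ast}C = C\!\left(\sum_{a=1}^k h_1(x_{(a)}),\ldots,\sum_{a=1}^k h_r(x_{(a)})\right)
$$
is a constant of motion of $X^{[k]}$, which is precisely \eqref{Eq:ExCas}.

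The main technical hurdle will be checking that the Poisson morphism property of the earlier proposition and the good-momentum-map hypothesis transfer verbatim to the prolongation: one must verify both that the comomentum of $J^{[k]}$ lands in $\Cinfty_g(M^k)$ relative to $(M^k,\Lambda^{[k]},E^{[k]})$, and that $J^{[k]\ast}$ is still a Poisson morphism on good Hamiltonians. Both points are disposed of by Proposition \ref{Pro:ProlJL}(c)--(d) and Corollary \ref{Cor:LieAlgJL}, after which the rest of the argument reduces to the Leibniz rule for the Jacobi bracket and to the identity $\{f,g\} = X_f g - g E f$ already used in the non-prolonged case.
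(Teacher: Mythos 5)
Your proof is correct and follows exactly the route the paper intends: the paper states this proposition without proof, presenting it as an immediate consequence of the preceding discussion (the Poisson-morphism property of $J^\ast$ on good Hamiltonian functions, the identity $\{f,g\}=X_fg-gEf$, and the prolongation results of Proposition \ref{Pro:ProlJL} and its corollaries) and as an adaptation of a cited result of Carinena et al. Your write-up simply supplies the details the paper omits, including the useful observation that the second assertion is the first one applied to the diagonal momentum map $J^{[k]}$ on $(M^k,\Lambda^{[k]},E^{[k]})$.
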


The coalgebra method takes its name from the fact that it analyses the use of Poisson coalgebras and a so-called {\it coproduct} to obtain superposition rules. In fact, the coproduct is responsible for the form of \eqref{Eq:ExCas}.

Finally, let us provide an example of the coalgebra method for contact Lie systems. Let us consider the Lie group $\SL(2,\mathbb{R})$ of $2\times 2$ matrices with determinant one and real entries, i.e.
$$
    \SL(2,\R) = \left\{\begin{pmatrix}\alpha&\beta\\\gamma&\delta\end{pmatrix}\ \bigg\vert\  \alpha\delta - \beta\gamma = 1\right\}\,,
$$
and the automorphic Lie system
\begin{equation}\label{Eq:AusSL2}
    \frac{\d g}{\d t}=\sum_{\alpha = 1}^3b_\alpha(t)X_\alpha^R(g)\,,\qquad \forall t\in \mathbb{R}\,,\quad\forall g\in \SL(2,\R)\,,
\end{equation}
where $b_1(t),b_2(t),b_3(t)$ are arbitrary $t$-dependent functions. Observe that $\alpha,\beta,\gamma$ become a coordinate system of $\SL(2,\R)$ close to the identity. The Lie algebra of right-invariant vector fields on $\SL(2,\R)$ is spanned by the vector fields
$$
    X^R_1=\alpha\frac{\partial}{\partial\alpha}+\beta\frac{\partial}{\partial \beta}-\gamma\frac{\partial}{\partial \gamma}\,,\qquad X^R_2=\gamma\frac{\partial}{\partial \alpha}+\frac{1+\beta\gamma}{\alpha}\frac{\partial}{\partial \beta}\,,\qquad X^R_3=\alpha\frac{\partial}{\partial \gamma}\,,
$$
and their commutation relations are
$$ [X_1^R,X_2^R] = -2X_2^R\,,\qquad [X_2^R,X_3^R] = -X_1^R\,,\qquad [X_1^R,X_3^R] = 2X_3^R\,. $$
Meanwhile, the left-invariant vector fields are spanned by
$$
    X^L_1=\alpha\frac{\partial}{\partial\alpha}-\beta\frac{\partial}{\partial \beta}+\gamma\frac{\partial}{\partial \gamma}\,,\qquad X^L_2=\alpha\frac{\partial}{\partial \beta}\,,\qquad X^L_3=\beta\frac{\partial}{\partial \alpha}+\frac{1+\beta\gamma}{\alpha}\frac{\partial}{\partial \gamma}\,.
$$
Moreover,
$$
    [X_1^L,X_2^L] = 2X_2^L\,,\qquad [X_2^L,X_3^L]=X_1^L\,,\qquad [X_1^L,X_3^L]=-2X_3^L\,.
$$
Consider the set of 
the left-invariant differential forms on $\SL(2,\mathbb{R})$ given by 
$$
    \eta^L_1 = \frac{1+\beta\gamma}{\alpha}\d\alpha-\beta \d\gamma\,,\qquad \eta^L_2=\frac{\beta(1 + \beta\gamma)}{\alpha^2} \d\alpha+\frac1\alpha \d\beta-\frac{\beta^2}{\alpha}\d\gamma\,,\qquad \eta_3^L=-\gamma \d\alpha+\alpha \d\gamma\,,
$$
which become a basis of the space of left-invariant differential forms on $\SL(2,\mathbb{R})$. It is relevant that
$$
 \d\eta^L_1=\eta^L_2\wedge \eta^L_3\Rightarrow   \d\eta_1^L\wedge \eta_1^L\neq 0\,.
$$
Hence, $\eta_1$ becomes a left-invariant contact form on $\SL(2,\mathbb{R})$ with a Reeb vector field $X_1^L$. Therefore, the vector fields $X_1^R,X_2^R,X_3^R$ admit the Hamiltonian functions
$$
    h_1=-\eta_1^L(X_1^R) = -1 - 2\beta\gamma\,,\qquad h_2 = -\eta_1^L(X_2^R)=-\frac{\gamma}{\alpha}(1+\beta\gamma)\,,\qquad h_3=-\eta_1^L(X_3^R)=\alpha\beta\,.
$$
These Hamiltonian functions satisfy the commutation relations
$$
    \{h_1,h_2\}=-2h_2\,,\qquad \{h_1,h_3\}=2h_3\,,\qquad \{h_2,h_3\}=-h_1\,.
$$
Hence, all Hamiltonian functions for the right-invariant vector fields relative to the contact form $\eta^L_1$ are first-integrals of the Reeb vector field of $\eta^L_1$, namely $X_1^L$. This can be used to obtain the superposition rule for Lie systems on $\SL(2,\mathbb{R})$. Let us explain this. Let $\{e^1,e^2,e^3\}$ be a basis of $\mathfrak{sl}_2^*$ dual to $\{X_1^L(e),X_2^L(e),X_3^L(e)\}$. Given the action of $G$ on itself on the left, whose fundamental vector fields are given by the linear space of right-invariant vector fields on $\SL(2,\mathbb{R})$, one may define an associated momentum map
$$
    J:A\in \SL(2,\mathbb{R})\longmapsto -(1+2\beta\gamma) e^1-\frac{\gamma}{\alpha}(1+\beta\gamma)e^2+\alpha\beta e^3\in\mathfrak{sl}_2^*\,.
$$
This allows us to obtain a superposition rule using the coalgebra method. The theory of Lie systems states that, in order to determine a superposition rule for a Lie system, one has to determine the smallest $k\in\N$ so that the vector fields $[X_1^R]^{[k]},[X_2^R]^{[k]},[X_3^R]^{[k]}$ will be linearly independent at a generic point (see \cite{LS2020}). Since $X_1^R,X_2^R,X_3^R$ are linearly independent at every point of $\SL(2,\mathbb{R})$, it follows that $m=1$. Hence, a superposition rule for \eqref{Eq:AusSL2} can be obtained by deriving three common first-integrals for $[X_1^R]^{[m+1]},[X_2^R]^{[k+1]},[X_3^R]^{[k+1]}$, let us say $I_1,I_2,I_3$, satisfying
$$
\frac{\partial(I_1,I_2,I_3)}{\partial(\alpha,\beta,\gamma)}\neq0\,.
$$
A good Hamiltonian function that Poisson commutes with $h_1,h_2,h_3$ is given by
$$
    C_1 = 4h_2(\alpha,\beta,\gamma)h_3(\alpha,\beta,\gamma) + h_1(\alpha,\beta,\gamma)^2\in\Cinfty(\SL(2,\R))\,,
$$
where $\alpha,\beta,\gamma$ are assumed to be functions on $\SL(2,\R)$. Similarly,
$$
\begin{gathered}
    h_1^{[2]} = -(1 + 2\beta\gamma) - (1 + 2\beta'\gamma')\,,\qquad h_2^{[2]} = -\frac{\gamma}{\alpha}(1+\beta\gamma)-\frac{\gamma'}{\alpha'}(1+\beta'\gamma')\,,\\
    h_3^{[2]} = \alpha\beta + \alpha'\beta'
\end{gathered}
$$
become the Hamiltonian functions of
$$
\begin{gathered}
    [X_1^R]^{[2]} = \alpha\frac{\partial}{\partial\alpha}-\beta\frac{\partial}{\partial \beta}+\gamma\frac{\partial}{\partial \gamma} + \alpha'\frac{\partial}{\partial\alpha'}-\beta'\frac{\partial}{\partial \beta'}+\gamma'\frac{\partial}{\partial \gamma'}\,,\qquad
    [X^R_2]^{[2]} = \alpha\frac{\partial}{\partial \beta} + \alpha'\frac{\partial}{\partial \beta'}\,,\\
    [X_3^R]^{[2]} = \beta\frac{\partial}{\partial \alpha}+\frac{1+\beta\gamma}{\alpha}\frac{\partial}{\partial \gamma} + \beta'\frac{\partial}{\partial \alpha'}+\frac{1+\beta'\gamma'}{\alpha'}\frac{\partial}{\partial \gamma'}\,.
\end{gathered}
$$
Hence, a common first-integral for $[X_1^R]^{[2]},[X_2^R]^{[2]},[X_3^R]^{[2]}$ is given by
$$
    I_1 = 4h_2^{[2]}h_3^{[2]} - (h_1^{[2]})^2 = -\frac{4 (\beta \gamma \alpha' + \alpha'-\alpha \beta \gamma') (\gamma \alpha' \beta'-\alpha (\beta' \gamma'+1))}{\alpha \alpha'}\,.
$$
Note that this is indeed an application of \eqref{Eq:ExCas} to our problem.

To obtain the remaining two first-integrals for $[X_1^R]^{[2]},[X_2^R]^{[2]},[X_3^R]^{[2]}$, we derive
$$
    I_2 = [X^L_2]^{[2]} I_1 = -\frac{4(\gamma\alpha' - \alpha\gamma') \big((1 + \beta \gamma) \alpha'^2 - 
  \alpha (\alpha - \gamma \alpha' \beta' + \beta \alpha' \gamma' + \alpha \beta' \gamma')\big) }{\alpha\alpha'}\,,
$$
\begin{align*}
    I_3 &= [X^L_3]^{[2]} I_1 \\
    &= -\frac{4 (\alpha \beta (\beta' \gamma'+1)-(\beta \gamma+1) \alpha' \beta') \left(\alpha (\beta' \gamma' \alpha+\alpha-\gamma \alpha' \beta'+\beta \alpha' \gamma')-(\beta \gamma+1) \alpha'^2\right)}{\alpha^2 \alpha'^2}\,.
\end{align*}
Since the determinant of
$$
    \frac{\partial (I_1,I_2,I_3)}{\partial (\alpha,\beta,\gamma)} = \begin{pmatrix}
        \dparder{I_1}{\alpha} & \dparder{I_1}{\beta} & \dparder{I_1}{\gamma} \\\\
        \dparder{I_2}{\alpha} & \dparder{I_2}{\beta} & \dparder{I_2}{\gamma} \\\\
        \dparder{I_3}{\alpha} & \dparder{I_3}{\beta} & \dparder{I_3}{\gamma}
    \end{pmatrix}
$$
is different from zero at a generic point in $\SL(2,\mathbb{R})\times \SL(2,\mathbb{R})$, the system of algebraic equations
\begin{equation}\label{Eq:SupAlg}
        I_1 = \lambda_1\,,\qquad
        I_2 = \lambda_2\,,\qquad
        I_3 = \lambda_3\,,
\end{equation}
allows us to obtain $\alpha,\beta,\gamma$ in terms of $\alpha',\beta',\gamma'$ and $\lambda_1,\lambda_2,\lambda_3$, which gives rise to a superposition rule. Its expression may be complicated, but can be derived using any program of mathematical manipulation. 

Anyway, there is a simpler method to obtain the superposition rule for \eqref{Eq:AusSL2}. Since it is an automorphic Lie system with a Vessiot--Guldberg Lie algebra of right-invariant vector fields, it is known that a superposition rule is given by the multiplication on the right
$$
\Phi:(g,h)\in G\times G\mapsto gh\in G\,.
$$
Since the vector fields span a distribution of dimension three on $\SL(2,\R)\times \SL(2,\R)$, which is three-codimensional, it was proved in \cite{CGM07} that the superposition rule must be unique. Hence, this superposition rule must be the one obtained by solving the algebraic system \eqref{Eq:SupAlg}.
\section{Conclusions and further research}

In this paper, we have introduced the notion of contact Lie system: systems of first-order differential equations describing the integral curves of a $t$-dependent vector field taking values in a finite-dimensional Lie algebra of Hamiltonian vector fields relative to a contact manifold. In particular, we have studied families of conservative contact Lie systems, i.e. being invariant relative to the flow of the Reeb vector field. We have also developed Liouville theorems, a contact reduction and a Gromov non-squeezing theorems for certain classes of contact Lie systems. We have also classified locally transitive contact Lie systems on three-dimensional manifolds. In order to illustrate these results, we have worked out several examples, such as the Brockett control system, the Schwarz equation, an automorphic Lie system on $\SL(2,\R)$, and a quantum contact Lie system.

The reduction procedures developed by Willet \cite{Wil2002} and Albert \cite{Alb1989} and the one introduced in this paper open the door to develop an energy-momentum method \cite{Mars1988} for contact Lie systems, both conservative and non-conservative. This will allow us to study the relative equilibria points of these systems. We also believe that a new type of contact reduction can be achieved by interpreting contact forms in a new manner. This is currently being developed and, hopefully, will be published in a future work.  

Recently, the contact formulation for non-conservative mechanical systems has been generalised via the so-called $k$-contact \cite{Gas2020,Gas2021,Gra2022}, $k$-cocontact \cite{Ri-2022}, and multicontact \cite{LGMRR-2022,Vi-2015} formulations. It would be interesting to study the Lie systems whose Vessiot--Guldberg Lie algebra consists of Hamiltonian vector fields relative to these structures. It would also be interesting to classify contact Lie systems possessing a transitive primitive Vessiot--Guldberg Lie algebra \cite{Shn-1984,Shn-1984b}.

\section*{Acknowledgments}

X. Rivas acknowledges financial support from the Ministerio de Ciencia, Innovaci\'on y Universidades (Spain), projects PGC2018-098265-B-C33 and D2021-125515NB-21.
J. de Lucas and X. Rivas acknowledge partial financial support from the Novee Idee 2B-POB II project PSP: 501-D111-20-2004310 funded by the ``Inicjatywa Doskonałości - Uczelnia Badawcza'' (IDUB) program.

\bibliographystyle{WAW-BCN}
\bibliography{bibliografia_old.bib}


\end{document}